\algrenewcommand\algorithmicdo{}
\theoremstyle{plain}
\newtheorem{theorem}{Theorem}[section]
\newtheorem{lemma}[theorem]{Lemma}
\theoremstyle{definition}
\newtheorem{definition}[theorem]{Definition}
\newtheorem{assumption}[theorem]{Assumption}
\theoremstyle{remark}
\title{Fast Partitioned Learned Bloom Filter}
\author{
  Atsuki Sato \quad Yusuke Matsui \\
    The University of Tokyo\\
    Tokyo, Japan \\
  \texttt{a\_sato@hal.t.u-tokyo.ac.jp \quad matsui@hal.t.u-tokyo.ac.jp} \\
}
\begin{document}

\maketitle

\begin{abstract}
A Bloom filter is a memory-efficient data structure for approximate membership queries used in numerous fields of computer science.
Recently, learned Bloom filters that achieve better memory efficiency using machine learning models have attracted attention.
One such filter, the partitioned learned Bloom filter (PLBF), achieves excellent memory efficiency.
However, PLBF requires a $\mathcal{O}(N^3k)$ time complexity to construct the data structure, where $N$ and $k$ are the hyperparameters of PLBF.
One can improve memory efficiency by increasing $N$, but the construction time becomes extremely long.
Thus, we propose two methods that can reduce the construction time while maintaining the memory efficiency of PLBF.
First, we propose fast PLBF, which can construct the same data structure as PLBF with a smaller time complexity $\mathcal{O}(N^2k)$.
Second, we propose fast PLBF++, which can construct the data structure with even smaller time complexity $\mathcal{O}(Nk\log N + Nk^2)$.
Fast PLBF++ does not necessarily construct the same data structure as PLBF.
Still, it is almost as memory efficient as PLBF, and it is proved that fast PLBF++ has the same data structure as PLBF when the distribution satisfies a certain constraint.
Our experimental results from real-world datasets show that (i) fast PLBF and fast PLBF++ can construct the data structure up to 233 and 761 times faster than PLBF, (ii) fast PLBF can achieve the same memory efficiency as PLBF, and (iii) fast PLBF++ can achieve almost the same memory efficiency as PLBF. The codes are available at \url{https://github.com/atsukisato/FastPLBF}.
\end{abstract}

\section{Introduction}
\label{sec: Introduction}

Membership query is a problem of determining whether a given query $q$ is contained within a set $\mathcal{S}$.
Membership query is widely used in numerous areas, including networks and databases.
One can correctly answer the membership query by keeping the set $\mathcal{S}$ and checking whether it contains $q$.
However, this approach is memory intensive because we need to maintain $\mathcal{S}$.

A Bloom filter \cite{bloom1970space} is a memory-efficient data structure that answers approximate membership queries.
A Bloom filter uses hash functions to compress the set into a bitstring and then answers the queries using the bitstring.
When a Bloom filter answers $q\in\mathcal{S}$, it can be wrong (false positives can arise); when it answers $q\notin\mathcal{S}$, it is always correct (no false negatives arise).
A Bloom filter has been incorporated into numerous systems, such as networks \cite{broder2004network, tarkoma2011theory, geravand2013bloom}, databases \cite{chang2008bigtable, goodrich2011invertible, lu2012bloomstore}, and cryptocurrencies \cite{hearn37bips, han2021efficient}.

A traditional Bloom filter does not care about the distribution of the set or the queries.
Therefore, even if the distribution has a characteristic structure, a Bloom filter cannot take advantage of it.
Kraska et al. proposed learned Bloom filters (LBFs), which utilize the distribution structure using a machine learning model \cite{kraska2018case}. 
LBF achieves a better trade-off between memory usage and false positive rate (FPR) than the original Bloom filter.
Since then, several studies have been conducted on various LBFs \cite{dai2020adaptive, mitzenmacher2018model, rae2019meta}.

Partitioned learned Bloom filter (PLBF) \cite{vaidya2021partitioned} is a variant of the LBF.
PLBF effectively uses the distribution and is currently one of the most memory-efficient LBFs.
To construct PLBF, a machine learning model is trained to predict whether the input is included in the set.
For a given element $x$, the machine learning model outputs a score $s(x) \in [0,1]$, indicating the probability that $\mathcal{S}$ includes $x$.
PLBF divides the score space $[0,1]$ into $N$ equal segments and then appropriately clusters the $N$ segments into $k(<N)$ regions using dynamic programming (DP).
Then, $k$ backup Bloom filters with different FPRs are assigned to each region.
We can obtain a better (closer to optimal) data structure with a larger $N$.
However, PLBF builds DP tables $\mathcal{O}(N)$ times, and each DP table requires $\mathcal{O}(N^2k)$ time complexity to build, so the total time complexity amounts to $\mathcal{O}(N^3k)$.
Therefore, the construction time increases rapidly as $N$ increases.

We propose fast PLBF, which is constructed much faster than PLBF.
Fast PLBF constructs the same data structure as PLBF but with $\mathcal{O}(N^2k)$ time complexity by omitting the redundant construction of DP tables.
Furthermore, we propose fast PLBF++, which can be constructed faster than fast PLBF, with a time complexity of $\mathcal{O}(Nk\log N + Nk^2)$.
Fast PLBF++ accelerates DP table construction by taking advantage of a characteristic that DP tables often have.
We proved that fast PLBF++ constructs the same data structure as PLBF when the probability of $x \in \mathcal{S}$ and the score $s(x)$ are ideally well correlated.
Our contributions can be summarized as follows.

\begin{itemize}
\item We propose fast PLBF, which can construct the same data structure as PLBF with less time complexity.
\item We propose fast PLBF++, which can construct the data structure with even less time complexity than fast PLBF.
Fast PLBF++ constructs the same data structure as PLBF when the probability of $x \in \mathcal{S}$ and the score $s(x)$ are ideally well correlated.
\item Experimental results show that fast PLBF can construct the data structure up to 233 times faster than PLBF and achieves the same memory efficiency as PLBF.
\item Experimental results show that fast PLBF++ can construct the data structure up to 761 times faster than PLBF and achieves almost the same memory efficiency as PLBF.
\end{itemize}

\section{Preliminaries: PLBF}
\label{sec: Preliminaries}

\begin{figure}
    \begin{minipage}[t]{0.37\columnwidth}
        \centering
        \includegraphics[width=\columnwidth]{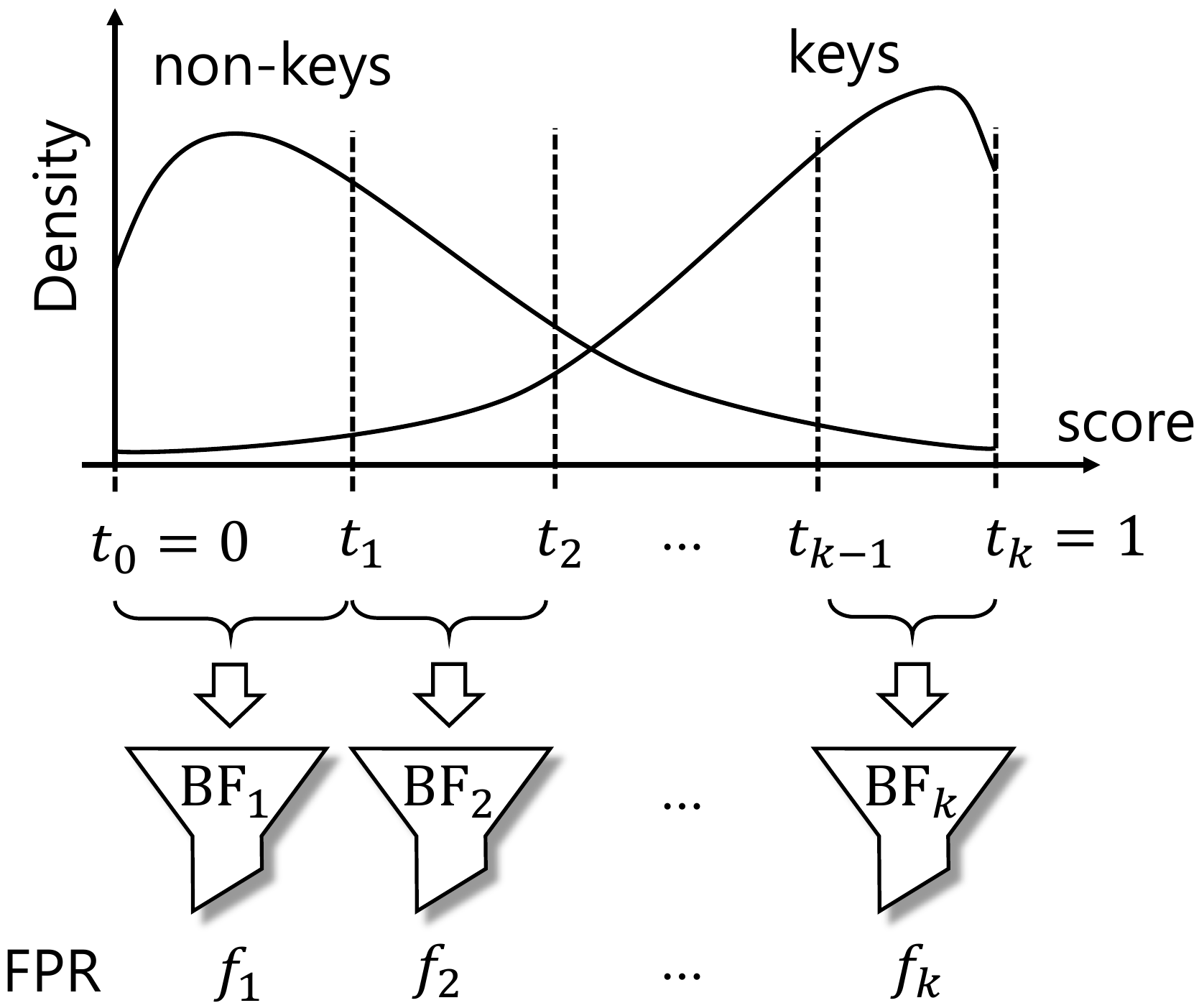}
        \caption{PLBF partitions the score space into $k$ regions and assigns backup Bloom filters with different FPRs to each region.}
        \label{fig: PLBF_idea}
    \end{minipage}
    \hfill
    \begin{minipage}[t]{0.57\columnwidth}
        \centering
        \includegraphics[width=\columnwidth]{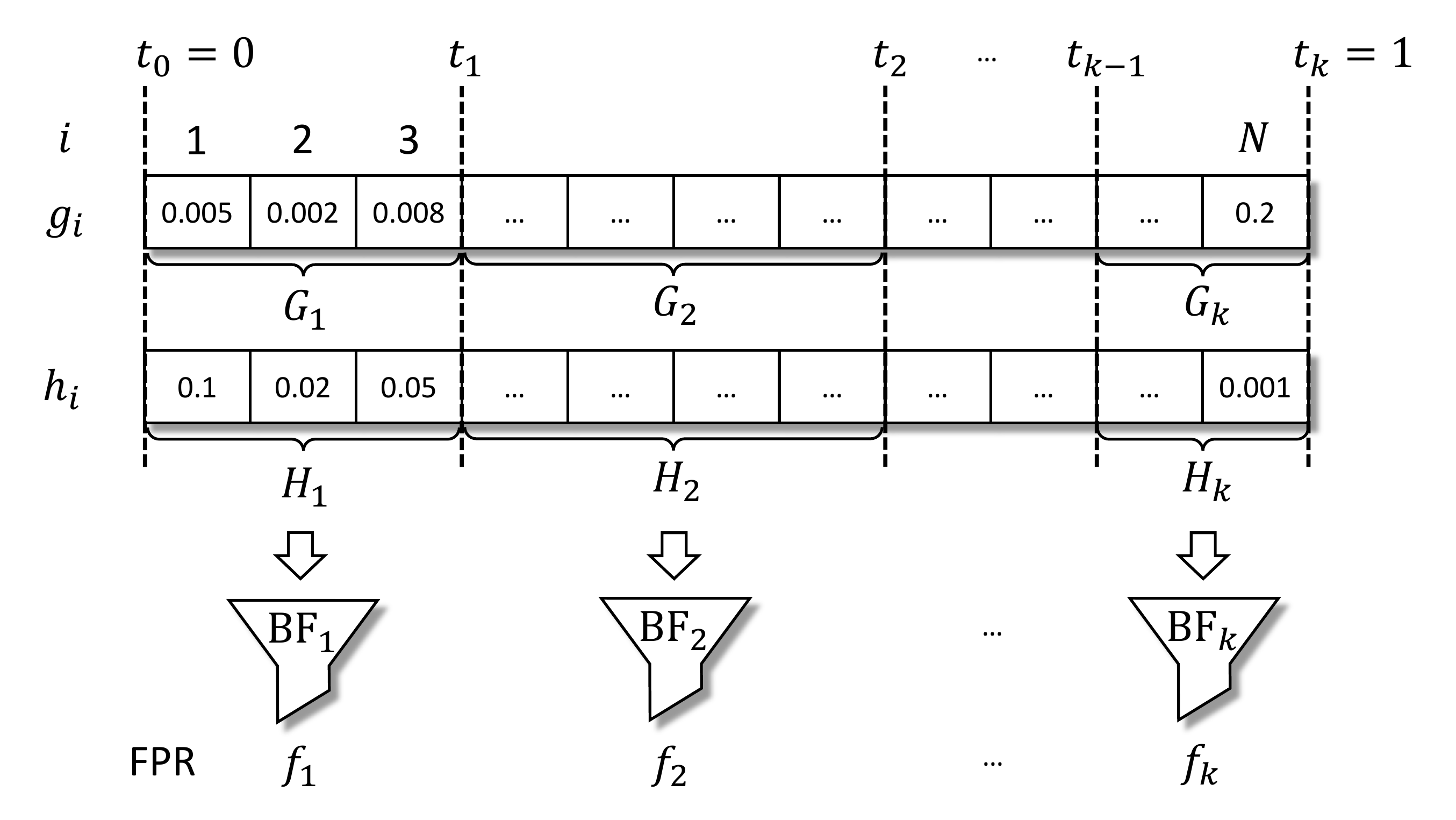}
        \caption{PLBF divides the score space into $N$ segments and then clusters the $N$ segments into $k$ regions. PLBF uses dynamic programming to find the optimal way to cluster segments into regions.}
        \label{fig: PLBF_division_score_space}
    \end{minipage}
\end{figure}

First, we define terms to describe PLBF \cite{vaidya2021partitioned}.
Let $\mathcal{S}$ be a set of elements for which the Bloom filter is to be built, and let $\mathcal{Q}$ be the set of elements not included in $\mathcal{S}$ that is used when constructing PLBF ($\mathcal{S} \cap \mathcal{Q} = \varnothing$).
The elements included in $\mathcal{S}$ are called keys, and those not included are called non-keys.
To build PLBF, a machine learning model is trained to predict whether a given element $x$ is included in the set $\mathcal{S}$ or $\mathcal{Q}$.
For a given element $x$, the machine learning model outputs a score $s(x) \in [0,1]$.
The score $s(x)$ indicates ``how likely is $x$ to be included in the set $\mathcal{S}$.''

Next, we explain the design of PLBF.
PLBF partitions the score space $[0,1]$ into $k$ regions and assigns backup Bloom filters with different FPRs to each region (\cref{fig: PLBF_idea}).
Given a target overall FPR, $F \in (0,1)$, we optimize $\bm{t} \in \mathbb{R}^{k+1}$ and $\bm{f} \in \mathbb{R}^{k}$ to minimize the total memory usage.
Here, $\bm{t}$ is a vector of thresholds for partitioning the score space into $k$ regions, and $\bm{f}$ is a vector of FPRs for each region, satisfying $t_0 = 0 < t_1 < \dots < t_k = 1$ and $0 < f_i \leq 1 ~ (i=1 \dots k)$.

Next, we explain how PLBF finds the optimal $\bm{t}$ and $\bm{f}$.
PLBF divides the score space $[0,1]$ into $N(>k)$ segments and then finds the optimal $\bm{t}$ and $\bm{f}$ using DP.
Deciding how to cluster $N$ segments into $k$ consecutive regions corresponds to determining the threshold $\bm{t}$ (\cref{fig: PLBF_division_score_space}).
We denote the probabilities that the key and non-key scores are contained in the $i$-th \textbf{segment} by $g_i$ and $h_i$, respectively.
After we cluster the segments (i.e., determine the thresholds $\bm{t}$), we denote the probabilities that the key and non-key scores are contained in the $i$-th \textbf{region} by $G_i$ and $H_i$, respectively (e.g., $g_1+g_2+g_3=G_1$ in \cref{fig: PLBF_division_score_space}).

We can find the thresholds $\bm{t}$ that minimize memory usage by solving the following problem for each $j=k \dots N$ (see the appendix for details); 
we find a way to cluster the $1$st to $(j-1)$-th segments into $k-1$ regions while maximizing
\begin{equation}
     \sum_{i=1}^{k-1} G_i \log_{2}\left(\frac{G_i}{H_i}\right).
\end{equation}

PLBF solves this problem by building a $j \times k$ DP table $\mathrm{DP}_\mathrm{KL}^{j}[p][q]$ ($p=0 \dots j-1$ and $q=0 \dots k-1$) for each $j$.
$\mathrm{DP}_\mathrm{KL}^{j}[p][q]$ denotes the maximum value of $\sum_{i=1}^{q} {G_i}\log_{2} \left(\frac{G_i}{H_i}\right)$ one can get when you cluster the $1$st to $p$-th segments into $q$ regions.
To construct PLBF, one must find a clustering method that achieves $\mathrm{DP}_\mathrm{KL}^{j}[j-1][k-1]$.
$\mathrm{DP}_\mathrm{KL}^{j}$ can be computed recursively as follows:
\begin{equation}
\label{equ: DPtrans}
    \mathrm{DP}_\mathrm{KL}^{j}[p][q] = 
    \begin{dcases}
    0 & (p=0 \land q=0) \\
    - \infty & ((p=0 \land q>0) \lor (p>0 \land q=0)) \\
    \max_{i=1 \dots p} \left(
        \mathrm{DP}_\mathrm{KL}^{j}[i-1][q-1] + 
        d_\mathrm{KL}(i, p)
    \right) & (\rm{else}),
    \end{dcases}
\end{equation}
where the function $d_\mathrm{KL}(i_l,i_r)$ is the following function defined for integers $i_l$ and $i_r$ satisfying $1 \leq i_l \leq i_r \leq N$:
\begin{equation}
    d_\mathrm{KL}(i_l,i_r) = \left(\sum_{i=i_l}^{i_r} g_i\right) \log_{2} \left(
    \frac{\sum_{i=i_l}^{i_r} g_i}{\sum_{i=i_l}^{i_r} h_i}
    \right).
\end{equation}
The time complexity to construct this DP table is $\mathcal{O}(j^2 k)$.
Then, by tracing the recorded transitions backward from $\mathrm{DP}_\mathrm{KL}^{j}[j-1][k-1]$, we obtain the best clustering with a time complexity of $\mathcal{O}(k)$.
As the DP table is constructed for each $j=k \dots N$, the overall complexity is $\mathcal{O}(N^3k)$.
The pseudo-code for PLBF construction is provided in the appendix.

We can divide the score space more finely with a larger $N$ and thus obtain a near-optimal $\bm{t}$.
However, the time complexity increases rapidly with increasing $N$.

\section{Fast PLBF}
\label{sec: FastPLBF}

We propose fast PLBF, which constructs the same data structure as PLBF more quickly than PLBF by omitting the redundant construction of DP tables.
Fast PLBF uses the same design as PLBF and finds the best clustering (i.e., $\bm{t}$) and FPRs (i.e., $\bm{f}$) to minimize memory usage.

PLBF constructs a DP table for each $j=k \dots N$.
We found that this computation is redundant and that we can also use the last DP table $\mathrm{DP}_\mathrm{KL}^{N}$ for $j=k \dots N-1$.
This is because the maximum value of $\sum_{i=1}^{k-1} G_i \log_{2}\left(\frac{G_i}{H_i}\right)$ when clustering the $1$st to $(j-1)$-th segments into $k-1$ regions is equal to $\mathrm{DP}_\mathrm{KL}^{N}[j-1][k-1]$.
We can obtain the best clustering by tracing the transitions backward from $\mathrm{DP}_\mathrm{KL}^{N}[j-1][k-1]$.
The time complexity of tracing the transitions is $\mathcal{O}(k)$, which is faster than constructing the DP table.

This is a simple method, but it is a method that only becomes apparent after some organization on the optimization problem of PLBF.
PLBF solves $N-k+1$ optimization problems separately, i.e., for each of $j=k, \dots ,N$, the problem of ``finding the optimal $\bm{t}$ and $\bm{f}$ when the $k$-th region consists of the $j, \dots, N$th segments'' is solved separately.
Fast PLBF, on the other hand, solves the problems for $j=k, \dots ,N-1$ faster by reusing the computations in the problem for $j=N$.
The reorganization of the problem in the appendix makes it clear that this reuse does not change the answer.

The pseudo-code for fast PLBF construction is provided in the appendix.
The time complexity of building $\mathrm{DP}_\mathrm{KL}^{N}$ is $\mathcal{O}(N^2k)$, and the worst-case complexity of subsequent computations is $\mathcal{O}(Nk^2)$.
Because $N > k$, the total complexity is $\mathcal{O}(N^2k)$, which is faster than $\mathcal{O}(N^3k)$ for PLBF, although fast PLBF constructs the same data structure as PLBF.

Fast PLBF extends the usability of PLBF.
In any application of PLBF, fast PLBF can be used instead of PLBF because fast PLBF can be constructed quickly without losing the accuracy of PLBF.
Fast PLBF has applications in a wide range of computing areas \cite{feng2011efficient, chang2008bigtable}, and is significantly superior in applications where construction is frequently performed.
Fast PLBF also has the advantage of simplifying hyperparameter settings. 
When using PLBF, the hyperparameters $N$ and $k$ must be carefully determined, considering the trade-off between accuracy and construction speed.
With fast PLBF, on the other hand, it is easy to determine the appropriate hyperparameters because the construction time is short enough, even if $N$ and $k$ are set somewhat large.

\section{Fast PLBF++}
\label{sec: FastPLBF++}

\begin{figure}[t]
    \subfigure[]{
        \includegraphics[width=0.48\columnwidth]{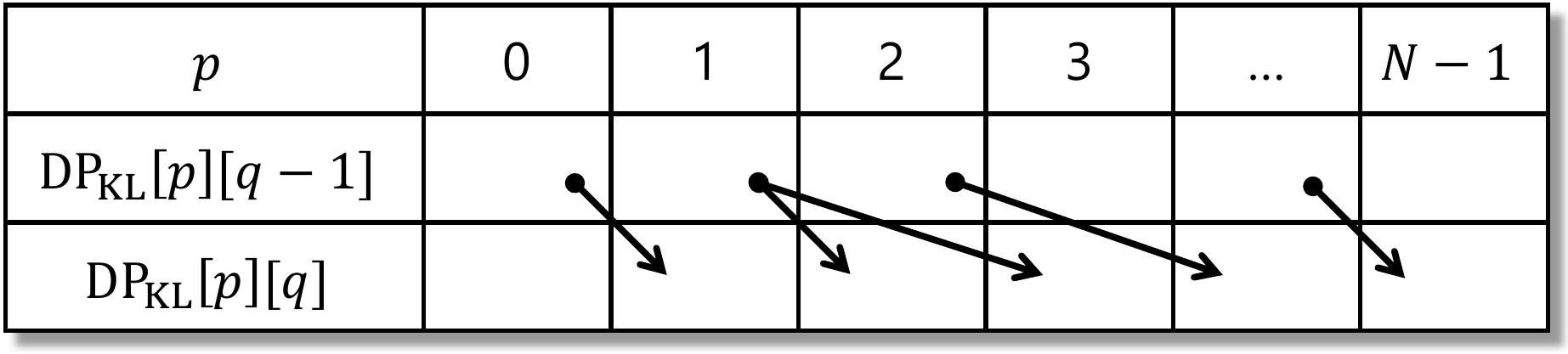}
        \label{fig: FastPLBF++_idea_no_cross}
    }
    \subfigure[]{
        \centering
        \includegraphics[width=0.48\columnwidth]{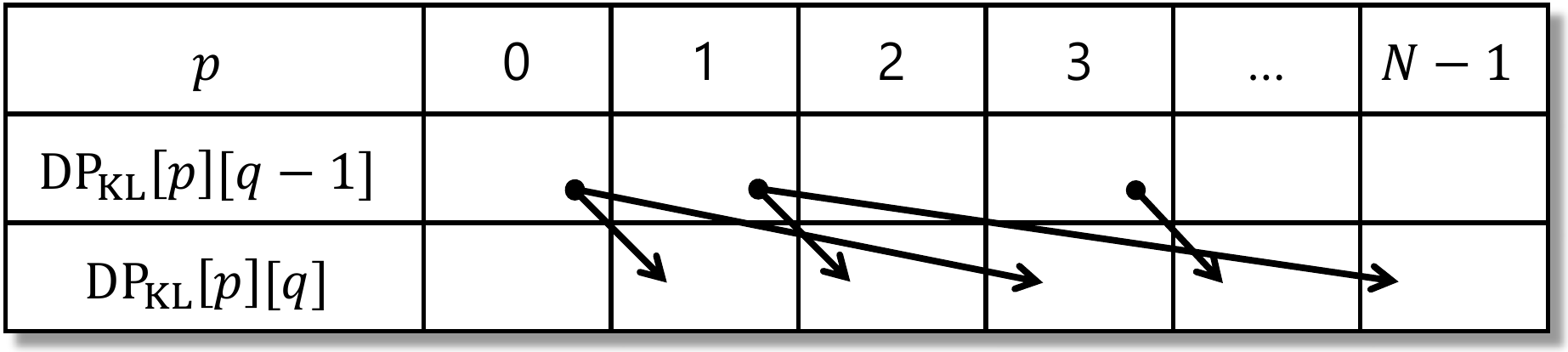}
        \label{fig: FastPLBF++_idea_cross}
    }
    \caption{The cases \subref{fig: FastPLBF++_idea_no_cross} where there is no ``crossing'' of the recorded transitions when computing $\mathrm{DP}_\mathrm{KL}[p][q] ~ (p=1 \dots N-1)$ from $\mathrm{DP}_\mathrm{KL}[p][q-1] ~ (p=0 \dots N-2)$ and \subref{fig: FastPLBF++_idea_cross} where there is. The arrows indicate which transition took the maximum value. Empirically, the probability of ``crossing'' is small.}
    \label{fig: FastPLBF++_idea}
\end{figure}

\begin{figure}
    \begin{minipage}[t]{0.57\columnwidth}
        \centering
        \includegraphics[width=\columnwidth]{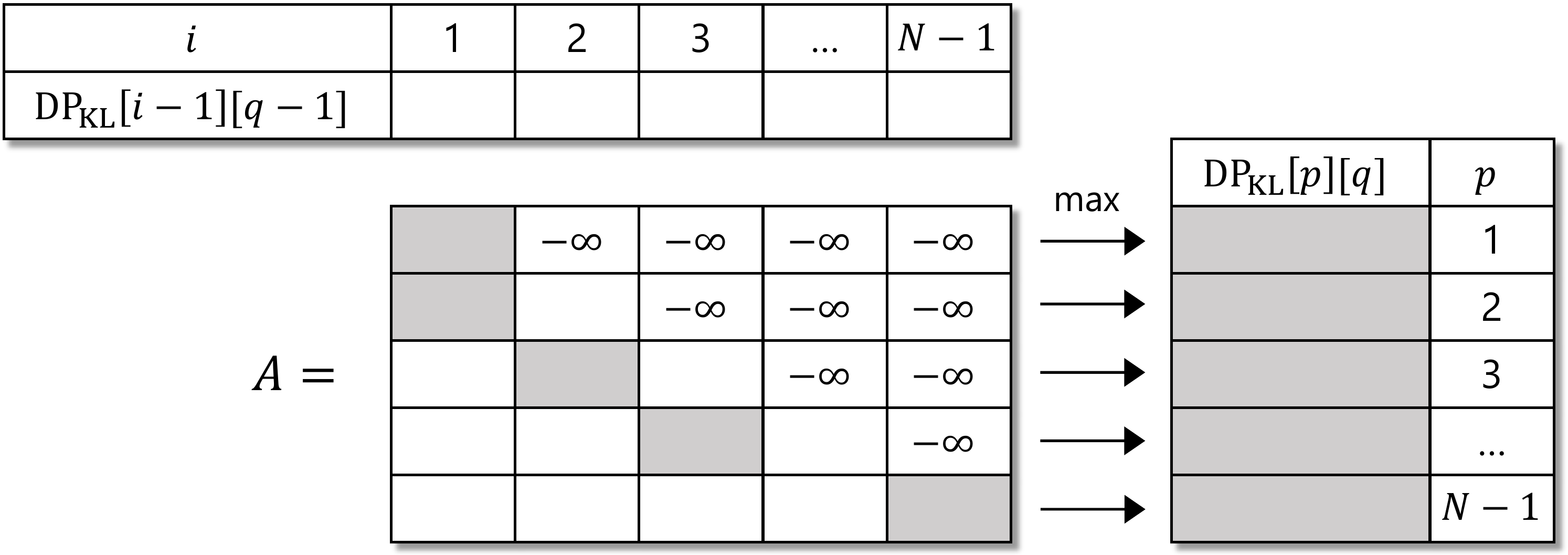}
        \caption{Computing $\mathrm{DP}_\mathrm{KL}[p][q] ~ (p=1 \dots N-1)$ from $\mathrm{DP}_\mathrm{KL}[p][q-1] ~ (p=0 \dots N-2)$ via the matrix $A$. The computation is the same as solving the matrix problem for matrix $A$. When the score distribution is \textit{ideal}, $A$ is a \textit{monotone matrix}.}
        \label{fig: matr_prob_PLBF}
    \end{minipage}
    \hfill
    \begin{minipage}[t]{0.37\columnwidth}
        \centering
        \includegraphics[width=\columnwidth]{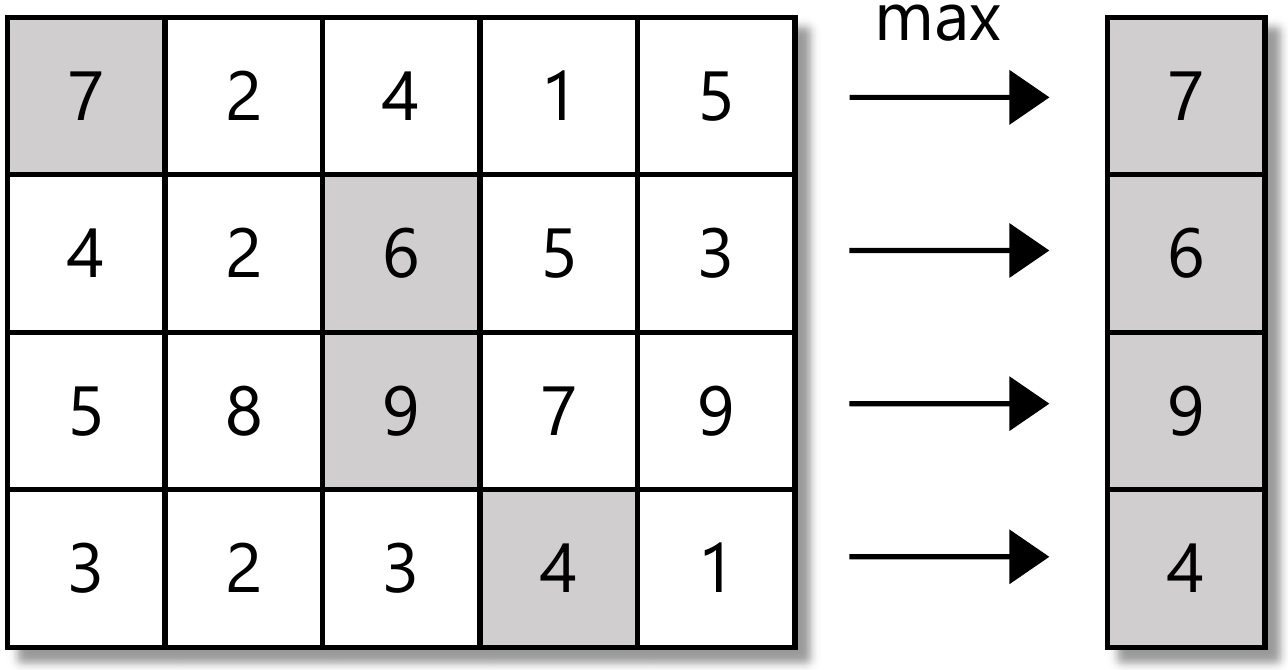}
        \caption{Example of a \textit{matrix problem} for a $4 \times 5$ \textit{monotone matrix}. The position of the maximum value in each row (filled in gray) is moved to the ``lower right''.}
        \label{fig: matr_prob_on_mono}
    \end{minipage}
\end{figure}

We propose fast PLBF++, which can be constructed even faster than fast PLBF.
Fast PLBF++ accelerates the construction of the DP table $\mathrm{DP}_\mathrm{KL}^{N}$ by taking advantage of a characteristic that DP tables often have.
When the transitions recorded in computing $\mathrm{DP}_\mathrm{KL}^{N}[p][q] ~ (p=1 \dots N-1)$ from $\mathrm{DP}_\mathrm{KL}^{N}[p][q-1] ~ (p=0 \dots N-2)$ are represented by arrows, as in \cref{fig: FastPLBF++_idea}, we find that the arrows rarely ``cross'' (at locations other than endpoints).
In other words, the transitions tend to have few intersections (\cref{fig: FastPLBF++_idea_no_cross}) rather than many (\cref{fig: FastPLBF++_idea_cross}).
Fast PLBF++ takes advantage of this characteristic to construct the DP table with less complexity $\mathcal{O}(Nk\log N)$, thereby reducing the total construction complexity to $\mathcal{O}(Nk\log N + N k^2)$.

First, we define the terms to describe fast PLBF++.
For simplicity, $\mathrm{DP}_\mathrm{KL}^{N}$ is denoted as $\mathrm{DP}_\mathrm{KL}$ in this section.
The $(N-1) \times (N-1)$ matrix $A$ is defined as follows:
\begin{equation}
\label{equ: ADif}
    A_{pi} = \begin{dcases}
        - \infty & (i = p+1, p+2, \dots, N-1) \\
        \mathrm{DP}_\mathrm{KL}[i-1][q-1] + 
        d_\mathrm{KL}(i, p) & (\rm{else}).
    \end{dcases}
\end{equation}
Then, from the definition of $\mathrm{DP}_\mathrm{KL}$, 
\begin{equation}
    \label{equ: DP_A}
    \mathrm{DP}_\mathrm{KL}[p][q] = \max_{i=1 \dots N-1} A_{pi}.
\end{equation}
The matrix $A$ represents the intermediate calculations involved in determining $\mathrm{DP}_\mathrm{KL}[p][q]$ from $\mathrm{DP}_\mathrm{KL}[p][q-1]$ (\cref{fig: matr_prob_PLBF}).

Following Aggarwal et al. \cite{aggarwal1987geometric}, we define the \textit{monotone matrix} and \textit{matrix problem} as follows.
\begin{definition}
\label{def:MonotoneMatrix}
Let $B$ be an $n \times m$ real matrix, and we define a function $J:\{1 \dots n \}\rightarrow\{1 \dots m\}$, where $J(i)$ is the $j \in \{1 \dots m\}$ such that $B_{ij}$ is the maximum value of the $i$-th row of $B$. If there is more than one such $j$, let $J(i)$ be the smallest.
A matrix $B$ is called a \textit{monotone matrix} if $J(i_1) \leq J(i_2)$ for any $i_1$ and $i_2$ that satisfy $1 \leq i_1 < i_2 \leq n$.
Finding the maximum value of each row of a matrix is called a \textit{matrix problem}.
\end{definition}
An example of a \textit{matrix problem} for a \textit{monotone matrix} is shown in \cref{fig: matr_prob_on_mono}.
Solving the \textit{matrix problem} for a general $n \times m$ matrix requires $\mathcal{O}(nm)$ time complexity because all matrix values must be checked.
Meanwhile, if the matrix is known to be a \textit{monotone matrix}, the \textit{matrix problem} for this matrix can be solved with a time complexity of $\mathcal{O}(n + m \log n)$ using the divide-and-conquer algorithm \cite{aggarwal1987geometric}.
Here, the exhaustive search for the middle row and the refinement of the search range are repeated recursively (\cref{fig: monotone_maxima}).

\begin{figure}[t]
    \begin{minipage}{0.48\columnwidth}
        \subfigure[]{
            \includegraphics[width=0.28\columnwidth]{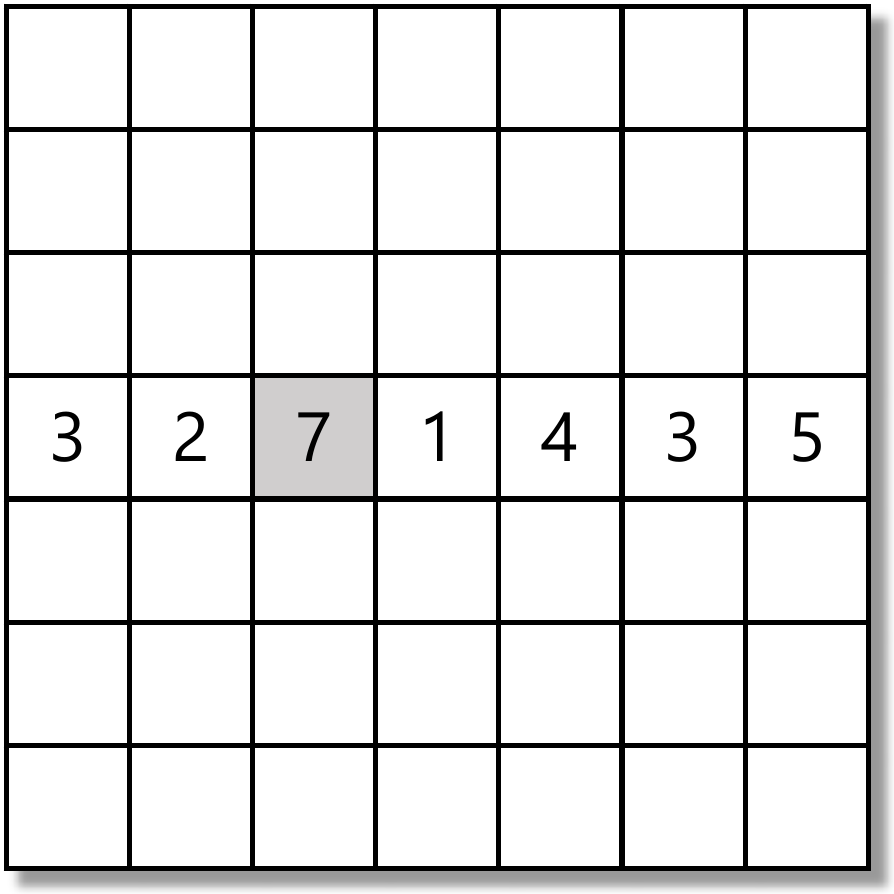}
            \label{fig: monotone_maxima_a}
        }
        \subfigure[]{
            \includegraphics[width=0.28\columnwidth]{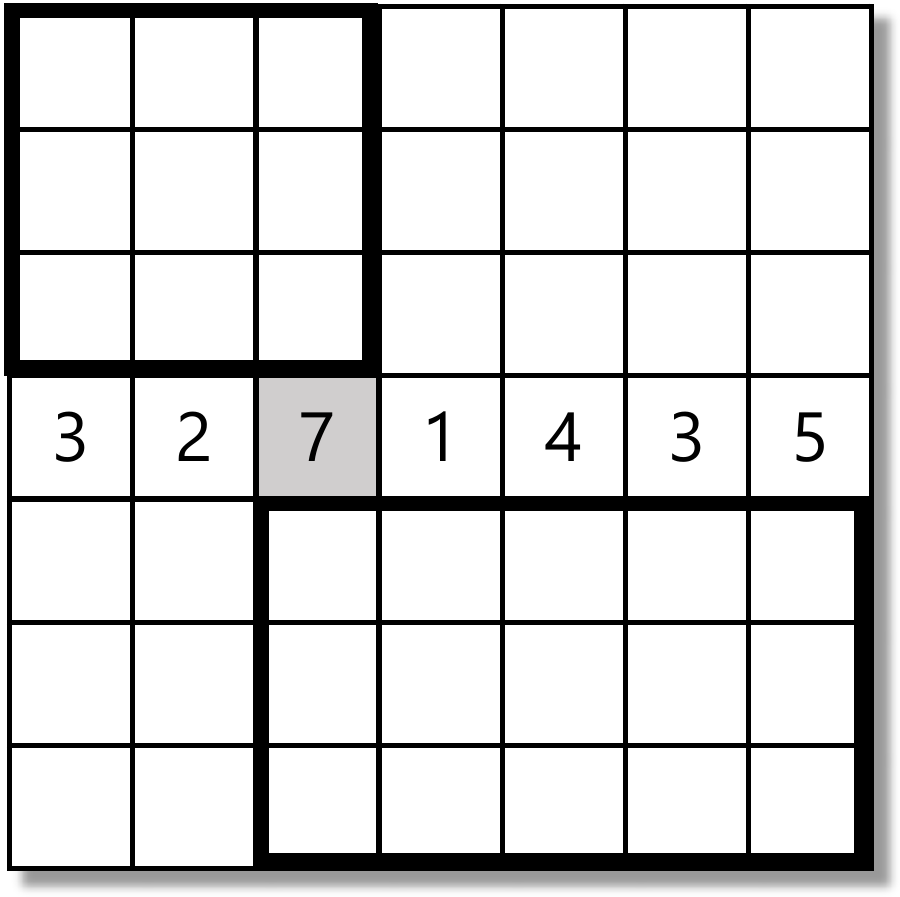}
            \label{fig: monotone_maxima_b}
        }
        \subfigure[]{
            \includegraphics[width=0.28\columnwidth]{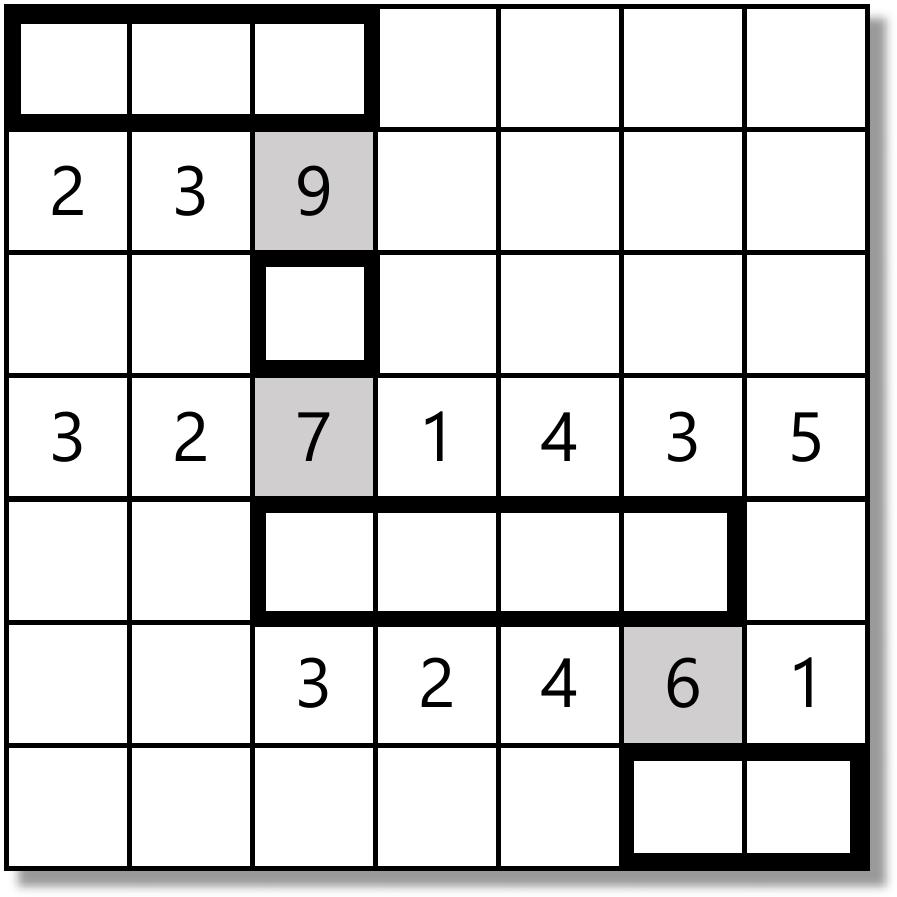}
            \label{fig: monotone_maxima_c}
        }
        \caption{A divide-and-conquer algorithm for solving a \textit{matrix problem} for a \textit{monotone matrix}: \subref{fig: monotone_maxima_a} An exhaustive search is performed on the middle row, and \subref{fig: monotone_maxima_b} the result is used to narrow down the search area. \subref{fig: monotone_maxima_c} This is repeated recursively.}
        \label{fig: monotone_maxima}
    \end{minipage}
    \hfill
    \begin{minipage}{0.48\columnwidth}
        \subfigure[]{
            \includegraphics[width=0.45\columnwidth]{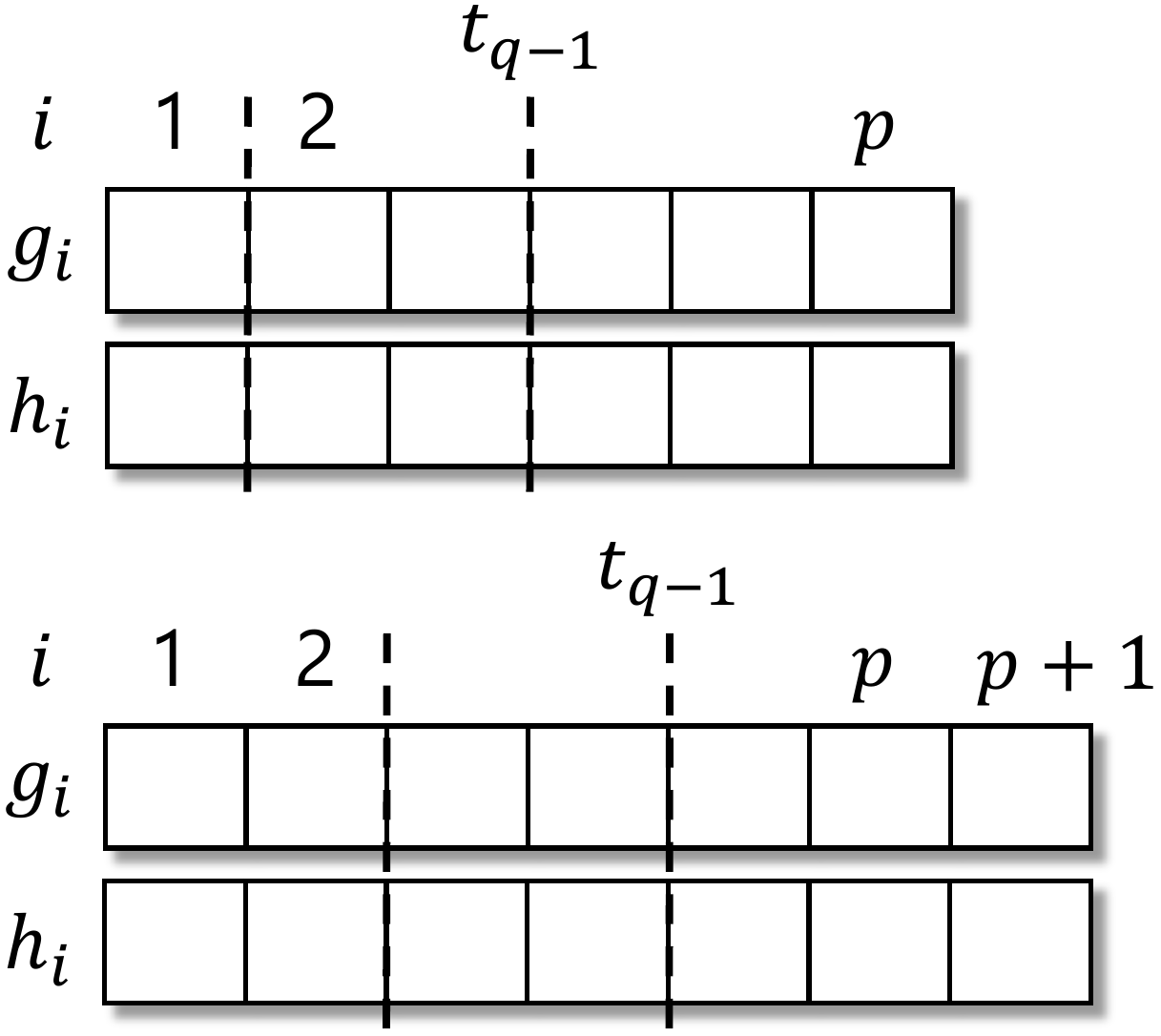}
            \label{fig: TransMonoIdea_monotone}
        }
        \subfigure[]{
            \includegraphics[width=0.45\columnwidth]{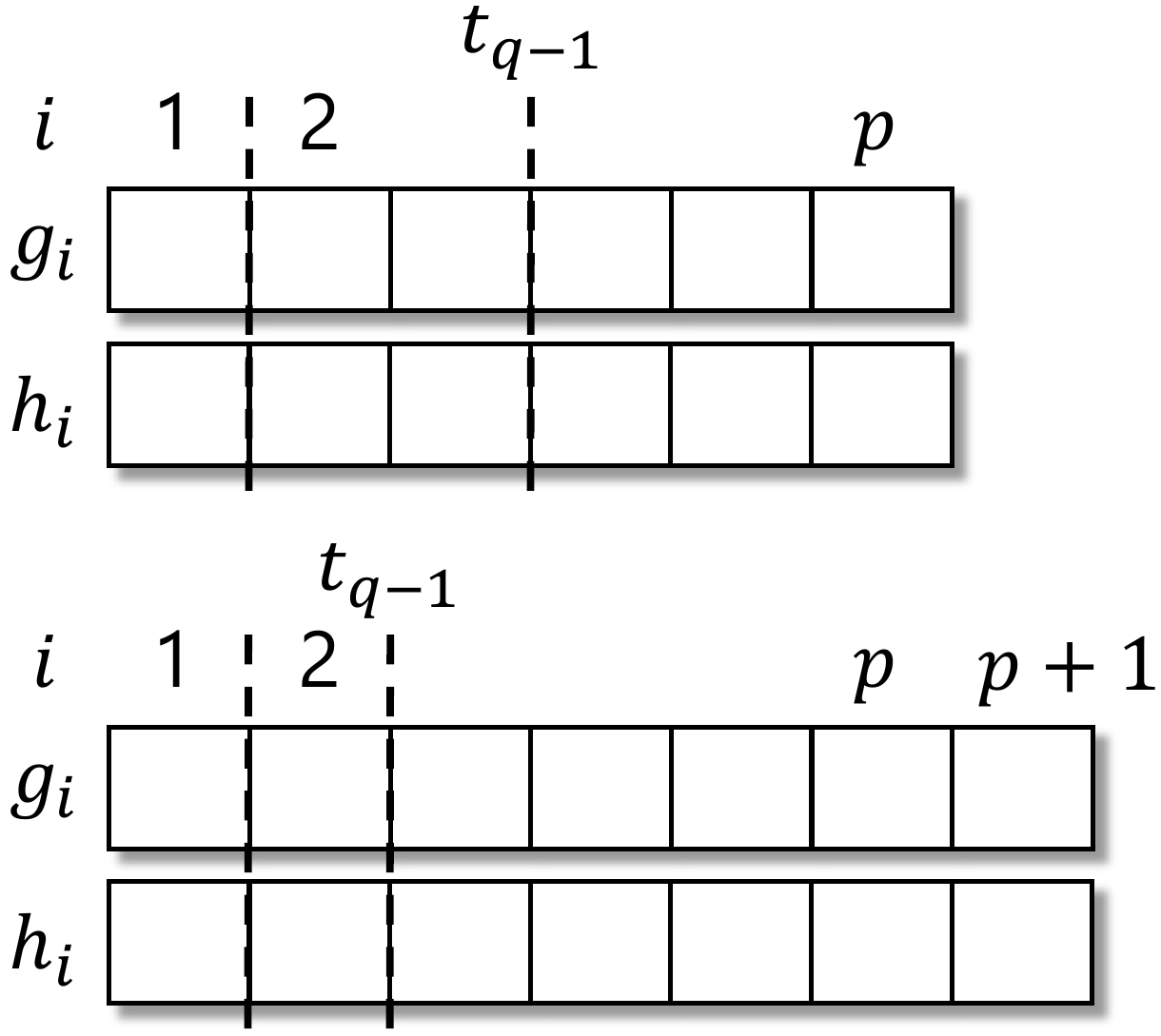}
            \label{fig: TransMonoIdea_not_monotone}
        }
         \caption{When the number of regions is fixed at $q$ and the number of segments is increased by $1$, the optimal $t_{q-1}$ remains unchanged or increases if $A$ is a \textit{monotone matrix} \subref{fig: TransMonoIdea_monotone}. When $A$ is not a \textit{monotone matrix} \subref{fig: TransMonoIdea_not_monotone}, the optimal $t_{q-1}$ may decrease.}
        \label{fig: TransMonoIdea}
    \end{minipage}
    
\end{figure}

We also define an \textit{ideal score distribution} as follows.
\begin{definition}
\label{def:MonotoneDist}
A score distribution is \textit{ideal} if the following holds:
\begin{equation}
    \frac{g_1}{h_1} \leq \frac{g_2}{h_2} \leq \dots \leq \frac{g_N}{h_N}.
\end{equation}
\end{definition}
An \textit{ideal score distribution} implies that the probability of $x \in \mathcal{S}$ and the score $s(x)$ are ideally well correlated.
In other words, an \textit{ideal score distribution} means that the machine learning model learns the distribution ideally.

``Few crossing transitions'' in \cref{fig: FastPLBF++_idea} indicates that $A$ is a \textit{monotone matrix} or is close to it.
It is somewhat intuitive that $A$ is a \textit{monotone matrix} or is close to it.
This is because the fact that $A$ is a \textit{monotone matrix} implies that the optimal $t_{q-1}$ does not decrease when the number of regions is fixed at $q$ and the number of segments increases by $1$ (\cref{fig: TransMonoIdea}).
It is intuitively more likely that $t_{q-1}$ remains unchanged or increases, as in \cref{fig: TransMonoIdea_monotone}, than that $t_{q-1}$ decreases, as in \cref{fig: TransMonoIdea_not_monotone}.
Fast PLBF++ takes advantage of this insight to rapidly construct DP tables.

From Equation (\ref{equ: DP_A}), determining $\mathrm{DP}_\mathrm{KL}[p][q] ~ (p=1 \dots N-1)$ from $\mathrm{DP}_\mathrm{KL}[p][q-1] ~ (p=0 \dots N-2)$ is equivalent to solving the \textit{matrix problem} of matrix $A$.
When $A$ is a \textit{monotone matrix}, the divide-and-conquer algorithm can solve this problem with $\mathcal{O}(N \log N)$ time complexity.
(The same algorithm can obtain a not necessarily correct solution even if $A$ is not a \textit{monotone matrix}.)
By computing $\mathrm{DP}_\mathrm{KL}[p][q]$ with this algorithm sequentially for $q=1 \dots k-1$, we can construct a DP table with a time complexity of $\mathcal{O}(Nk\log N)$.
The worst-case complexity of the subsequent calculations is $\mathcal{O}(Nk^2)$, so the total complexity is $\mathcal{O}(Nk\log N + Nk^2)$.
This is the fast PLBF++ construction algorithm, and this is faster than fast PLBF, which requires $\mathcal{O}(N^2k)$ computations.

Fast PLBF++ does not necessarily have the same data structure as PLBF because $A$ is not necessarily a \textit{monotone matrix}.
However, as the following theorem shows, we can prove that $A$ is a \textit{monotone matrix} under certain conditions.
\begin{theorem}
\label{thm: DPargMonotone}
If the score distribution is \textit{ideal}, $A$ is a \textit{monotone matrix}.
\end{theorem}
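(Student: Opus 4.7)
The plan is to verify that $A$ satisfies the inverse Monge (quadrangle) inequality
\[
A_{p_1, i_1} + A_{p_2, i_2} \;\geq\; A_{p_1, i_2} + A_{p_2, i_1} \qquad (p_1 < p_2,\; i_1 < i_2),
\]
which is a standard sufficient condition for the argmax-monotonicity required by Definition~\ref{def:MonotoneMatrix}. Whenever $i_2 > p_1$ the right-hand side contains a $-\infty$ entry and the inequality is trivial, so only $i_1 < i_2 \leq p_1 < p_2$ is of interest; there all four entries are finite and the $\mathrm{DP}_\mathrm{KL}[\,\cdot\,][q-1]$ summands cancel, reducing the problem to
\[
d_\mathrm{KL}(i_1, p_1) + d_\mathrm{KL}(i_2, p_2) \;\geq\; d_\mathrm{KL}(i_1, p_2) + d_\mathrm{KL}(i_2, p_1).
\]

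I would then partition $[i_1, p_2]$ into the three consecutive blocks $[i_1, i_2-1]$, $[i_2, p_1]$, $[p_1+1, p_2]$, with $g$-sums $a, b, c$ and $h$-sums $a', b', c'$. Since the block ratio $\sum g_j / \sum h_j$ is a weighted average of the individual $g_j/h_j$, the ideal assumption yields $a/a' \leq b/b' \leq c/c'$. Writing $f(x, y) = x \ln(x/y)$ (the base is immaterial for sign), the target inequality becomes
\[
f(a+b, a'+b') + f(b+c, b'+c') \;\geq\; f(a+b+c, a'+b'+c') + f(b, b').
\]

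The key idea is to set $\Phi(x, y) := f(a+x, a'+y) - f(x, y)$ and prove that $\Phi$ is non-increasing along the segment $(x(t), y(t)) = (b+tc,\; b'+tc')$ for $t \in [0, 1]$; the two endpoints are precisely $\Phi(b, b')$ and $\Phi(b+c, b'+c')$. A direct differentiation yields
\[
\tfrac{d\Phi}{dt} \;=\; c\,\ln\!\frac{(a+x)\,y}{x\,(a'+y)} \;+\; c'\,\frac{a' x - a y}{y\,(a'+y)},
\]
in which the logarithmic term is non-positive (because $(a+x)/(a'+y)$ is a convex combination of $a/a'$ and $s := x/y$, and $a/a' \leq s$), while the affine term is non-negative (since $a'x - ay = a'y(s - a/a') \geq 0$). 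The tension between these opposite-signed terms is the principal obstacle.

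To resolve it I would introduce $\lambda := a'/(a'+y)$, $r := a/a'$, and $u := \lambda(1 - r/s) \in [0, 1)$, which collapses the derivative to $c\,\ln(1 - u) + c'\,s\,u$. The elementary bound $\ln(1-u) \leq -u$ then upper-bounds this expression by $u(c' s - c)$, and this is $\leq 0$ precisely because $s$, being a convex combination of $b/b'$ and $c/c'$, is at most $c/c'$ by the ideal hypothesis, so $c' s \leq c$. Integrating $d\Phi/dt \leq 0$ over $t \in [0, 1]$ yields $\Phi(b+c, b'+c') \leq \Phi(b, b')$, which is exactly the desired inequality; hence $A$ is a monotone matrix. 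The crucial use of the ideal hypothesis appears twice: once to secure $r \leq s$ (making the substitution by $u$ valid), and once to secure $s \leq c/c'$ (closing the bound).
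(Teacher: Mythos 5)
Your proof is correct, but it takes a genuinely different route from the paper's. The paper argues by contradiction on two \emph{adjacent} rows: assuming $J(p)>J(p+1)$ with $a=J(p)$, $a'=J(p+1)$, it combines the four optimality (in)equalities for these argmaxes and reduces everything to $\left\{d_\mathrm{KL}(a,p+1)-d_\mathrm{KL}(a,p)\right\}-\left\{d_\mathrm{KL}(a',p+1)-d_\mathrm{KL}(a',p)\right\}\geq 0$, which is exactly your four-point inequality in the special case $i_1=a'$, $i_2=a$, $p_1=p$, $p_2=p+1$ (third block a single segment); that inequality is supplied by \cref{lem: ddklInequality}, proved by showing $\partial D/\partial x\geq 0$ and $\partial D/\partial y\leq 0$ on the region $x/y\geq u_1/v_1$ and then settling the boundary ray $(x,y)=(zu_1,zv_1)$ with Jensen's inequality. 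You instead verify the inverse Monge (quadrangle) inequality for all quadruples $i_1<i_2\leq p_1<p_2$ and invoke the standard implication to leftmost-argmax monotonicity, and you prove the key inequality by a one-dimensional path argument in which the substitution $u=\lambda(1-r/s)$ collapses $d\Phi/dt$ to $c\ln(1-u)+c'su$ and the elementary bound $\ln(1-u)\leq -u$ closes the sign; your two uses of the ideal hypothesis ($a/a'\leq s$ and $s\leq c/c'$) play the role of the paper's hypotheses $u_1/v_1\geq u_2/v_2$ and $x/y\geq u_1/v_1$. Your route buys a slightly stronger conclusion: the finite part of $A$ satisfies the inverse Monge condition, hence is essentially totally monotone, which is precisely the property the paper's future-work discussion would want for an SMAWK-type $\mathcal{O}(N)$ row-maxima computation, whereas the paper's theorem-level argument concludes only monotonicity (though its \cref{lem: ddklInequality} is general enough to yield the same strengthening). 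Two details to tidy, neither fatal: an entry $A_{p_1 i}$ with $i\leq p_1$ can still be $-\infty$ because $\mathrm{DP}_\mathrm{KL}[i-1][q-1]=-\infty$ when $i<q$, so your claim that ``all four entries are finite'' should be weakened to ``either all four are finite or the same $-\infty$ summand appears on both sides, making the inequality trivial'' (and the all-$-\infty$ rows $p<q$ have $J(p)=1$, so they cannot violate monotonicity); and, exactly as in the paper's own proof, you implicitly need the block sums of $g$ and $h$ to be positive for the ratios and logarithms to be well defined.
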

The proof is given in the appendix.
When the distribution is not \textit{ideal}, matrix $A$ is not necessarily a \textit{monotone matrix}, but as mentioned above, it is somewhat intuitive that $A$ is close to a \textit{monotone matrix}.
In addition, as will be shown in the next section, experiment results from real-world datasets whose distribution is not \textit{ideal} show that fast PLBF++ is almost as memory efficient as PLBF.

\section{Experiments}
\label{sec: Experiments}

\begin{figure*}[t]
    \subfigure[Malicious URLs Dataset]{
        \includegraphics[width=0.48\columnwidth]{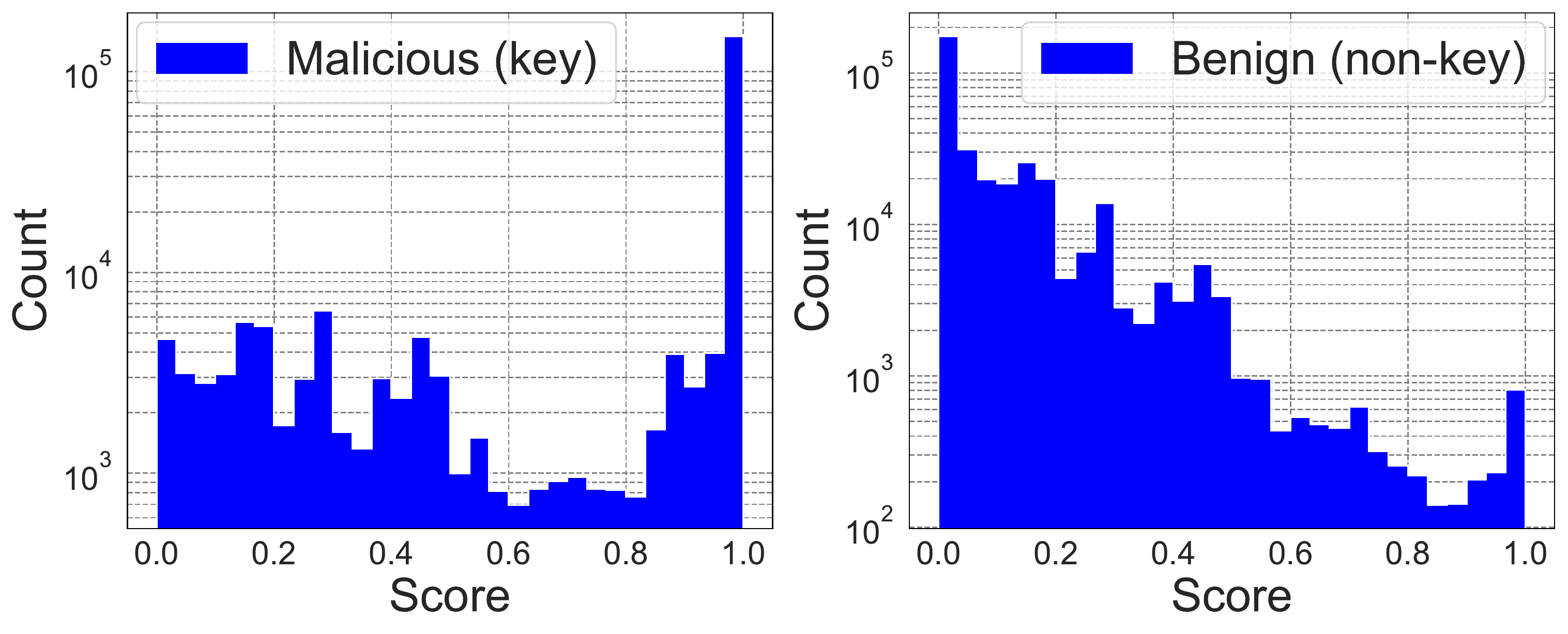}
        \label{fig: Murl_dist}
    }
    \subfigure[EMBER Dataset]{
        \includegraphics[width=0.48\columnwidth]{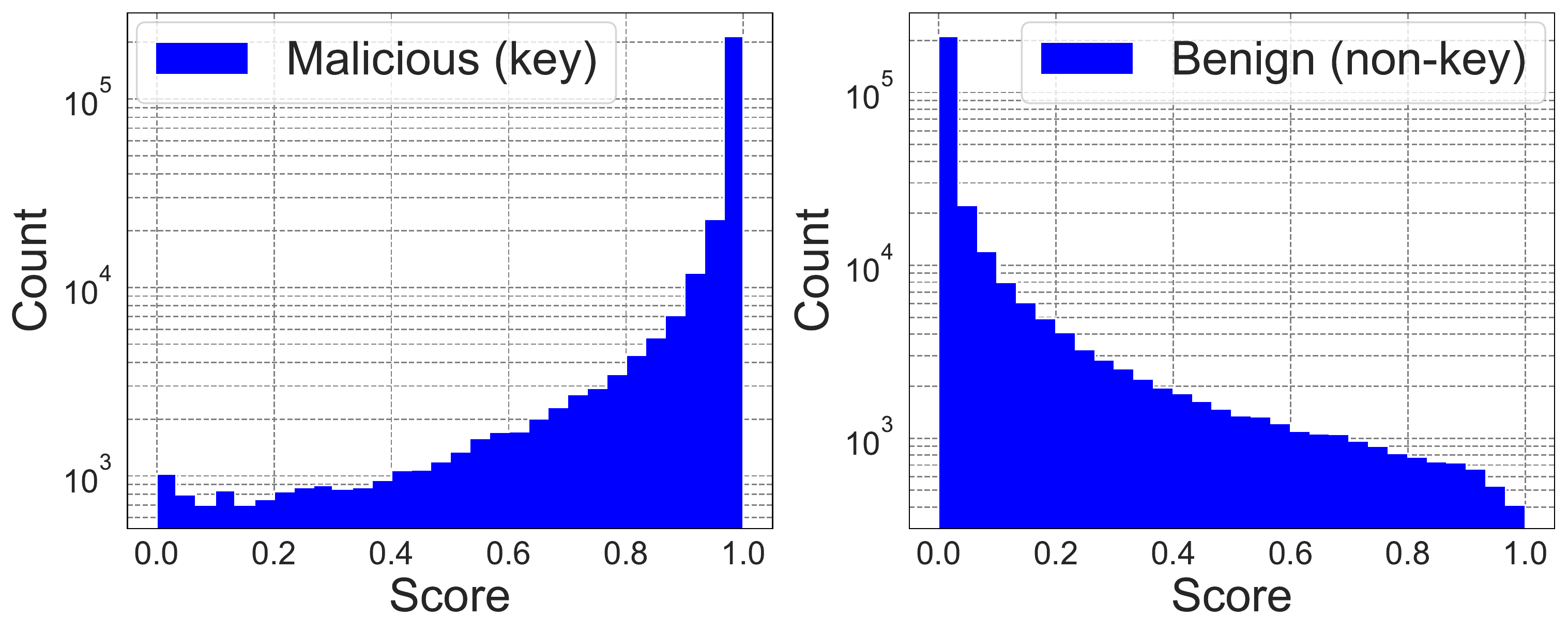}
        \label{fig: Malw_dist}
    }
    \caption{Histograms of the score distributions of keys and non-keys.}
    \label{fig: dist}
\end{figure*}

\begin{figure*}[t]
    \subfigure[Malicious URLs Dataset]{
        \label{fig: Murl_g_to_h_ratio}
        \includegraphics[width=0.48\columnwidth]{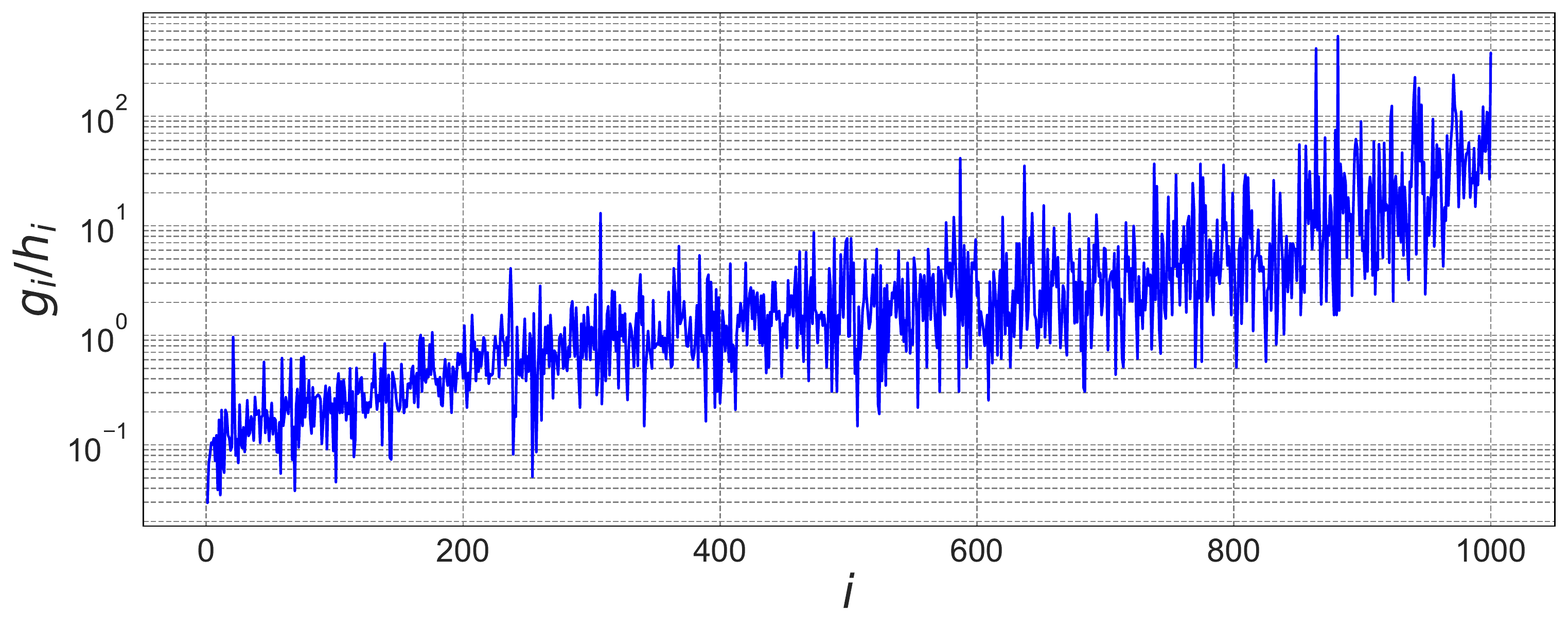}
    }
    \subfigure[EMBER Dataset]{
        \label{fig: Malw_g_to_h_ratio}
        \includegraphics[width=0.48\columnwidth]{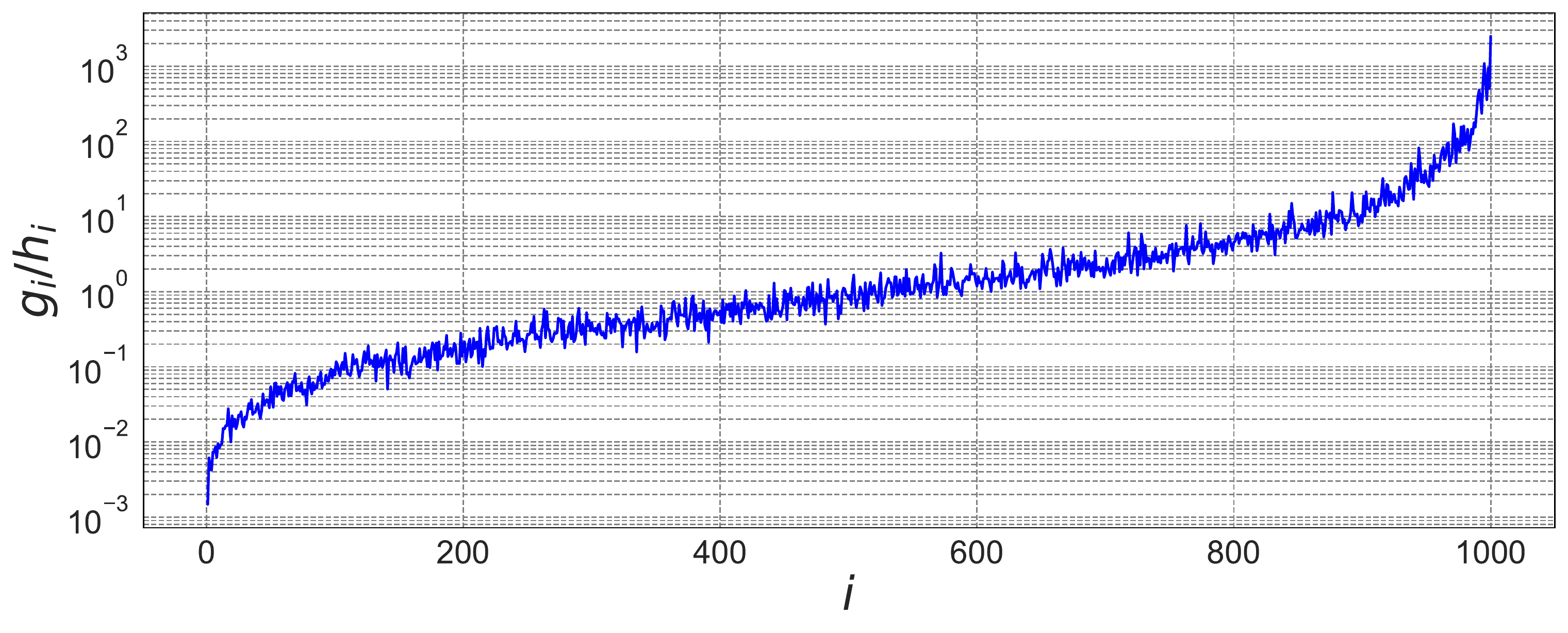}
    }
    \caption{Ratio of keys to non-keys.}
    \label{fig: g_to_h_ratio}
\end{figure*}

This section evaluates the experimental performance of fast PLBF and fast PLBF++. 
We compared the performances of fast PLBF and fast PLBF++ with four baselines: Bloom filter \cite{bloom1970space}, Ada-BF \cite{dai2020adaptive}, sandwiched learned Bloom filter (sandwiched LBF) \cite{mitzenmacher2018model}, and PLBF \cite{vaidya2021partitioned}. 
Similar to PLBF, Ada-BF is an LBF that partitions the score space into several regions and assigns different FPRs to each region.
However, Ada-BF relies heavily on heuristics for clustering and assigning FPRs.
Sandwiched LBF is an LBF that ``sandwiches'' a machine learning model with two Bloom filters.
This achieves better memory efficiency than the original LBF by optimizing the size of two Bloom filters.

To facilitate the comparison of different methods or hyperparameters results, we have slightly modified the original PLBF framework.
The original PLBF was designed to minimize memory usage under the condition of a given false positive rate.
However, this approach makes it difficult to compare the results of different methods or hyperparameters.
This is because both the false positive rate at test time and the memory usage vary depending on the method and hyperparameters, which often makes it difficult to determine the superiority of the results.
Therefore, in our experiments, we used a framework where the expected false positive rate is minimized under the condition of memory usage.
This approach makes it easy to obtain two results with the same memory usage and compare them by the false positive rate at test time.
See the appendix for more information on how this framework modification will change the construction method of PLBFs.

\textbf{Datasets}: We evaluated the algorithms using the following two datasets.

\begin{itemize}
\item \textbf{Malicious URLs Dataset}: As in previous papers \cite{dai2020adaptive, vaidya2021partitioned}, we used Malicious URLs Dataset \cite{manu2021urlDataset}.
The URLs dataset comprises 223,088 malicious and 428,118 benign URLs.
We extracted 20 lexical features such as URL length, use of shortening, number of special characters, etc.
We used all malicious URLs and 342,482 (80\%) benign URLs as the training set, and the remaining benign URLs as the test set.

\item \textbf{EMBER Dataset}: We used the EMBER dataset \cite{anderson2018ember} as in the PLBF research.
The dataset consists of 300,000 malicious and 400,000 benign files, along with the features of each file.
We used all malicious files and 300,000 (75\%) benign files as the train set and the remaining benign files as the test set.

\end{itemize}

While any model can be used for the classifier, we used LightGBM \cite{NIPS2017_6449f44a} because of its speed in training and inference, as well as its memory efficiency and accuracy.
The sizes of the machine learning model for the URLs and EMBER datasets are 312 Kb and 1.19 Mb, respectively.
The training time of the machine learning model for the URLs and EMBER datasets is 1.09 and 2.71 seconds, respectively.
The memory usage of LBF is the total memory usage of the backup Bloom filters and the machine learning model.
\cref{fig: dist} shows a histogram of each dataset's score distributions of keys and non-keys.
We can see that the frequency of keys increases and that of non-keys decreases as the score increases.
In addition, \cref{fig: g_to_h_ratio} plots $g_i/h_i ~ (i=1 \dots N)$ when $N=1,000$.
We can see that $g_i/h_i$ tends to increase as $i$ increases, but the increase is not monotonic (i.e., the score distribution is not \textit{ideal}).

\subsection{Construction time}
\label{sec: Construction time}

\begin{figure*}[t]
    \begin{minipage}{\columnwidth}
        \centering
        \includegraphics[width=0.85\columnwidth]{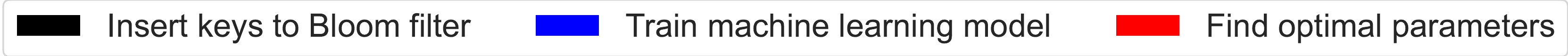}
    \end{minipage}

    \subfigure[Malicious URLs Dataset]{
        \label{fig: Murl_time_hist}
        \includegraphics[width=0.48\columnwidth]{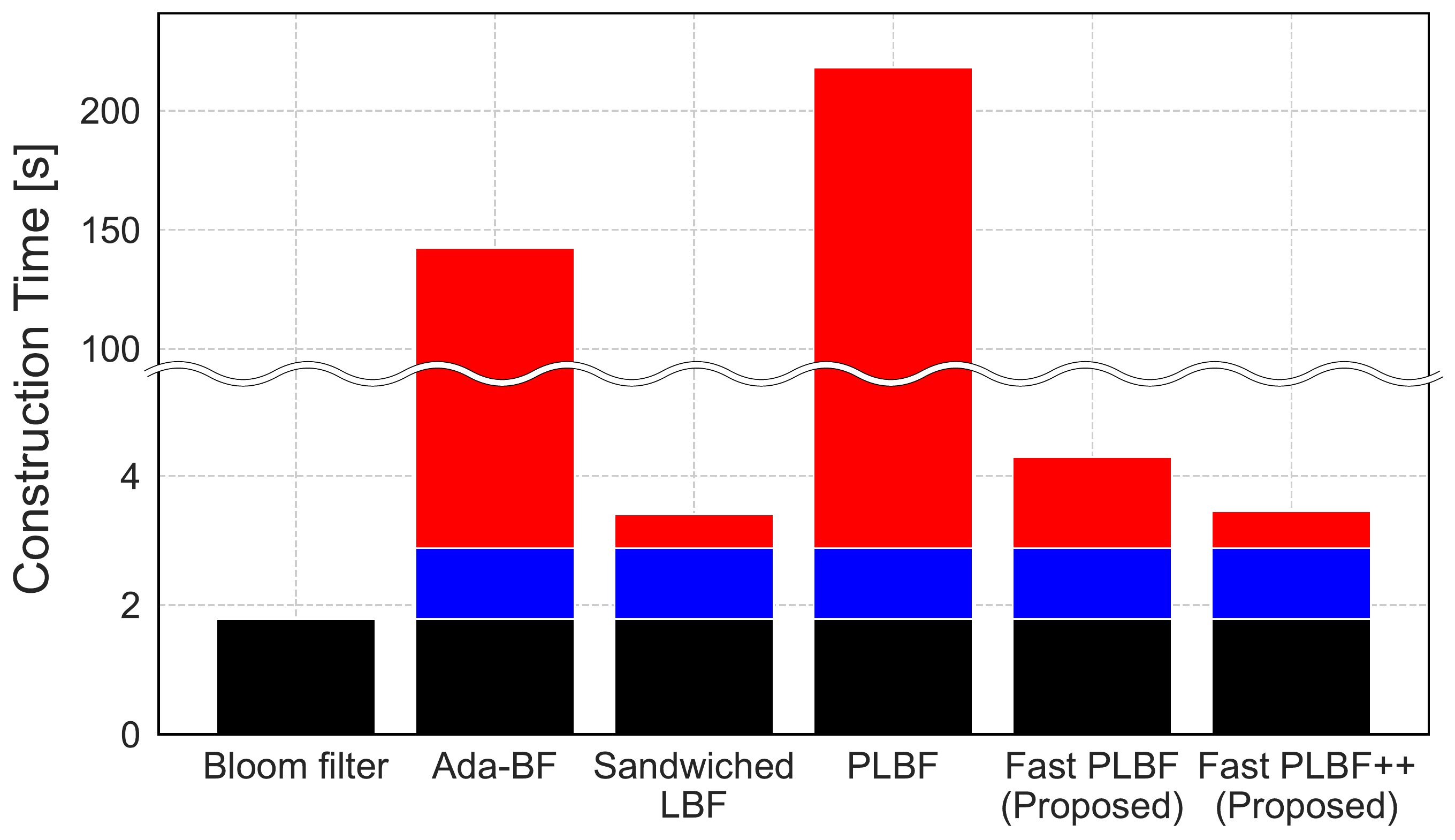}
    }
    \subfigure[EMBER Dataset]{
        \label{fig: Malw_time_hist}
        \includegraphics[width=0.48\columnwidth]{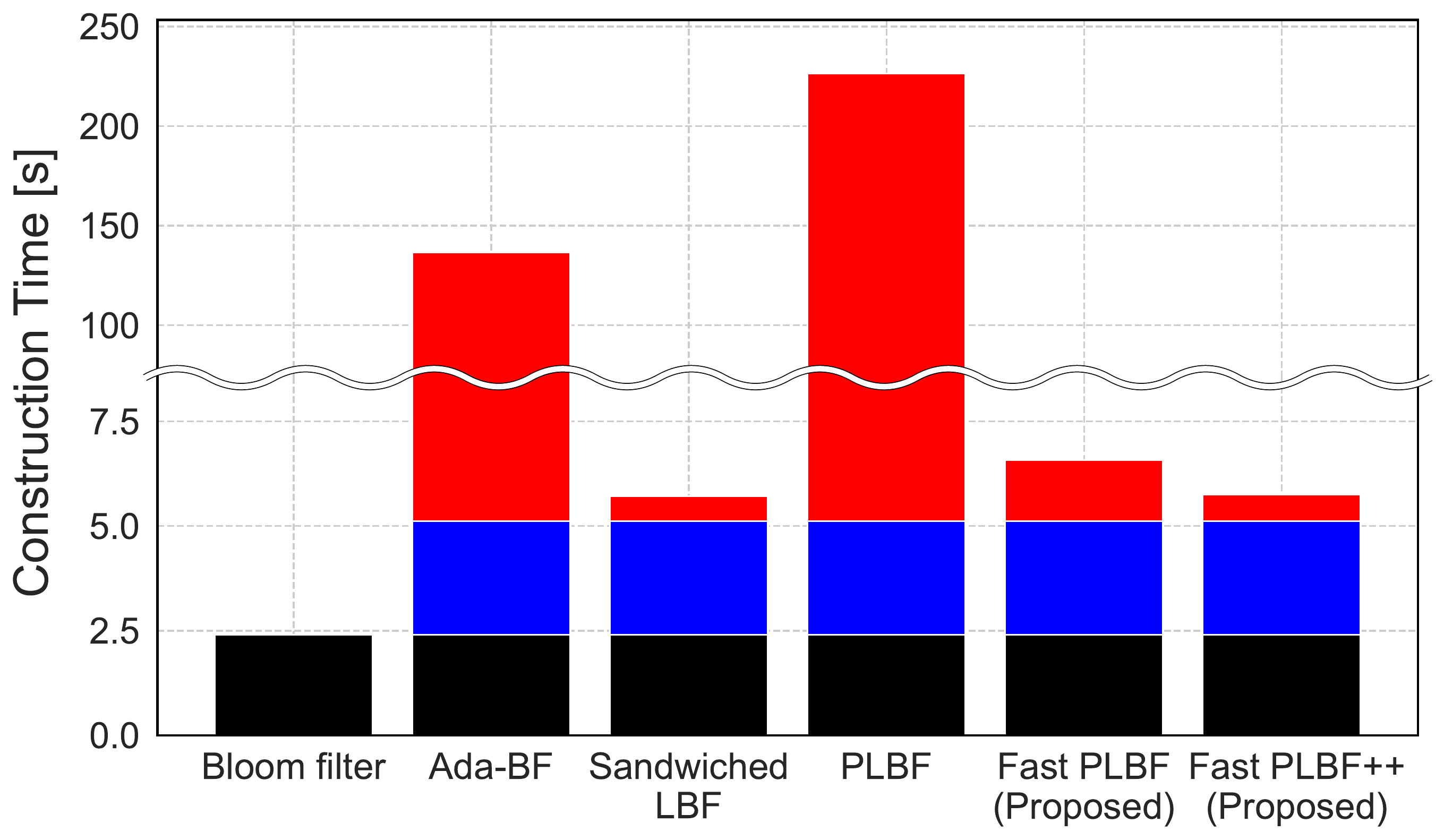}
    }
    \caption{Construction time.}
    \label{fig: time_hist}
\end{figure*}

We compared the construction times of fast PLBF and fast PLBF++ with those of existing methods.
Following the experiments in the PLBF paper, hyperparameters for PLBF, fast PLBF, and fast PLBF++ were set to $N=1,000$ and $k=5$.

\cref{fig: time_hist} shows the construction time for each method.
The construction time for learned Bloom filters includes not only the time to insert keys into the Bloom filters but also the time to train the machine learning model and the time to compute the optimal parameters ($\bm{t}$ and $\bm{f}$ in the case of PLBF).

Ada-BF and sandwiched LBF use heuristics to find the optimal parameters, so they have shorter construction times than PLBF but have worse accuracy.
PLBF has better accuracy but takes more than 3 minutes to find the optimal $\bm{t}$ and $\bm{f}$.
On the other hand, our fast PLBF and fast PLBF++ take less than 2 seconds.
As a result, Fast PLBF constructs 50.8 and 34.3 times faster than PLBF, and fast PLBF++ constructs 63.1 and 39.3 times faster than PLBF for the URLs and EMBER datasets, respectively.
This is about the same construction time as sandwiched LBF, which relies heavily on heuristics and, as we will see in the next section, is much less accurate than PLBF.

\subsection{Memory Usage and FPR}
\label{sec: MemoryUsageAndFPR}

\begin{figure*}[t]
    \begin{minipage}{\columnwidth}
        \centering
        \includegraphics[width=0.95\columnwidth]{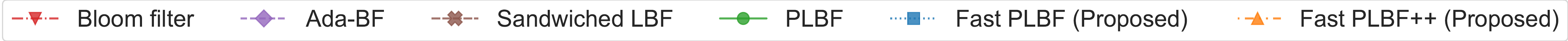}
    \end{minipage}
    
    \subfigure[Malicious URLs Dataset]{
        \label{fig: Murl_Memory_FPR}
        \includegraphics[width=0.48\columnwidth]{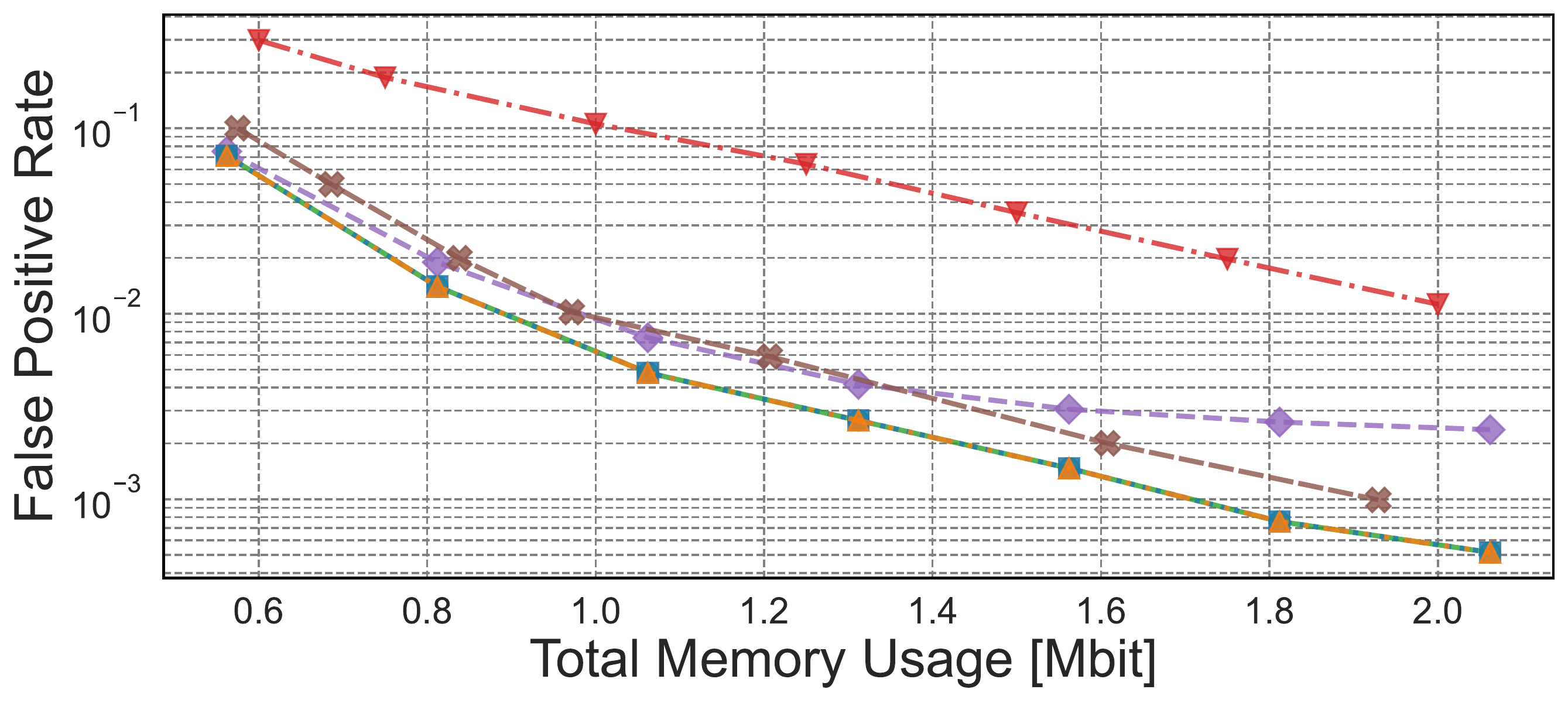}
    }
    \subfigure[EMBER Dataset]{
        \label{fig: Malw_Memory_FPR}
        \includegraphics[width=0.48\columnwidth]{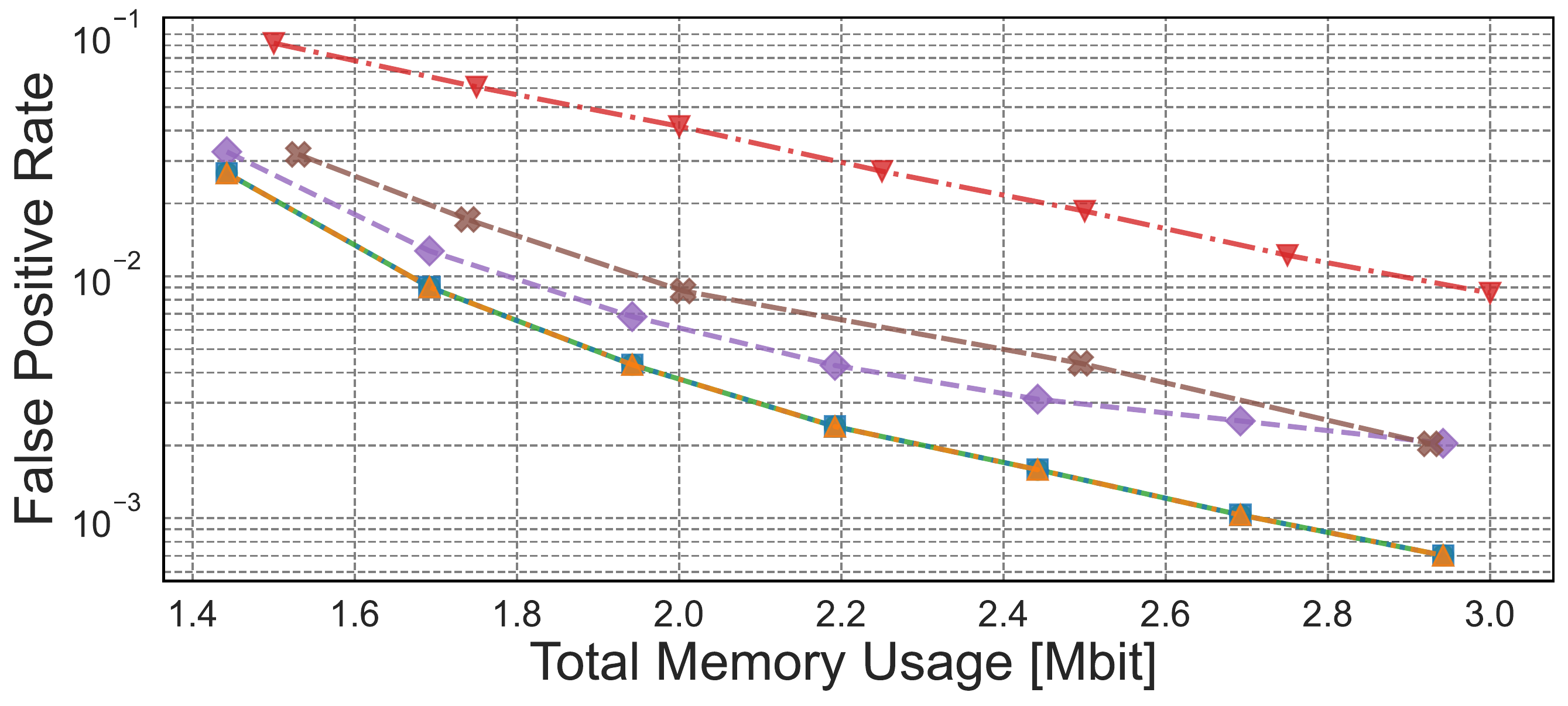}
    }
    \caption{Trade-off between memory usage and FPR.}
    \label{fig: memory_fpr}
\end{figure*}

We compared the trade-off between memory usage and FPR for fast PLBF and fast PLBF++ with Bloom filter, Ada-BF, sandwiched LBF, and PLBF.
Following the experiments in the PLBF paper, hyperparameters for PLBF, fast PLBF, and fast PLBF++ were always set to $N=1,000$ and $k=5$.

\cref{fig: memory_fpr} shows each method's trade-off between memory usage and FPR.
PLBF, fast PLBF, and fast PLBF++ have better Pareto curves than the other methods for all datasets.
Fast PLBF constructs the same data structure as PLBF in all cases, so it has exactly the same accuracy as PLBF.
Fast PLBF++ achieves almost the same accuracy as PLBF.
Fast PLBF++ has up to 1.0019 and 1.000083 times higher false positive rates than PLBF for the URLs and EMBER datasets, respectively.

\subsection{Ablation study for hyper-parameters}
\label{ablation study for hyper-parameters}

\begin{figure*}[t]
    \begin{minipage}{\columnwidth}
        \centering
        \includegraphics[width=0.5\columnwidth]{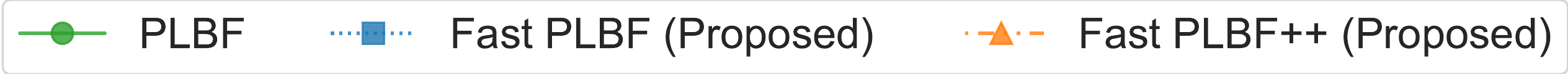}
    \end{minipage}
    \vspace{-1.0em}
    
    \subfigure[Malicious URLs Dataset]{
        \label{fig: Murl_N_Time_FPR}
        \includegraphics[width=0.48\columnwidth]{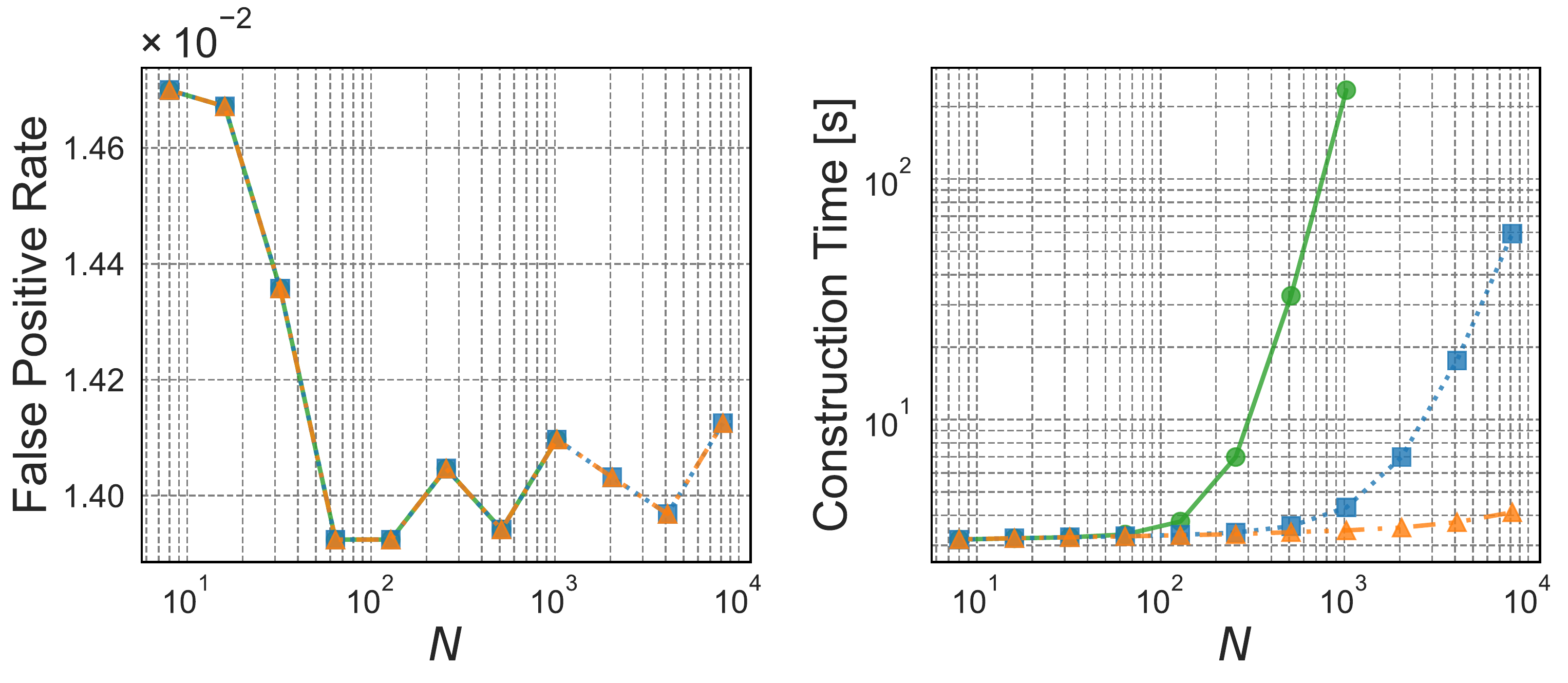}
    }
    \subfigure[EMBER Dataset]{
        \label{fig: Malw_N_Time_FPR}
        \includegraphics[width=0.48\columnwidth]{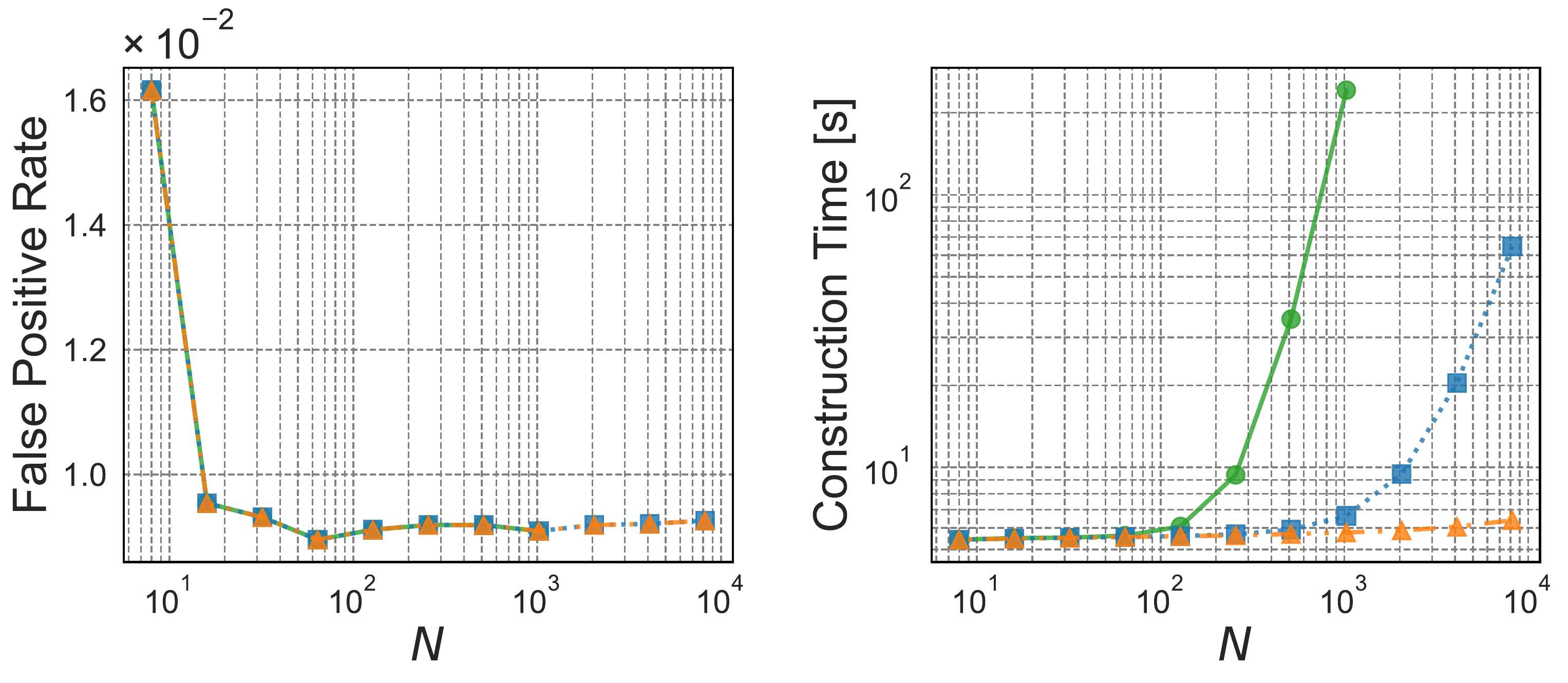}
    }
    \caption{Ablation study for hyper-parameter $N$.}
    \label{fig: construction time and accuracy with Varying n}
\end{figure*}

\begin{figure*}[t]
    \begin{minipage}{\columnwidth}
        \centering
        \includegraphics[width=0.5\columnwidth]{fig/legend_time_fpr.pdf}
    \end{minipage}
    \vspace{-1.0em}
    
    \subfigure[Malicious URLs Dataset]{
        \label{fig: Murl_K_Time_FPR}
        \includegraphics[width=0.48\columnwidth]{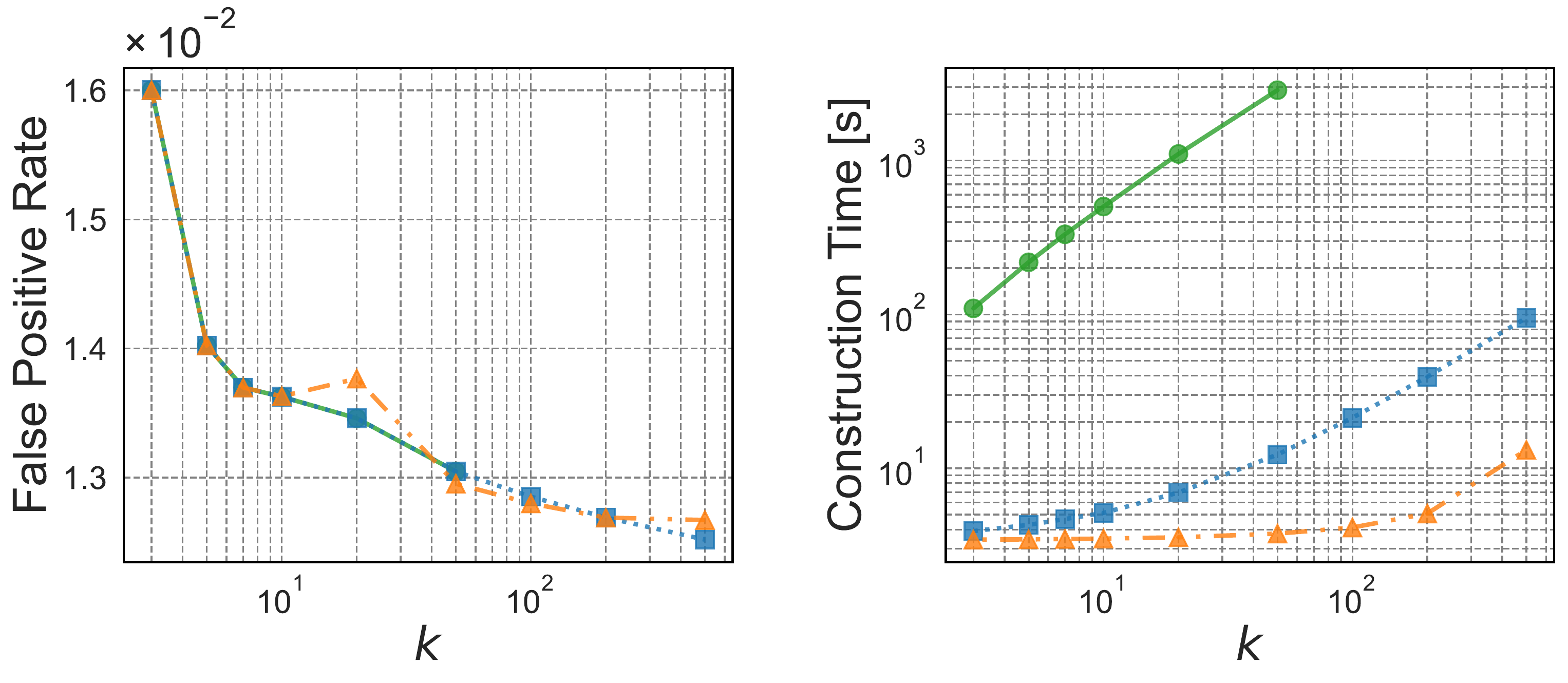}
    }
    \subfigure[EMBER Dataset]{
        \label{fig: Malw_K_Time_FPR}
        \includegraphics[width=0.48\columnwidth]{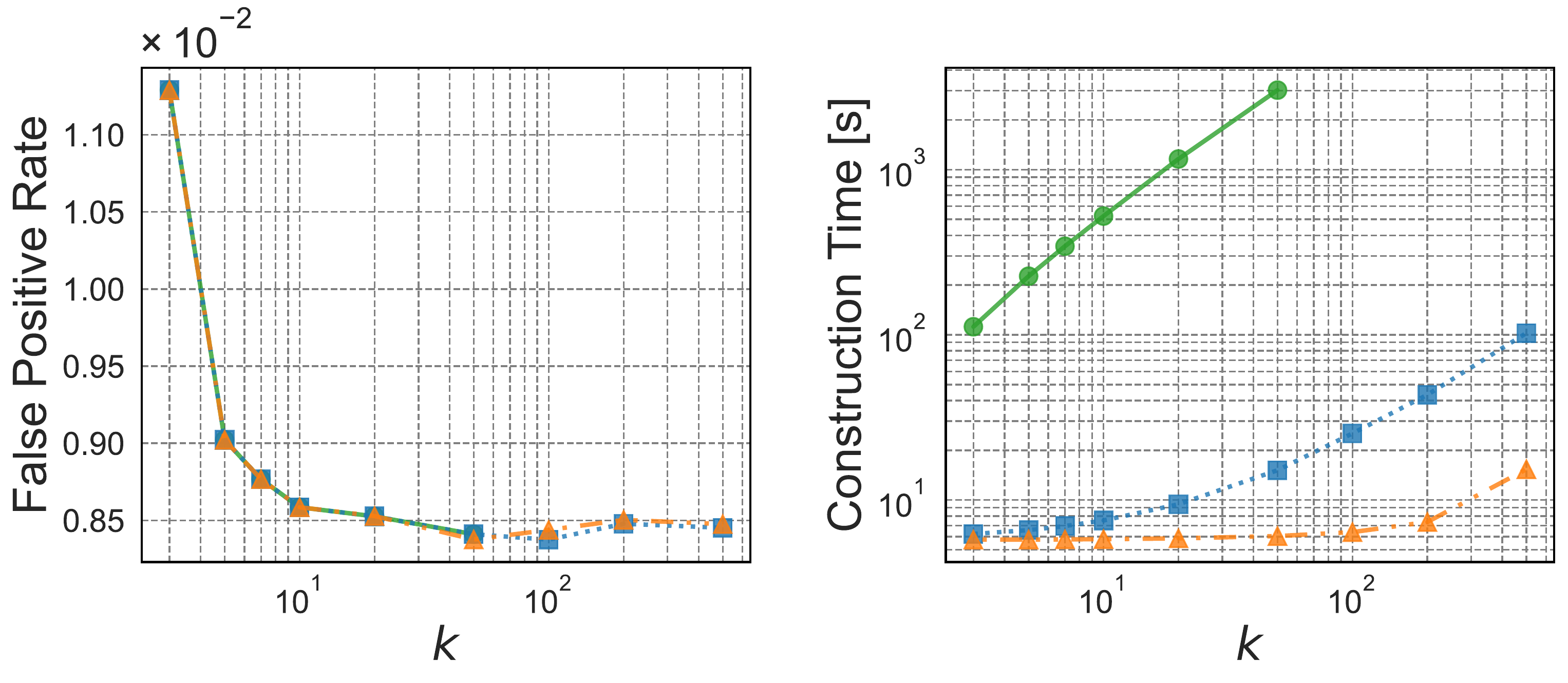}
    }
    \caption{Ablation study for hyper-parameter $k$.}
    \label{fig: construction time and accuracy with Varying k}
\end{figure*}

The parameters of the PLBFs are memory size, $N$, and $k$.
The memory size is specified by the user, and $N$ and $k$ are hyperparameters that are determined by balancing construction time and accuracy.
In the previous sections, we set $N$ to 1,000 and $k$ to 5, following the original paper on PLBF.
In this section, we perform ablation studies for these hyperparameters to confirm that our proposed methods can construct accurate data structures quickly, no matter what hyperparameter settings are used.
We also confirm that the accuracy tends to be better and the construction time increases as $N$ and $k$ are increased, and that the construction time of the proposed methods increases much slower than PLBF.

\cref{fig: construction time and accuracy with Varying n} shows the construction time and false positive rate with various $N$ while the memory usage of the backup Bloom filters is fixed at 500 Kb and $k$ is fixed at 5.
For all three PLBFs, the false positive rate tends to decrease as $N$ increases.
(Note that this is the false positive rate on test data, so it does not necessarily decrease monotonically.
The appendix shows that the expected value of the false positive rate calculated using training data decreases monotonically as $N$ increases.)
Also, as $N$ increases, the PLBF construction time increases rapidly, but the fast PLBF construction time increases much more slowly than that, and for fast PLBF++, the construction time changes little.
This is because the construction time of PLBF is asymptotically proportional to $N^3$, while that of fast PLBF and fast PLBF++ is proportional to $N^2$ and $N \log N$, respectively.
The experimental results show that the two proposed methods can achieve high accuracy without significantly changing the construction time with large $N$.

\cref{fig: construction time and accuracy with Varying k} shows the construction time and false positive rate with various $k$ while the backup bloom filter memory usage is fixed at 500 Kb and $N$ is fixed at 1,000.
For all three PLBFs, the false positive rate tends to decrease as $k$ increases.
For the EMBER dataset, the false positive rate stops decreasing at about $k=20$, while for the URLs dataset, it continues to decrease even at about $k=500$.
(Just as in the case of experiments with varying $N$, this decrease is not necessarily monotonic.)
In addition, the construction times of all three PLBFs increase proportionally to $k$, but fast PLBF has a much shorter construction time than PLBF, and fast PLBF++ has an even shorter construction time than fast PLBF.
When $k=50$, fast PLBF constructs 233 and 199 times faster than PLBF, and fast PLBF++ constructs 761 and 500 times faster than PLBF for the URLs and EMBER datasets, respectively.
The experimental results indicate that by increasing $k$, the two proposed methods can achieve high accuracy without significantly affecting the construction time.

\section{Related Work}
\label{sec: RelatedWork}

Approximate membership query is a query that asks whether the query $q$ is contained in the set $\mathcal{S}$ while allowing for false positives with a small probability $\varepsilon$.
One can prove that at least $|\mathcal{S}| \log_{2} \left(\frac{1}{\varepsilon}\right)$ bits of memory must be used to answer the approximate membership query if the elements in the set or queries are selected with equal probability from the universal set \cite{carter1978exact}.

A Bloom filter \cite{bloom1970space} is one of the most basic data structures for approximate membership queries.
It compresses the set $\mathcal{S}$ into a bit string using hash functions.
This bit string and the hash functions are then used to answer the queries.
To achieve an FPR of $\varepsilon$, a Bloom filter requires $|\mathcal{S}| \log_{2}\left(\frac{1}{\varepsilon}\right)\log_{2}e$ bits of memory.
This is $\log_{2} e$ times the theoretical lower bound.

Various derivatives have been proposed that achieve better memory efficiency than the original Bloom filter.
The cuckoo filter \cite{fan2014cuckoo} is more memory efficient than the original Bloom filter and supports dynamic addition and removal of elements.
Pagh et al. \cite{pagh2005optimal} proposed a replacement for Bloom filter that achieves a theoretical lower bound.
Various other memory-efficient filters exist, including the vacuum filter \cite{wang2019vacuum}, xor filter \cite{graf2020xor}, and ribbon filter \cite{peter2021ribbon}.
However, these derivatives do not consider the structure of the distribution and thus cannot take advantage of it.

Kraska et al.\cite{kraska2018case} proposed using a machine learning model as a prefilter of a backup Bloom filter.
Ada-BF \cite{dai2020adaptive} extended this design and proposed to exploit the scores output by the machine learning model.
PLBF \cite{vaidya2021partitioned} uses a design similar to that of Ada-BF but introduces fewer heuristics for optimization than Ada-BF.
Mitzenmacher \cite{mitzenmacher2018model} proposed an LBF that ``sandwiches'' a machine learning model with two Bloom filters.
This achieves better memory efficiency than the original LBF but can actually be interpreted as a special case of PLBF.

\section{Limitation and Future Work}
As explained in \cref{sec: FastPLBF++}, it is somewhat intuitive that $A$ is a monotone matrix or close to it, so it is also intuitive that fast PLBF++ achieves accuracy close to PLBF.
Experimental results on the URLs and EMBER datasets also suggest that fast PLBF++ achieves almost the same accuracy as PLBF, even when the score distribution is not \textit{ideal}.
This experimental rule is further supported by the results of the artificial data experiments described in the appendix.
However, there is no theoretical support for the accuracy of fast PLBF++.
Theoretical support for how fast PLBF++ accuracy may degrade relative to PLBF is a future issue.

Besides, it is possible to consider faster methods by making stronger assumptions than fast PLBF++.
Fast PLBF++ assumes the \textit{monotonicity} of matrix $A$.
We adopted this assumption because matrix $A$ is proved to be \textit{monotone} under intuitive assumptions about the data distribution.
However, by assuming stronger assumptions about matrix $A$, the computational complexity of the construction could be further reduced.
For example, if matrix $A$ is \textit{totally monotone}, the matrix problem for matrix $A$ can be solved in $\mathcal{O}(N)$ using algorithms such as \cite{aggarwal1987geometric, galil1989linear}.
Using such existing DP algorithms is a promising direction toward even faster or more accurate methods, and is a future work.

\section{Conclusion}
\label{sec: Conclusion}
PLBF is an outstanding LBF that can effectively utilize the distribution of the set and queries captured by a machine learning model.
However, PLBF is computationally expensive to construct.
We proposed fast PLBF and fast PLBF++ to solve this problem.
Fast PLBF is superior to PLBF because fast PLBF constructs exactly the same data structure as PLBF but does so faster.
Fast PLBF++ is even faster than fast PLBF and achieves almost the same accuracy as PLBF and fast PLBF.
These proposed methods have greatly expanded the range of applications of PLBF.

\section*{Acknowledgments and Disclosure of Funding}
We thank the anonymous reviewers for their constructive comments. This work was supported by JST AIP Acceleration Research JPMJCR23U2, Japan.

\bibliography{arXiv}
\bibliographystyle{unsrt}

\clearpage

\appendix
\begin{appendices}

\section*{Appendices}
\cref{app: SolutionOptProb} details the optimization problem designed in the original PLBF paper \cite{vaidya2021partitioned} and its solution.
It includes a comprehensive analysis of the optimization problem and a detailed description of how PLBF's method leads (under certain assumptions) to a solution to the optimization problem, which are not provided in \cite{vaidya2021partitioned}.
\cref{app: Algorithm details} presents details of PLBF and the fast PLBF algorithm along with pseudo-code.
\cref{app: ProofofDPargMonotone} gives the proof of Theorem 4.3, a theorem about the accuracy of fast PLBF++.
In \cref{app: modification of the PLBF framework}, we explain the trivial modifications to the PLBF framework that we made for the experiments.
\cref{app: Additional ablation study for hyperparameters} describes a detailed ablation study of the PLBFs hyperparameters, $N$ and $k$.
\cref{app: Experiments on PLBF solution to relaxed problem} describes the experiments on the simplified solution method presented in the PLBF paper.
\cref{app: Experiments with artificial datasets} describes experiments with artificial datasets to evaluate the experimental performance of fast PLBF++ in detail.

\section{Solution of the optimization problems}
\label{app: SolutionOptProb}

In the original PLBF paper \cite{vaidya2021partitioned}, the analysis with mathematical expressions was conducted only for the relaxed problem, while no analysis utilizing mathematical expressions was performed for the general problem.
Consequently, it was unclear which calculations were redundant, leading to the repetition of constructing similar DP tables.
Hence, this appendix provides a comprehensive analysis of the general problem.
It presents the optimal solution and the corresponding value of the objective function using mathematical expressions.
This analysis uncovers redundant calculations, enabling the derivation of fast PLBF.

The optimization problem designed by PLBF \cite{vaidya2021partitioned} can be written as follows:
\begin{equation}
\label{equ: prob}
\begin{aligned}
& \underset{\bm{f}, \bm{t}} {\text{minimize}} && 
\sum_{i=1}^{k} c|\mathcal{S}| G_i \log_{2}\left(\frac{1}{f_i}\right) \\
&\text{subject to} && \sum_{i=1}^{k} H_i f_i \leq F \\
& && t_0 = 0 < t_1 < t_2 < \dots < t_k = 1 \\
& && f_i \leq 1 ~~~~~~ (i=1\dots k).
\end{aligned}
\end{equation}
The objective function represents the total memory usage of the backup Bloom filters.
The first constraint equation represents the condition that the expected overall FPR is below $F$.
$c \geq 1$ is a constant determined by the type of backup Bloom filters.
$\bm{G}$ and $\bm{H}$ are determined by $\bm{t}$.
Once $\bm{t}$ is fixed, we can solve this optimization problem to find the optimal $\bm{f}$ and the minimum value of the objective function.
In the original PLBF paper \cite{vaidya2021partitioned}, the condition $f_i \leq 1 ~ (i=1\dots k)$ was relaxed; however, here we conduct the analysis without relaxation.
We assume $G_i>0,H_i>0 ~ (i=1 \dots k)$ and $|\mathcal{S}|>0$.

The Lagrange function is defined using the Lagrange multipliers $\bm{\mu}\in\mathbb{R}^k$ and $\nu\in\mathbb{R}$, as follows:
\begin{equation}
L(\bm{f}, \bm{\mu}, \nu) = 
\sum_{i=1}^{k} c|\mathcal{S}| G_i  \log_{2}\left(\frac{1}{f_i}\right) + 
\sum_{i=1}^{k} \mu_{i}(f_i - 1) + 
\nu \left( \sum_{i=1}^{k} H_i f_i - F \right).
\end{equation}

From the KKT condition, there exist $\Bar{\bm{\mu}}$ and $\Bar{\nu}$ in the local optimal solution $\Bar{\bm{f}}$ of this optimization problem, and the following holds:
\begin{gather}
\label{equ: L/f}
\frac{\partial L}{\partial f_i} (\Bar{\bm{f}}, \Bar{\bm{\mu}}, \Bar{\nu}) = 0 ~~~~~~ (i=1 \dots k) \\
\label{equ: L/mu}
\Bar{f}_{i} - 1 \leq 0, ~~~ \Bar{\mu}_{i} \geq 0, ~~~ \Bar{\mu}_{i}(\Bar{f}_{i}-1) = 0 ~~~~~~ (i=1 \dots k) \\
\label{equ: L/nu}
\sum_{i=1}^{k} H_i \Bar{f}_i - F \leq 0, ~~~ \Bar{\nu} \geq 0, ~~~ \Bar{\nu}\left(\sum_{i=1}^{k} H_i \Bar{f}_i - F\right) = 0.
\end{gather}
We define $\mathcal{I}_{f=1}$ and $\mathcal{I}_{f<1}$ as $\mathcal{I}_{f=1}=\{i \mid \Bar{f}_{i}=1\}$ and $\mathcal{I}_{f<1}=\{i \mid \Bar{f}_{i}<1\}$. $\mathcal{I}_{f=1} \cup \mathcal{I}_{f<1} = \{1 \dots k\}$ and $\mathcal{I}_{f=1} \cap \mathcal{I}_{f<1} = \varnothing $ are satisfied. Furthermore, we assume that $\mathcal{I}_{f<1}\neq\varnothing$. This means that there is at least one region that uses a backup Bloom filter with an FPR smaller than $1$.

By introducing $\mathcal{I}_{f=1}$ and $\mathcal{I}_{f<1}$, Equations (\ref{equ: L/f}, \ref{equ: L/mu}, \ref{equ: L/nu}) can be organized as follows:
\begin{gather}
\label{equ: f=1}
c|\mathcal{S}| G_i = \Bar{\mu}_{i} + \Bar{\nu} H_i, ~~~ \Bar{\mu}_{i}\geq0, ~~~ \Bar{f}_i=1 ~~~~~~ (i\in\mathcal{I}_{f=1}) \\
\label{equ: f<1}
c|\mathcal{S}| G_i = \Bar{f}_i \Bar{\nu} H_i, ~~~ \Bar{\mu}_{i}=0, ~~~ \Bar{f}_i<1 ~~~~~~ (i\in\mathcal{I}_{f<1}) \\
\label{equ: F}
\sum_{i\in\mathcal{I}_{f=1}} H_i + \sum_{i\in\mathcal{I}_{f<1}} H_i \Bar{f}_i - F \leq 0, ~~~ 
\Bar{\nu} \geq 0, ~~~ \Bar{\nu}\left(\sum_{i\in\mathcal{I}_{f=1}} H_i + \sum_{i\in\mathcal{I}_{f<1}} H_i \Bar{f}_i - F\right) = 0.
\end{gather}
Here, we get $\Bar{\nu}>0$. This is because if we assume $\Bar{\nu}=0$, Equation (\ref{equ: f<1}) implies that $G_i=0$ for $i\in\mathcal{I}_{f<1}(\neq\varnothing)$, which contradicts the assumption that $G_i > 0$.
From $\Bar{\nu}>0$ and Equation (\ref{equ: f<1}),
\begin{equation}
\label{equ: fi}
\Bar{f}_i = \frac{c|S|}{\Bar{\nu}} \frac{G_i}{H_i} ~~~~~~ (i\in\mathcal{I}_{f<1}).
\end{equation}
From Equations (\ref{equ: F}, \ref{equ: fi}) and $\Bar{\nu}>0$,
\begin{equation}
\sum_{i\in\mathcal{I}_{f=1}} H_i + \sum_{i\in\mathcal{I}_{f<1}} H_i  \frac{c|S|}{\Bar{\nu}}  \frac{G_i}{H_i} - F = 0
\end{equation}
\begin{equation}
\label{equ: nu}
\frac{c|S|}{\Bar{\nu}} = \frac{F - \sum_{i\in\mathcal{I}_{f=1}} H_i}{\sum_{i\in\mathcal{I}_{f<1}} G_i}.
\end{equation}

Substituting Equations (\ref{equ: fi}, \ref{equ: nu}) into the objective function of the optimization problem (\ref{equ: prob}), we obtain
\begin{align}
  \sum_{i=1}^{k} c|\mathcal{S}| G_i  \log_{2}\left(\frac{1}{\Bar{f}_i}\right)
  &= \sum_{i\in\mathcal{I}_{f<1}} - c|\mathcal{S}|  G_i  \log_{2}\left(
    \frac{F - \sum_{j\in\mathcal{I}_{f=1}} H_j}{\sum_{j\in\mathcal{I}_{f<1}} G_j}  \frac{G_i}{H_i}
  \right) \\
  \label{equ: minterm}
  &= c|\mathcal{S}| (1-G_{f=1})\log_{2}\left(
    \frac{1-G_{f=1}}{F-H_{f=1}}
  \right) -
  c|\mathcal{S}| \sum_{i\in\mathcal{I}_{f<1}} G_i \log_{2}\left(
    \frac{G_i}{H_i}
  \right).
\end{align}
We define $G_{f=1}$ and $H_{f=1}$ as $G_{f=1}=\sum_{i\in\mathcal{I}_{f=1}} G_i$ and $H_{f=1}=\sum_{i\in\mathcal{I}_{f=1}} H_i$, respectively.

PLBF makes the following assumption:
\begin{assumption}
\label{ass: PLBFkthregion}
For optimal $\bm{t}$ and $\bm{f}$, $\mathcal{I}_{f=1}=\varnothing$ or $\mathcal{I}_{f=1}=\{k\}$.
\end{assumption}
In other words, PLBF assumes that there is at most one region for which $f=1$, and if there is one, it is the region with the highest score.
PLBF then tries all possible thresholds $t_{k-1}$ (under the division of the score space).

In the following, we discuss the optimal $\bm{f}$ and $\bm{t}$ under the assumption that the $j$-th to $N$-th segments are clustered as the $k$-th region ($j \in \{k \dots N\}$).

In the case of $\mathcal{I}_{f=1}=\varnothing$, because $G_{f=1}=H_{f=1}=0$, Equation (\ref{equ: minterm}) becomes
\begin{equation}
c|\mathcal{S}|\log_{2}\left(\frac{1}{F}\right) - 
c|\mathcal{S}|G_k \log_{2}\left(\frac{G_k}{H_k}\right) - 
c|\mathcal{S}|\sum_{i=1}^{k-1} G_i \log_{2}\left(\frac{G_i}{H_i}\right).
\end{equation}
Meanwhile, in the case of $\mathcal{I}_{f=1}=\{k\}$, Equation (\ref{equ: minterm}) becomes
\begin{equation}
c|\mathcal{S}| (1-G_k)\log_{2}\left(\frac{1-G_k}{F-H_k}\right) -
c|\mathcal{S}|\sum_{i=1}^{k-1} G_i \log_{2}\left(\frac{G_i}{H_i}\right).
\end{equation}
In both cases, the terms other than the last one are constants under the assumption that the $j$-th to $N$-th segments are clustered as the $k$-th region.
Therefore, the value of the objective function is minimized when the $1$st to $(j-1)$-th segments are clustered in a way that maximizes $\sum_{i=1}^{k-1} G_i \log_{2}\left(\frac{G_i}{H_i}\right)$.
Thus, the problem of finding $\bm{t}$ that minimizes memory usage can be reduced to the following problem: 
for each $j=k \dots N$, we find a way to cluster the $1$st to $(j-1)$-th segments into $k-1$ regions that maximizes $\sum_{i=1}^{k-1} G_i \log_{2}\left(\frac{G_i}{H_i}\right)$.

\section{Algorithm details}
\label{app: Algorithm details}

\begin{algorithm}[t]
    \caption{PLBF \cite{vaidya2021partitioned}}
    \label{alg: PLBF}
\begin{algorithmic}
    \State{\bfseries Input:}
    \State{$\bm{g} \in \mathbb{R}^N$ : probabilities that the keys are contained in each segment} \leftskip=1em
    \State{$\bm{h} \in \mathbb{R}^N$ : probabilities that the non-keys are contained in each segment}
    \State{$F \in (0,1)$ : target overall FPR}
    \State{$k \in \mathbb{N}$ : number of regions}
    \State{\bfseries Output:} \leftskip=0em
    \State{$\bm{t} \in \mathbb{R}^{k+1}$ : threshold boundaries of each region} \leftskip=1em
    \State{$\bm{f} \in \mathbb{R}^{k}$ : FPRs of each region}
    \State{\bfseries Algorithm:} \leftskip=0em
    \State{\textsc{ThresMaxDivDP}$(\bm{g}, \bm{h}, j, k)$ :} \leftskip=1em
    \State {constructs $\mathrm{DP}_\mathrm{KL}^{j}$ and calculates the optimal thresholds by tracing the transitions backward from $\mathrm{DP}_\mathrm{KL}^{j}[j-1][k-1]$} \leftskip=2em
    \State {\textsc{OptimalFPR}$(\bm{g},\bm{h},\bm{t},F,k)$ :} \leftskip=1em
    \State {returns the optimal FPRs for each region under the given thresholds} \leftskip=2em
    \State {\textsc{SpaceUsed}$(\bm{g},\bm{h},\bm{t},\bm{f})$ :} \leftskip=1em
    \State {returns the space size used by the backup Bloom filters for the given thresholds and FPRs} \leftskip=2em
    \\

    \leftskip=0em
    \State $\mathrm{MinSpaceUsed} \leftarrow \infty$
    \State $\bm{t}_\mathrm{best} \leftarrow \mathrm{None}$
    \State $\bm{f}_\mathrm{best} \leftarrow \mathrm{None}$

    \For{$j = k \dots N$}
        \State $\bm{t} \leftarrow \textsc{ThresMaxDivDP}(\bm{g}, \bm{h}, j, k)$
        \State $\bm{f} \leftarrow \textsc{OptimalFPR}(\bm{g},\bm{h},\bm{t},F,k)$
        \If{$\mathrm{MinSpaceUsed} > \textsc{SpaceUsed}(\bm{g},\bm{h},\bm{t},\bm{f})$}
            \State $\mathrm{MinSpaceUsed} \leftarrow \textsc{SpaceUsed}(\bm{g},\bm{h},\bm{t},\bm{f})$
            \State $\bm{t}_\mathrm{best} \leftarrow \bm{t}$
            \State $\bm{f}_\mathrm{best} \leftarrow \bm{f}$
        \EndIf
    \EndFor
    \State \Return $\bm{t}_\mathrm{best}, \bm{f}_\mathrm{best}$
\end{algorithmic}
\end{algorithm}

\begin{algorithm}[t]
    \caption{Fast PLBF}
    \label{alg: FastPLBF}
\begin{algorithmic}
    \State {\bfseries Algorithm:}
    \State {$\textsc{MaxDivDP}(\bm{g}, \bm{h}, N, k)$ :}  \leftskip=1em
    \State {constructs $\mathrm{DP}_\mathrm{KL}^{N}$} \leftskip=2em
    \State {$\textsc{ThresMaxDiv}(\mathrm{DP}_\mathrm{KL}^{N}, j, k)$ :} \leftskip=1em
    \State {calculates the optimal thresholds by tracing the transitions backward from $\mathrm{DP}_\mathrm{KL}^{N}[j-1][k-1]$} \leftskip=2em
    \\

    \leftskip=0em
    \State $\mathrm{MinSpaceUsed} \leftarrow \infty$
    \State $\bm{t}_\mathrm{best} \leftarrow \mathrm{None}$
    \State $\bm{f}_\mathrm{best} \leftarrow \mathrm{None}$

    \State $\mathrm{DP}_\mathrm{KL}^{N} \leftarrow \textsc{MaxDivDP}(\bm{g}, \bm{h}, N, k)$
    
    \For{$j = k \dots N$}
        \State $\bm{t} \leftarrow \textsc{ThresMaxDiv}(\mathrm{DP}_\mathrm{KL}^{N}, j, k)$
        \State $\bm{f} \leftarrow \textsc{OptimalFPR}(\bm{g},\bm{h},\bm{t},F,k)$
        \If{$\mathrm{MinSpaceUsed} > \textsc{SpaceUsed}(\bm{g},\bm{h},\bm{t},\bm{f})$}
            \State $\mathrm{MinSpaceUsed} \leftarrow \textsc{SpaceUsed}(\bm{g},\bm{h},\bm{t},\bm{f})$
            \State $\bm{t}_\mathrm{best} \leftarrow \bm{t}$
            \State $\bm{f}_\mathrm{best} \leftarrow \bm{f}$
        \EndIf
    \EndFor
    \State \Return $\bm{t}_\mathrm{best}, \bm{f}_\mathrm{best}$
\end{algorithmic}
\end{algorithm}

In this appendix, we explain the algorithms for PLBF and the proposed method using pseudo-code in detail.
This will help to clarify how each method differs from the others.

First, we show the PLBF algorithm in \cref{alg: PLBF}.
For details of \textsc{OptimalFPR} and \textsc{SpaceUsed}, please refer to Appendix B and Equation (2) in the original PLBF paper \cite{vaidya2021partitioned}, respectively.
The worst case time complexities of \textsc{OptimalFPR} and \textsc{SpaceUsed} are $\mathcal{O}(k^2)$ and $\mathcal{O}(k)$, respectively.
As the DP table is constructed with a time complexity of $\mathcal{O}(j^2 k)$ for each $j=k \dots N$, the overall complexity is $\mathcal{O}(N^3k)$.

Next, we show the fast PLBF algorithm in \cref{alg: FastPLBF}.
The time complexity of building $\mathrm{DP}_\mathrm{KL}^{N}$ is $\mathcal{O}(N^2k)$, and the worst-case complexity of subsequent computations is $\mathcal{O}(Nk^2)$.
Because $N > k$, the total complexity is $\mathcal{O}(N^2k)$, which is faster than $\mathcal{O}(N^3k)$ for PLBF, although fast PLBF constructs the same data structure as PLBF.

Finally, we describe the fast PLBF++ algorithm.
The fast PLBF++ algorithm is nearly identical to fast PLBF, but it calculates approximated $\mathrm{DP}_\mathrm{KL}^{N}$ with a computational complexity of $\mathcal{O}(Nk \log N)$.
Since the remaining calculations are unchanged from fast PLBF, the overall computational complexity is $\mathcal{O}(Nk \log N + Nk^2)$.

\section{Proof of Theorem 4.3}
\label{app: ProofofDPargMonotone}

In this appendix, we present a proof of Theorem 4.3.
It demonstrates that fast PLBF++ can attain the same accuracy as PLBF and fast PLBF under certain conditions.

The following lemma holds.

\begin{lemma}
\label{lem: ddklInequality}
Let $u_1$, $u_2$, $v_1$, and $v_2$ be real numbers satisfying
\begin{gather}
\label{equ: u_cond}
0 < u_1 < u_2 \\
\label{equ: v_cond}
0 < v_1 < v_2 \\
\label{equ: uv_cond}
\frac{u_1}{v_1} \geq \frac{u_2}{v_2}.
\end{gather}
Furthermore, we define the function $D(x,y)$ for real numbers $x>0$ and $y>0$ as follows:
\begin{equation}
    D(x,y) = 
    \left\{
        (u_1 + x) \log_{2} \left(
            \frac{u_1 + x}{v_1 + y}
        \right)
        -
        u_1 \log_{2} \left(
            \frac{u_1}{v_1}
        \right)
    \right\}
    -
    \left\{
        (u_2 + x) \log_{2} \left(
            \frac{u_2 + x}{v_2 + y}
        \right)
        -
        u_2 \log_{2} \left(
            \frac{u_2}{v_2}
        \right)
    \right\}.
\end{equation}
If $\frac{x}{y} \geq \frac{u_1}{v_1}$ holds, then $D(x,y) \geq 0$.
\end{lemma}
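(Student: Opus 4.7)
The plan is to reduce the claim to a one-dimensional monotonicity statement. Setting
\[
f(u,v)\;=\;(u+x)\log_{2}\!\frac{u+x}{v+y}\;-\;u\log_{2}\!\frac{u}{v},
\]
we have $D(x,y)=f(u_1,v_1)-f(u_2,v_2)$, so it suffices to show that $f$ does not increase when $(u,v)$ is moved from $(u_1,v_1)$ to $(u_2,v_2)$ along the straight-line segment $\bigl(a(t),b(t)\bigr)=\bigl(u_1+t(u_2-u_1),\,v_1+t(v_2-v_1)\bigr)$ for $t\in[0,1]$.

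First I would compute the two partial derivatives
\[
\frac{\partial f}{\partial u}\;=\;\log_{2}\!\frac{v(u+x)}{u(v+y)},\qquad
\frac{\partial f}{\partial v}\;=\;\frac{1}{\ln 2}\cdot\frac{uy-vx}{v(v+y)}.
\]
A short calculation then gives the identity
\[
(u_2-u_1)\,b(t)-(v_2-v_1)\,a(t)\;=\;u_2v_1-u_1v_2,
\]
which is nonpositive by the ratio hypothesis $u_1/v_1\ge u_2/v_2$. Two consequences along the whole segment follow immediately: the ratio $a(t)/b(t)$ never exceeds $u_1/v_1\le x/y$, and the slope bound $(u_2-u_1)\le(v_2-v_1)\,a(t)/b(t)$ holds for every $t$.

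The main obstacle is that $\partial f/\partial u\ge 0$ while $\partial f/\partial v\le 0$ throughout this region, so the two terms in the chain rule
\[
\ln 2\cdot\frac{d}{dt}f(a(t),b(t))\;=\;(u_2-u_1)\ln R(t)\;-\;(v_2-v_1)\,\frac{b(t)x-a(t)y}{b(t)(b(t)+y)},
\]
with $R(t):=b(t)(a(t)+x)/(a(t)(b(t)+y))\ge 1$, pull in opposite directions and must be balanced precisely. The plan is to first bound $(u_2-u_1)\ln R(t)\le(v_2-v_1)\bigl(a(t)/b(t)\bigr)\ln R(t)$ using the slope estimate above, then apply the elementary inequality $\ln R\le R-1$, and finally invoke the algebraic identity $R(t)-1=(b(t)x-a(t)y)/(a(t)(b(t)+y))$. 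These combine to cancel exactly against the negative term, yielding $\frac{d}{dt}f(a(t),b(t))\le 0$ on $[0,1]$ and hence $D(x,y)\ge 0$.
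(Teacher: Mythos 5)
Your proof is correct, and it takes a genuinely different route from the paper's. The paper fixes $(u_1,v_1,u_2,v_2)$ and treats $D$ as a function of the increment $(x,y)$: it shows $\partial D/\partial x \geq 0$ and $\partial D/\partial y \leq 0$ on the region $x/y \geq u_1/v_1$, which pushes the infimum onto the boundary ray $x/y = u_1/v_1$, where $D(zu_1,zv_1)$ collapses to a three-term log-sum expression that is nonnegative by Jensen's inequality. You instead fix $(x,y)$, write $D = f(u_1,v_1) - f(u_2,v_2)$ with $f(u,v) = (u+x)\log_2\frac{u+x}{v+y} - u\log_2\frac{u}{v}$, and show $f$ is non-increasing along the segment from $(u_1,v_1)$ to $(u_2,v_2)$; the affine invariant $(u_2-u_1)b(t)-(v_2-v_1)a(t)=u_2v_1-u_1v_2\leq 0$ supplies both the ratio bound $a(t)/b(t)\leq u_1/v_1\leq x/y$ (hence $R(t)\geq 1$) and the slope bound, and the elementary inequality $\ln R\leq R-1$ together with the identity $R-1=\frac{b x - a y}{a(b+y)}$ makes the two chain-rule terms cancel exactly, giving $\frac{d}{dt}f(a(t),b(t))\leq 0$; I checked the partial derivatives, the invariant, and the cancellation, and they are all correct. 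The trade-off: the paper's argument buys a clean finish via the log-sum/Jensen inequality at the cost of a two-dimensional monotonicity-plus-infimum reduction, while yours is a single one-dimensional derivative estimate that avoids Jensen entirely and uses only $\ln R \leq R-1$, at the cost of the more delicate algebraic balancing of the two opposing terms (which you handle correctly, since $\ln R(t)\geq 0$ justifies multiplying the slope bound through).
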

\begin{proof}
\begin{align}
\frac{\partial D}{\partial x} &= 
\log_{2} \left(\frac{u_1 + x}{v_1 + y}\right) - 
\log_{2} \left(\frac{u_2 + x}{v_2 + y}\right) \\
\frac{\partial D}{\partial y} &= 
- \frac{u_1 + x}{v_1 + y} + \frac{u_2 + x}{v_2 + y}.
\end{align}
From Equations (\ref{equ: u_cond}, \ref{equ: v_cond}, \ref{equ: uv_cond}), when $0<x$, $0<y$, and $\frac{x}{y} \geq \frac{u_1}{v_1}$, we obtain
\begin{equation}
\frac{u_1 + x}{v_1 + y} \geq \frac{u_2 + x}{v_2 + y}.
\end{equation}
Therefore,
\begin{align}
\frac{\partial D}{\partial x} & \geq 0  \\
\frac{\partial D}{\partial y} & \leq 0 .
\end{align}
Thus, 
\begin{align}
\inf_{0<x,0<y,\frac{x}{y} \geq \frac{u_1}{v_1}} D(x,y) 
& \geq \inf_{0<x,0<y,\frac{x}{y} = \frac{u_1}{v_1}} D(x,y) \\
& = \inf_{z > 0} D(z u_1, z v_1) \\
& = \inf_{z > 0} \left\{
        zu_1 \log_{2} \left(\frac{zu_1}{zv_1}\right) + 
        u_2 \log_{2} \left(\frac{u_2}{v_2}\right) -
        \left(zu_1 + u_2\right) \log_{2} \left(\frac{zu_1 + u_1}{zv_1 + v_1}\right)
    \right\} \\
\label{equ: Jensen's inequality}
& \geq 0.
\end{align}
The transformation to Equation (\ref{equ: Jensen's inequality}) is due to Jensen's inequality \cite{jensen1906fonctions}.
This shows that if $\frac{x}{y} \geq \frac{u_1}{v_1}$, then $D(x,y) \geq 0$.
\end{proof}

\begin{proof}
We prove Theorem 4.3 by contradiction.

Assume that the $(N-1)\times (N-1)$ matrix $A$ is not a \textit{monotone matrix}.
That is, we assume that there exists $p$ ($1 \leq p \leq N-2$) such that $J(p) > J(p+1)$.
Let $J(i)$ be the smallest $j$ such that $A_{ij}$ equals the maximum of the $i$-th row of $A$.
We define $a \coloneqq J(p)$ and $a' \coloneqq J(p+1)$ ($a > a'$ from the assumption for contradiction).

From the definitions of $A$ and $J(i)$, we obtain the following:
\begin{align}
\label{equ: p_q_equ}
\mathrm{DP}_\mathrm{KL}[p][q] &=
\mathrm{DP}_\mathrm{KL}[a-1][q-1] +
d_\mathrm{KL}(a, p) \\
\label{equ: p+1_q_equ}
\mathrm{DP}_\mathrm{KL}[p+1][q] &=
\mathrm{DP}_\mathrm{KL}[a'-1][q-1] +
d_\mathrm{KL}(a', p+1) \\
\label{equ: p_q_inequ}
\mathrm{DP}_\mathrm{KL}[p][q] &> 
\mathrm{DP}_\mathrm{KL}[a'-1][q-1] +
d_\mathrm{KL}(a', p) \\
\label{equ: p+1_q_inequ}
\mathrm{DP}_\mathrm{KL}[p+1][q] & \geq
\mathrm{DP}_\mathrm{KL}[a-1][q-1] +
d_\mathrm{KL}(a, p+1).
\end{align}
Using Equations (\ref{equ: p_q_equ}, \ref{equ: p+1_q_equ}, \ref{equ: p+1_q_inequ}), we evaluate the right side of Equation (\ref{equ: p_q_inequ}) minus the left side of Equation (\ref{equ: p_q_inequ}).
\begin{align}
\label{equ: eval_dpkl1}
& \quad \mathrm{DP}_\mathrm{KL}[a'-1][q-1] +
d_\mathrm{KL}(a', p) -
\mathrm{DP}_\mathrm{KL}[p][q] \\
\label{equ: eval_dpkl2}
& = 
\left\{
\mathrm{DP}_\mathrm{KL}[p+1][q] - d_\mathrm{KL}(a', p+1)
\right\} +
d_\mathrm{KL}(a', p) -
\left\{
\mathrm{DP}_\mathrm{KL}[a-1][q-1] + d_\mathrm{KL}(a, p)
\right\} \\
\label{equ: eval_dpkl3}
& =
\left\{
\mathrm{DP}_\mathrm{KL}[p+1][q] - \mathrm{DP}_\mathrm{KL}[a-1][q-1]
\right\} + 
d_\mathrm{KL}(a', p) -
d_\mathrm{KL}(a', p+1) -
d_\mathrm{KL}(a, p) \\
\label{equ: eval_dpkl4}
& \geq
d_\mathrm{KL}(a, p+1) +
d_\mathrm{KL}(a', p) -
d_\mathrm{KL}(a', p+1) -
d_\mathrm{KL}(a, p) \\
\label{equ: eval_dpkl5}
& =
\left\{
d_\mathrm{KL}(a, p+1) - d_\mathrm{KL}(a, p)
\right\} -
\left\{
d_\mathrm{KL}(a', p+1) - d_\mathrm{KL}(a', p)
\right\}.
\end{align}
Equations (\ref{equ: p_q_equ}, \ref{equ: p+1_q_equ}) are used for the transformation to Equation (\ref{equ: eval_dpkl2}), and Equation (\ref{equ: eval_dpkl4}) is used for the transformation to Equation (\ref{equ: p+1_q_inequ}).

We define $u_1$, $u_2$, $v_1$, $v_2$, $x$, and $y$ as follows:
\begin{equation}
\begin{split}
    u_1 \coloneqq \sum_{i=a}^{p} g_i, ~~ 
    u_2 \coloneqq \sum_{i=a'}^{p} g_i, ~~
    x \coloneqq g_{p+1}, \\
    v_1 \coloneqq \sum_{i=a}^{p} h_i, ~~ 
    v_2 \coloneqq \sum_{i=a'}^{p} h_i, ~~ 
    y \coloneqq h_{p+1}.
\end{split}
\end{equation}
From the conditions of \textit{ideal score distribution} and $a' < a$, the followings hold:
\begin{gather}
0 < u_1 < u_2 \\
0 < v_1 < v_2 \\
\frac{x}{y} \geq \frac{u_1}{v_1} \geq \frac{u_2}{v_2} \\
0 < x, y.
\end{gather}
Therefore, from \cref{lem: ddklInequality}, 
\begin{align}
& \quad \left\{
        (u_1 + x) \log_{2} \left(
            \frac{u_1 + x}{v_1 + y}
        \right)
        -
        u_1 \log_{2} \left(
            \frac{u_1}{v_1}
        \right)
    \right\}
    -
    \left\{
        (u_2 + x) \log_{2} \left(
            \frac{u_2 + x}{v_2 + y}
        \right)
        -
        u_2 \log_{2} \left(
            \frac{u_2}{v_2}
        \right)
    \right\} \\
& = 
\left\{
d_\mathrm{KL}(a, p+1) - d_\mathrm{KL}(a, p)
\right\} -
\left\{
d_\mathrm{KL}(a', p+1) - d_\mathrm{KL}(a', p)
\right\} \\
& \geq 0.
\end{align}
Then, from Equations (\ref{equ: eval_dpkl1} \dots \ref{equ: eval_dpkl5}), we obtain
\begin{equation}
\mathrm{DP}_\mathrm{KL}[a'-1][q-1] +
d_\mathrm{KL}(a', p) -
\mathrm{DP}_\mathrm{KL}[p][q]
\geq 0.
\end{equation}
However, this contradicts Equation (\ref{equ: p_q_inequ}).
Hence, Theorem 4.3 is proved.
\end{proof}

\section{Modification of the PLBF framework}
\label{app: modification of the PLBF framework}

\begin{algorithm}[t]
    \caption{OptimalFPR\_F (Algorithm 1 in \cite{vaidya2021partitioned})}
    \label{alg: OptimalFPR_F}
\begin{algorithmic}
    \State{\bfseries Input:}
    \State{$\bm{G} \in \mathbb{R}^N$ : probabilities that the keys are contained in each region} \leftskip=1em
    \State{$\bm{H} \in \mathbb{R}^N$ : probabilities that the non-keys are contained in each region}
    \State{$F \in (0,1)$ : target overall FPR}
    \State{$k \in \mathbb{N}$ : number of regions}
    \State{\bfseries Output:} \leftskip=0em
    \State{$\bm{f} \in \mathbb{R}^{k}$ : FPRs of each region} \leftskip=1em
    \\

    \leftskip=0em
    \For{$i = 1 \dots k$}
        \State $f_i \leftarrow \frac{G_i \cdot F}{H_i}$
    \EndFor
    \While{$\exists ~ i : f_i > 1$}
        \For{$i = 1 \dots k$}
            \If{$f_i > 1$}
                \State $f_i \leftarrow 1$
            \EndIf
        \EndFor
        \State $G_\mathrm{sum} \leftarrow 0$
        \State $H_\mathrm{sum} \leftarrow 0$
        \For{$i = 1 \dots k$}
            \If{$f_i = 1$}
                \State $G_\mathrm{sum} \leftarrow G_\mathrm{sum} + G_i$
                \State $H_\mathrm{sum} \leftarrow H_\mathrm{sum} + H_i$
            \EndIf
        \EndFor
        \For{$i = 1 \dots k$}
            \If{$f_i < 1$}
                \State $f_i \leftarrow \frac{G_i \cdot (F-H_\mathrm{sum})}{H_i \cdot (1 - G_\mathrm{sum})}$
            \EndIf
        \EndFor
    \EndWhile
    \State \Return $\bm{f}$
\end{algorithmic}
\end{algorithm}

\begin{algorithm}[t]
    \caption{OptimalFPR\_M}
    \label{alg: OptimalFPR_M}
\begin{algorithmic}
    \State{\bfseries Input:}
    \State{$\bm{G} \in \mathbb{R}^N$ : probabilities that the keys are contained in each region} \leftskip=1em
    \State{$\bm{H} \in \mathbb{R}^N$ : probabilities that the non-keys are contained in each region}
    \State{$M \in \mathbb{R}$ : upper bound of the total memory usage of backup Bloom filters}
    \State{$k \in \mathbb{N}$ : number of regions}
    \State{\bfseries Output:} \leftskip=0em
    \State{$\bm{f} \in \mathbb{R}^{k}$ : FPRs of each region} \leftskip=1em
    \\

    \leftskip=0em
    \State $K_\mathrm{sum} \leftarrow 0$
    \For{$i = 1 \dots k$}
        \State $K_\mathrm{sum} \leftarrow K_\mathrm{sum} + G_i \log_{2}\left(\frac{G_i}{H_i}\right)$
    \EndFor
    \State $\beta \leftarrow \frac{M + c|\mathcal{S}|K_\mathrm{sum}}{c|\mathcal{S}|}$
    \For{$i = 1 \dots k$}
        \State $f_i \leftarrow 2^{-\beta} \frac{G_i}{H_i}$
    \EndFor
    \While{$\exists ~ i : f_i > 1$}
        \For{$i = 1 \dots k$}
            \If{$f_i > 1$}
                \State $f_i \leftarrow 1$
            \EndIf
        \EndFor
        \State $G_\mathrm{sum} \leftarrow 0$
        \State $H_\mathrm{sum} \leftarrow 0$
        \State $K_\mathrm{sum} \leftarrow 0$
        \For{$i = 1 \dots k$}
            \If{$f_i = 1$}
                \State $G_\mathrm{sum} \leftarrow G_\mathrm{sum} + G_i$
                \State $H_\mathrm{sum} \leftarrow H_\mathrm{sum} + H_i$
            \Else
                \State $K_\mathrm{sum} \leftarrow K_\mathrm{sum} + G_i \log_{2}\left(\frac{G_i}{H_i}\right)$
            \EndIf
        \EndFor
        \State $\beta \leftarrow \frac{M + c|\mathcal{S}|K_\mathrm{sum}}{c|\mathcal{S}|(1 - G_\mathrm{sum})}$
        \For{$i = 1 \dots k$}
            \If{$f_i < 1$}
                \State $f_i \leftarrow 2^{-\beta} \frac{G_i}{H_i}$
            \EndIf
        \EndFor
    \EndWhile
    \State \Return $\bm{f}$
    
\end{algorithmic}
\end{algorithm}

\begin{algorithm}[t]
    \caption{Fast PLBF conditioned by memory usage}
    \label{alg: Fast PLBF M}
\begin{algorithmic}
    \State {\bfseries Algorithm:}
    \State {$\textsc{MaxDivDP}(\bm{g}, \bm{h}, N, k)$ :}  \leftskip=1em
    \State {constructs $\mathrm{DP}_\mathrm{KL}^{N}$} \leftskip=2em
    \State {$\textsc{ThresMaxDiv}(\mathrm{DP}_\mathrm{KL}^{N}, j, k)$ :} \leftskip=1em
    \State {calculates the optimal thresholds by tracing the transitions backward from $\mathrm{DP}_\mathrm{KL}^{N}[j-1][k-1]$} \leftskip=2em
    \State {\textsc{OptimalFPR\_M}$(\bm{g},\bm{h},\bm{t},M,k)$ :} \leftskip=1em
    \State {returns the optimal FPRs for each region under the given thresholds and the memory usage} \leftskip=2em
    \State {\textsc{ExpectedFPR}$(\bm{g},\bm{h},\bm{t},\bm{f})$ :} \leftskip=1em
    \State {returns the expected overall false positive rate for the given thresholds and FPRs} \leftskip=2em
    \\

    \leftskip=0em
    \State $\mathrm{MinExpectedFPR} \leftarrow \infty$
    \State $\bm{t}_\mathrm{best} \leftarrow \mathrm{None}$
    \State $\bm{f}_\mathrm{best} \leftarrow \mathrm{None}$

    \State $\mathrm{DP}_\mathrm{KL}^{N} \leftarrow \textsc{MaxDivDP}(\bm{g}, \bm{h}, N, k)$
    
    \For{$j = k \dots N$}
        \State $\bm{t} \leftarrow \textsc{ThresMaxDiv}(\mathrm{DP}_\mathrm{KL}^{N}, j, k)$
        \State $\bm{f} \leftarrow \textsc{OptimalFPR\_M}(\bm{g},\bm{h},\bm{t},M,k)$
        \If{$\mathrm{MinExpectedFPR} > \textsc{ExpectedFPR}(\bm{g},\bm{h},\bm{t},\bm{f})$}
            \State $\mathrm{MinExpectedFPR} \leftarrow \textsc{ExpectedFPR}(\bm{g},\bm{h},\bm{t},\bm{f})$
            \State $\bm{t}_\mathrm{best} \leftarrow \bm{t}$
            \State $\bm{f}_\mathrm{best} \leftarrow \bm{f}$
        \EndIf
    \EndFor
    \State \Return $\bm{t}_\mathrm{best}, \bm{f}_\mathrm{best}$
\end{algorithmic}
\end{algorithm}

In this appendix, we describe the framework modifications we made to PLBF in our experiments.
In the original PLBF paper \cite{vaidya2021partitioned}, the optimization problem was designed to minimize the amount of memory usage under a given target false positive rate (\cref{equ: prob}).
However, this framework makes it difficult to compare the results of different methods and hyperparameters.
Therefore, we designed the following optimization problem, which minimizes the expected false positive rate under a given memory usage condition:
\begin{equation}
\label{equ: modified prob}
\begin{aligned}
& \underset{\bm{f}, \bm{t}} {\text{minimize}} && 
\sum_{i=1}^{k} H_i f_i \\
&\text{subject to} &&
\sum_{i=1}^{k} c|\mathcal{S}| G_i \log_{2}\left(\frac{1}{f_i}\right) \leq M \\
& && t_0 = 0 < t_1 < t_2 < \dots < t_k = 1 \\
& && f_i \leq 1 ~~~~~~ (i=1\dots k),
\end{aligned}
\end{equation}
where $M$ is a parameter that is set by the user to determine the upper bound of memory usage and is set by the user.

Analyzing as in \cref{app: SolutionOptProb}, we find that in order to find the optimal thresholds $\bm{t}$, we need to find a way to cluster the $1$st to $(j-1)$-th segments into $k-1$ regions that maximizes $\sum_{i=1}^{k-1} G_i \log_{2}\left(\frac{G_i}{H_i}\right)$ for each $j=k \dots N$.
We can compute this using the same DP algorithm as in the original framework.

Also similar to \cref{app: SolutionOptProb}, introducing $\mathcal{I}_{f=1}$ and $\mathcal{I}_{f<1}$ for analysis, it follows that the optimal false positive rates $\Bar{\bm{f}}$ is
\begin{equation}
\label{equ: fi conditioned by M}
\Bar{f}_i = 2^{-\beta} \frac{G_i}{H_i} ~~~~~~ (i\in\mathcal{I}_{f<1}),
\end{equation}
where
\begin{equation}
\label{equ: beta}
\beta = \frac{M + c|\mathcal{S}|\sum_{i\in\mathcal{I}_{f<1}} G_i \log_{2} \left(\frac{G_i}{H_i}\right)}{c|\mathcal{S}|\left(1 - G_{f=1}\right)}.
\end{equation}

In the original framework, the sets $\mathcal{I}_{f=1}$ and $\mathcal{I}_{f<1}$ are obtained by repeatedly solving the relaxed problem and setting $f_i$ to 1 for regions where $f_i>1$ (for details of this algorithm, please refer to the Section 3.3.3 in the original PLBF paper \cite{vaidya2021partitioned}).
In our framework, we can find the sets $\mathcal{I}_{f=1}$ and $\mathcal{I}_{f<1}$ in the same way.

The pseudo-code for \textsc{OptimalFPR\_F} in the original PLBF paper \cite{vaidya2021partitioned} and  \textsc{OptimalFPR\_M}, which is conditioned on the memory usage $M$, is shown in \cref{alg: OptimalFPR_F} and \cref{alg: OptimalFPR_M}.
For both functions, the best-case complexity is $\mathcal{O}(k)$, and the worst-case complexity is $\mathcal{O}(k^2)$.

Also, the fast PLBF algorithm for the case conditioned by memory usage is shown in \cref{alg: Fast PLBF M}.
It is basically the same as the original PLBF \cref{alg: FastPLBF}, but instead of using the \textsc{SpaceUsed} function, the \textsc{ExpectedFPR} function is used here.
Then, the algorithm selects the case when the expected overall false positive rate calculated using the training data is minimized.

\section{Additional ablation study for hyperparameters}
\label{app: Additional ablation study for hyperparameters}

\begin{figure*}[t]
    \centering
    \begin{minipage}{\columnwidth}
        \centering
        \includegraphics[width=0.85\columnwidth]{fig/legend_time_fpr.pdf}
    \end{minipage}
    \vspace{-1.0em}

    \subfigure[Malicious URLs Dataset.]{
        \label{fig: Murl_NK_EFPR}
        \includegraphics[width=0.75\columnwidth]{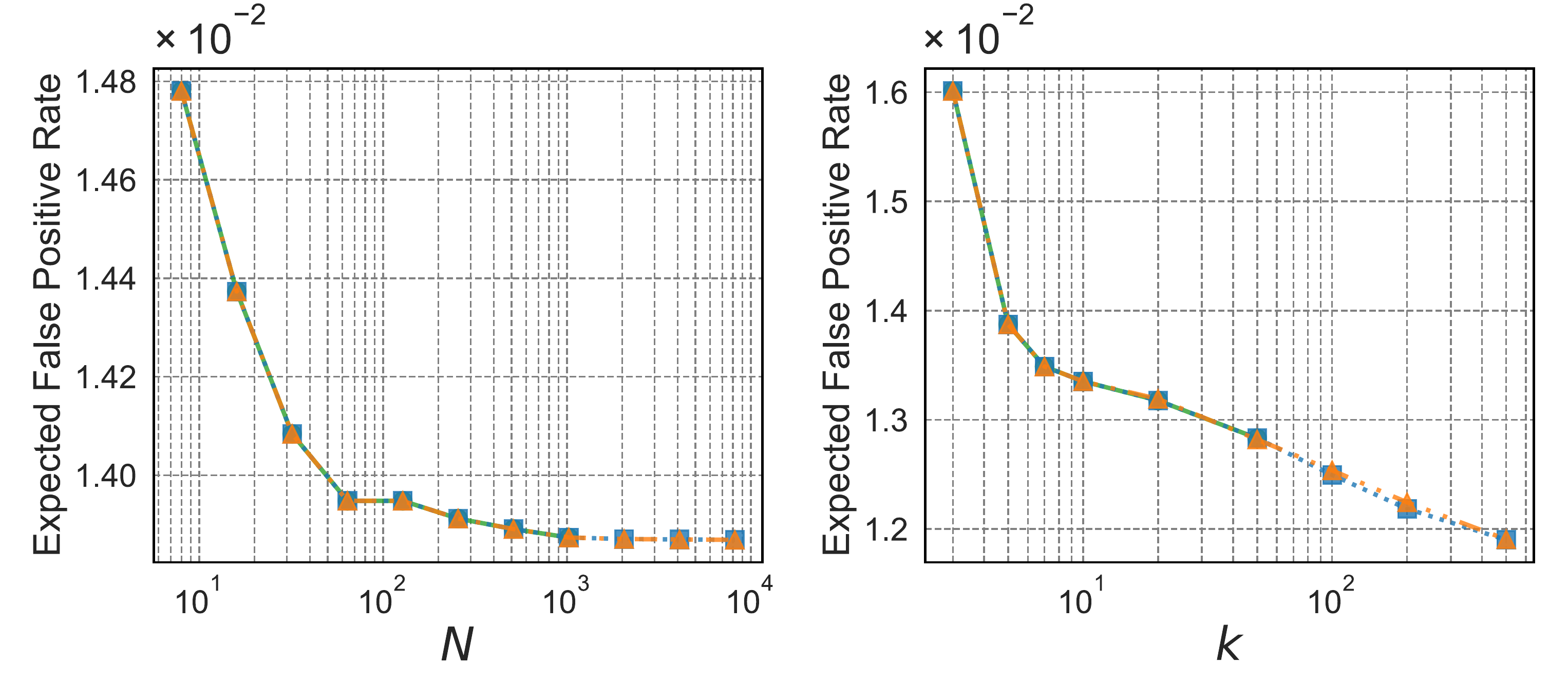}
    }
    \subfigure[EMBER Dataset.]{
        \label{fig: Malw_NK_EFPR}
        \includegraphics[width=0.75\columnwidth]{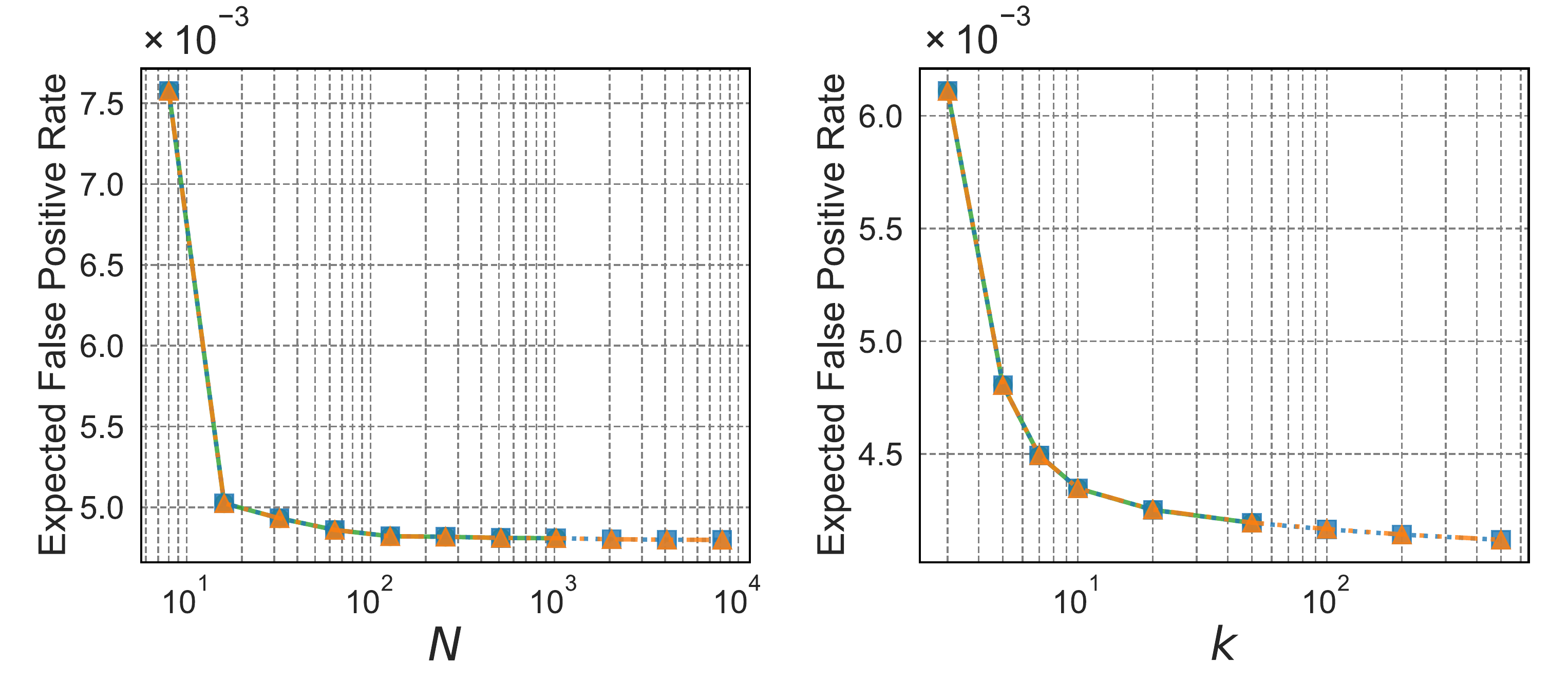}
    }
    
    \caption{Expected false positive rate computed with training data at various $N$ or $k$. The memory usage of the backup bloom filter is 500 Kb. The ablation studies for $N$ are done with $k$ fixed to 5 (left figures). The ablation studies for $k$ are done with $N$ fixed to 1,000 (right figures).}
    \label{fig: expected fpr varying n and k}
\end{figure*}

\begin{figure*}[t]
    \centering
    \begin{minipage}{\columnwidth}
        \centering
        \includegraphics[width=0.85\columnwidth]{fig/legend_time_fpr.pdf}
    \end{minipage}

    \subfigure[Malicious URLs Dataset]{
        \label{fig: Murl_time_fpr_kn}
        \includegraphics[width=0.8\columnwidth]{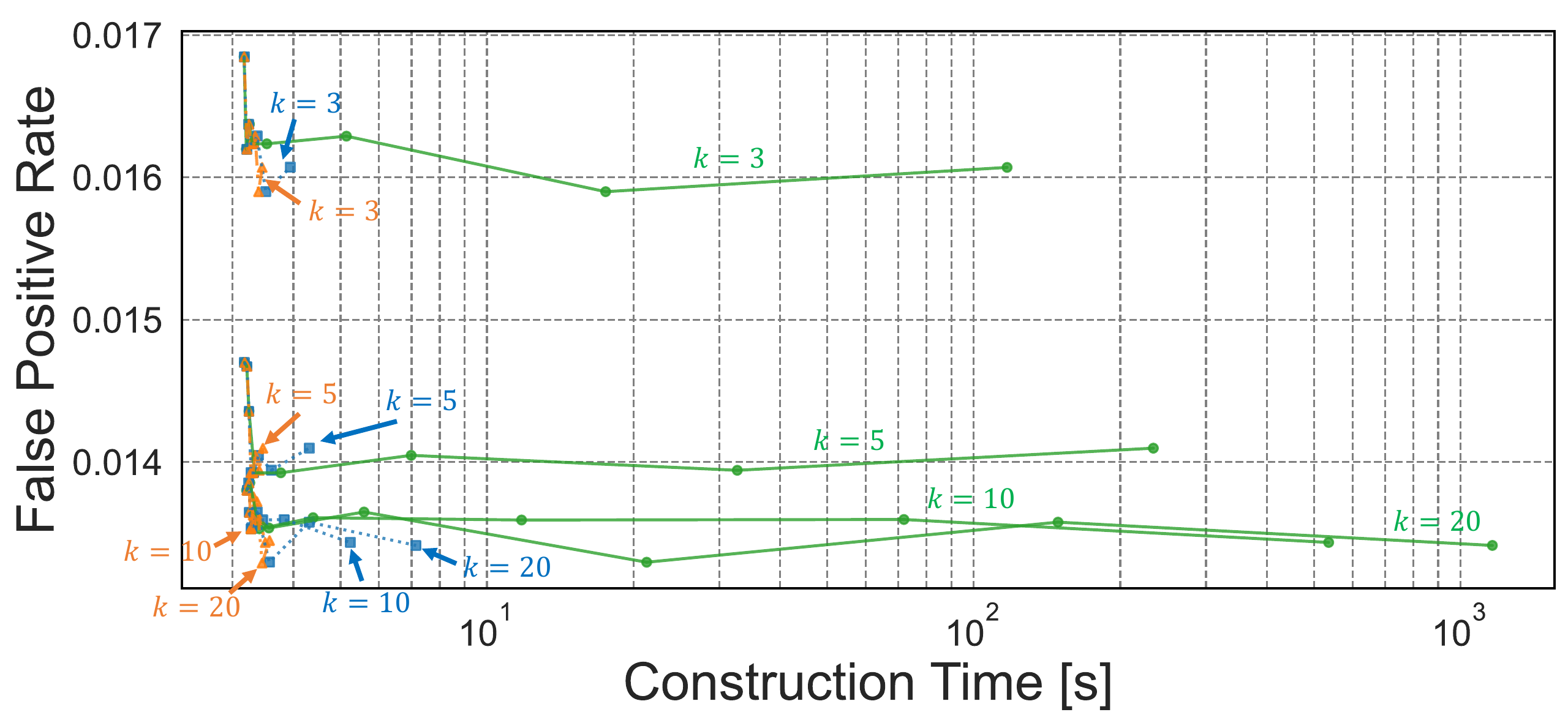}
    }
    \subfigure[EMBER Dataset]{
        \label{fig: Malw_time_fpr_kn}
        \includegraphics[width=0.8\columnwidth]{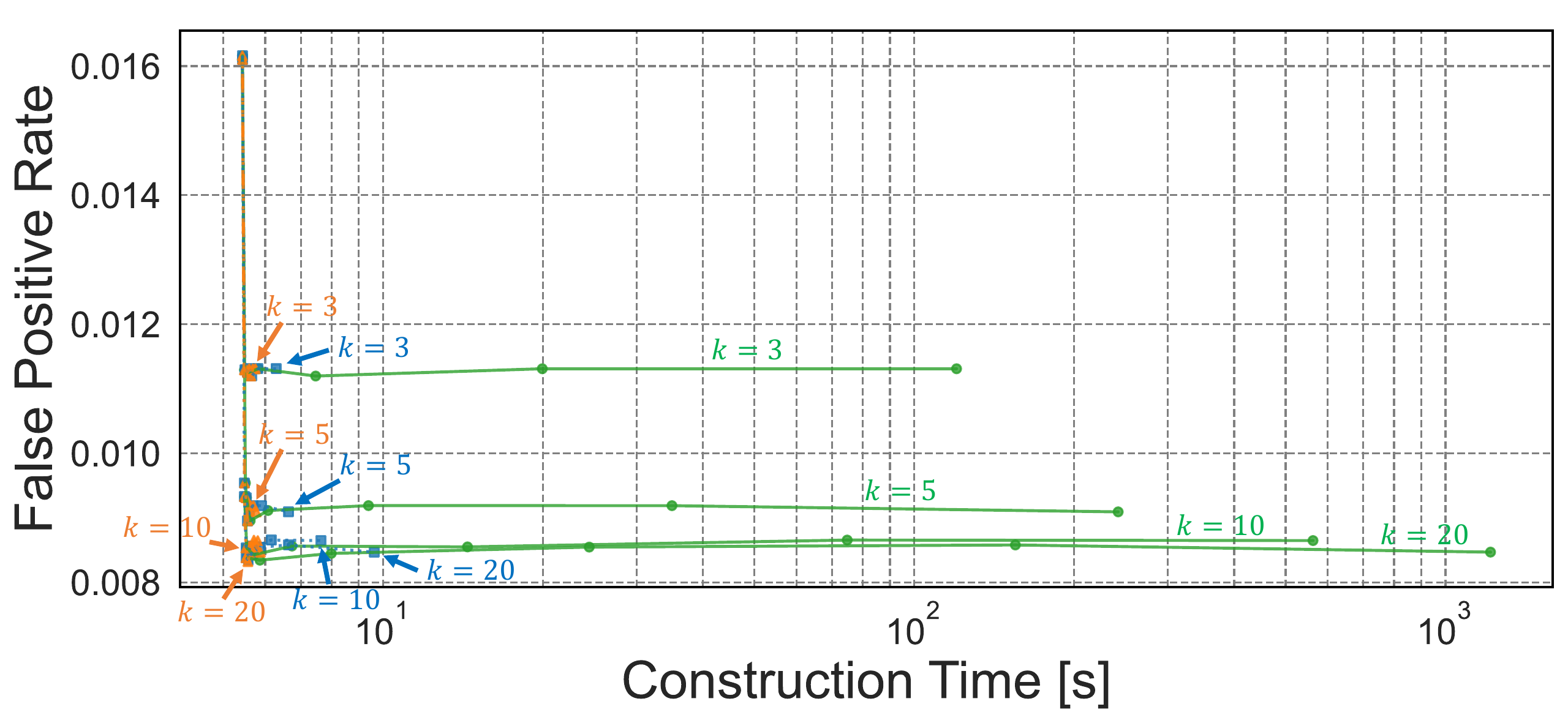}
    }
    \caption{Construction time and FPR for various $N$ and $k$. The curves connect the same $k$ and different $N$ results. The memory usage of the backup bloom filter is 500 Kb.}
    \label{fig: time_fpr_kn}
\end{figure*}

\begin{figure*}
    \subfigure[Malicious URLs Dataset]{
        \label{fig: Murl_fpr_heatmap}
        \includegraphics[width=0.45\columnwidth]{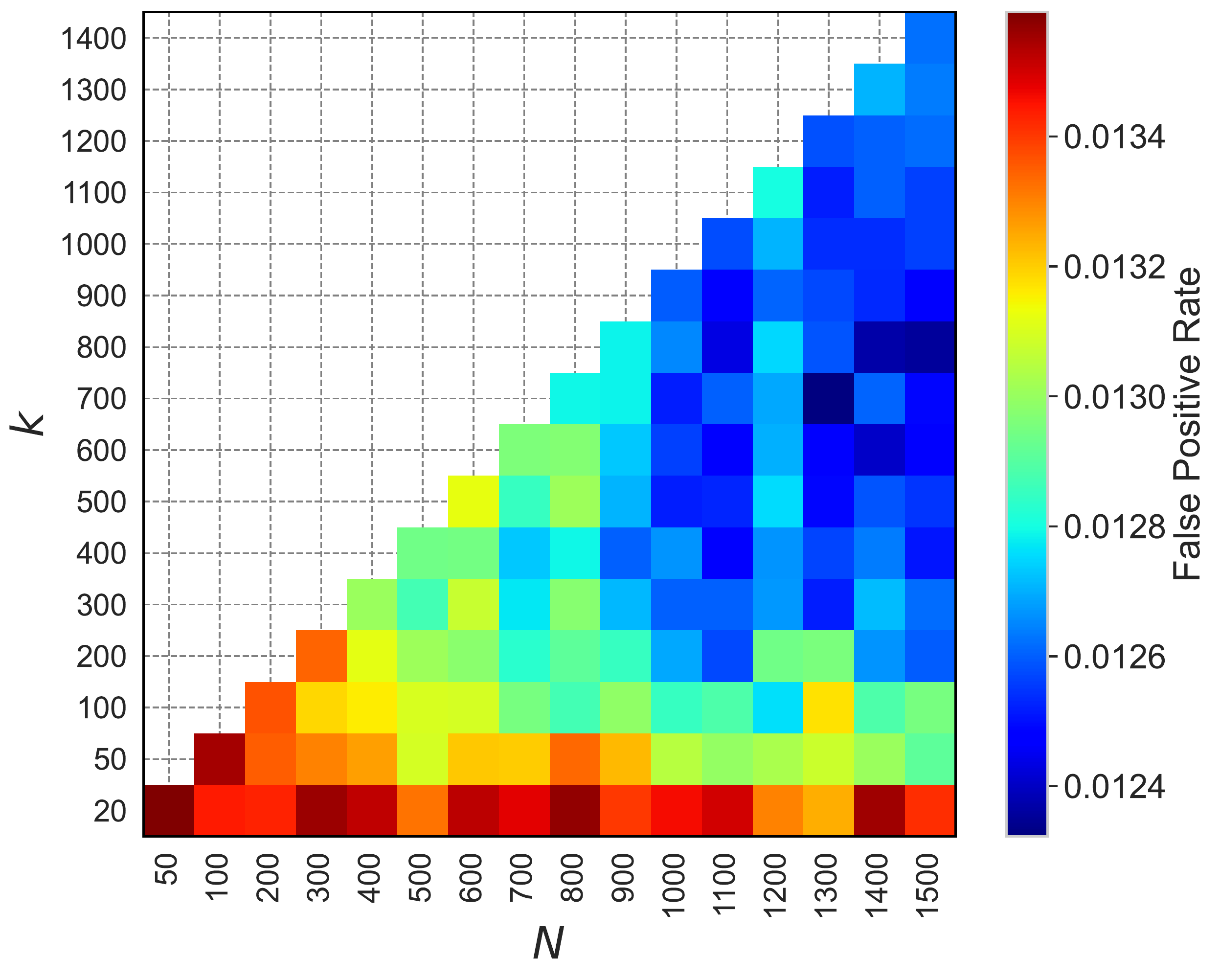}
    }
    \subfigure[EMBER Dataset]{
        \label{fig: Malw_fpr_heatmap}
        \includegraphics[width=0.45\columnwidth]{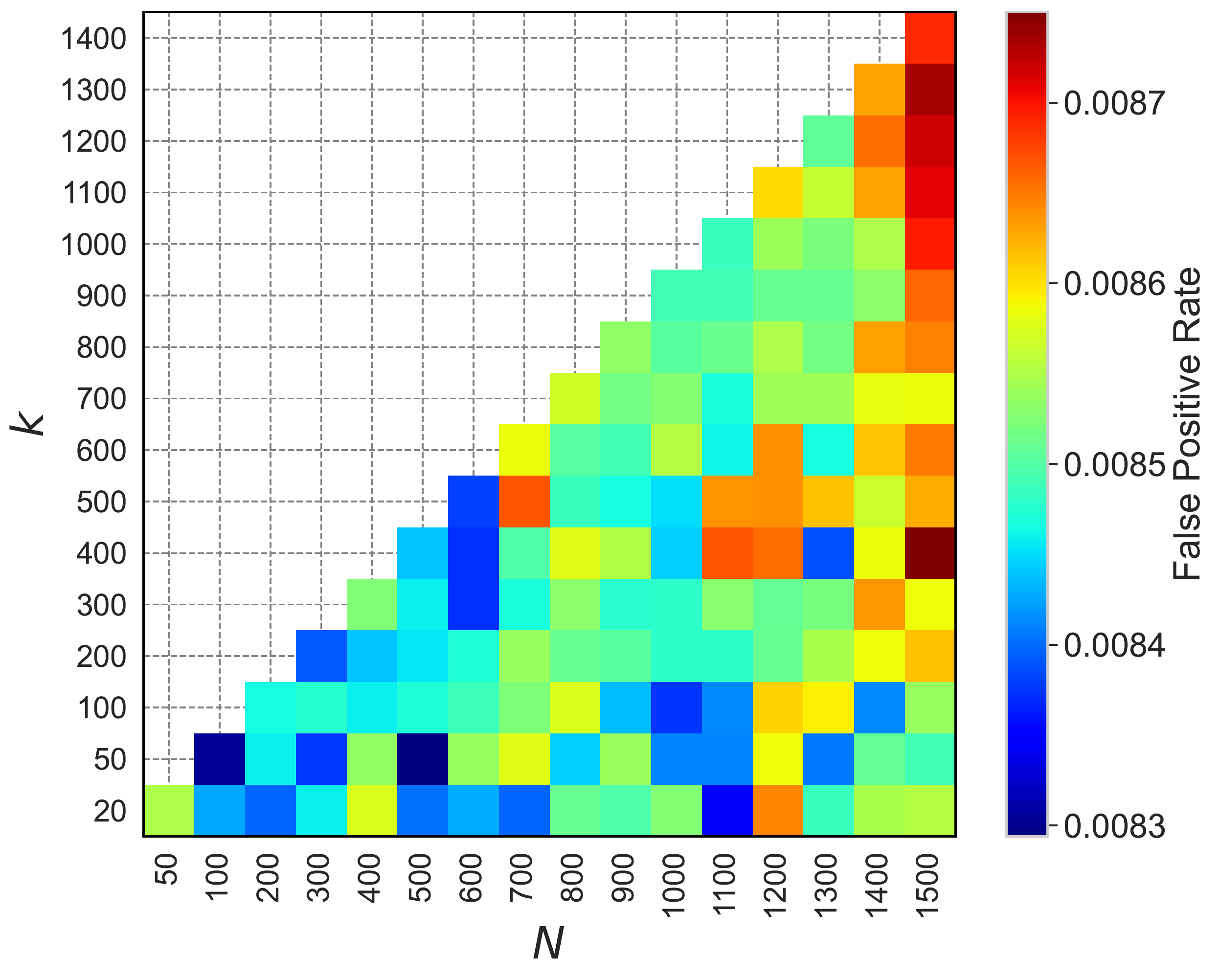}
    }
    \caption{FPR for $N=50,100,200,300,\dots,1400$ and $k=20,50,100,200,300,\dots,1500$ with $k<N$.}
    \label{fig: fpr_heatmap}
\end{figure*}

This appendix details the ablation studies for the hyperparameters $N$ and $k$.
In this appendix, we discuss the results of the ablation studies that were not presented in Section 5.3.
The results confirm that the expected false positive rate decreases monotonically as the hyperparameters $N$ and $k$ increase.
We also confirm that the proposed methods are superior to PLBF in terms of hyperparameter determination.

First, we confirm that the expected false positive rate decreases monotonically as $N$ or $k$ increases.
The expected false positive rate is computed using training data (objective function in \cref{equ: modified prob}).
\cref{fig: expected fpr varying n and k} shows the expected false positive rate at various $N$ and $k$.
The ablation study for $N$ is done with $k$ fixed to 5. 
The ablation study for $k$ is done with $N$ fixed to 1,000.

The expected false positive rate decreases monotonically as $N$ or $k$ increases for all datasets and methods.
However, as observed in Section 5.3, the false positive rate at test time does not necessarily decrease monotonically as $N$ or $k$ increases.
Also, the construction time increases as $N$ or $k$ increases.
Therefore, it is necessary to set appropriate $N$ and $k$ when practically using PLBFs.

Therefore, we next performed a comprehensive experiment at various $N$ and $k$.
\cref{fig: time_fpr_kn} shows the construction time and false positive rate for each method when the memory usage of the backup bloom filter is fixed at 500 Kb, $N$ is set to $8, 16, \dots, 1024$, and $k$ is set to $3, 5, 10,$ and $20$. 
The curves connect the same $k$ and different $N$ results.

The results show that the proposed methods are superior to PLBF because it is easier to set hyperparameters to construct accurate data structures in a shorter time.
PLBF can achieve the same level of accuracy with the same construction time as fast PLBF and fast PLBF++ if the hyperparameters $N$ and $k$ are set well.
However, it is not apparent how $N$ and $k$ should be set when applying the algorithm to unseen data.
If $N$ or $k$ is set too large, the construction time will be very long, and if $N$ or $k$ is set too small, the false positive rate will be large.
The proposed methods can construct the data structure quickly even if the hyperparameters $N$ and $k$ are too large.
Therefore, by using the proposed methods instead of PLBF and setting $N$ and $k$ relatively large, it is possible to quickly and reliably construct a data structure with reasonable accuracy.

Furthermore, we performed an ablation study using fast PLBF to obtain guidelines for setting the hyperparameters $N$ and $k$ (\cref{fig: fpr_heatmap}).
We observed false positive rates for $N=20,50,100,200,300,\dots,1500$ and $k=50,100,200,300,\dots,1400$ with $k<N$.
For the Malicious URLs Dataset, the false positive rate was lowest when $N=1300$ and $k=700$ (in this case, the fast PLBF construction is 333 times faster than PLBF).
And for the Ember dataset, the false positive rate was lowest when $N=500$ and $k=50$ (in this case, the fast PLBF construction is 45 times faster than PLBF).
The Malicious URLs Dataset showed an almost monotonous decrease in the false positive rate as $N$ and $k$ increased, while the Ember Dataset showed almost no such trend.
This is thought to be due to a kind of overlearning that occurs as $N$ and $k$ are increased in the Ember Dataset.
Although a method for determining the appropriate $N$ and $k$ has not yet been established, we suggest setting $N\sim1000$ and $k\sim100$.
This is because the false positive rate is considered to be small enough at this level, and the proposed method can be constructed in a few tens of seconds.
In any case, the proposed method is useful because it is faster than the original PLBF for any hyperparameters $N$ and $k$.

\section{Experiments on PLBF solution to relaxed problem}
\label{app: Experiments on PLBF solution to relaxed problem}

\begin{figure*}[t]
    \centering
    \begin{minipage}{\columnwidth}
        \centering
        \includegraphics[width=\columnwidth]{fig/legend_hist.pdf}
    \end{minipage}

    \subfigure[Malicious URLs Dataset]{
        \label{fig: Murl_time_hist_relaxed}
        \includegraphics[width=0.8\columnwidth]{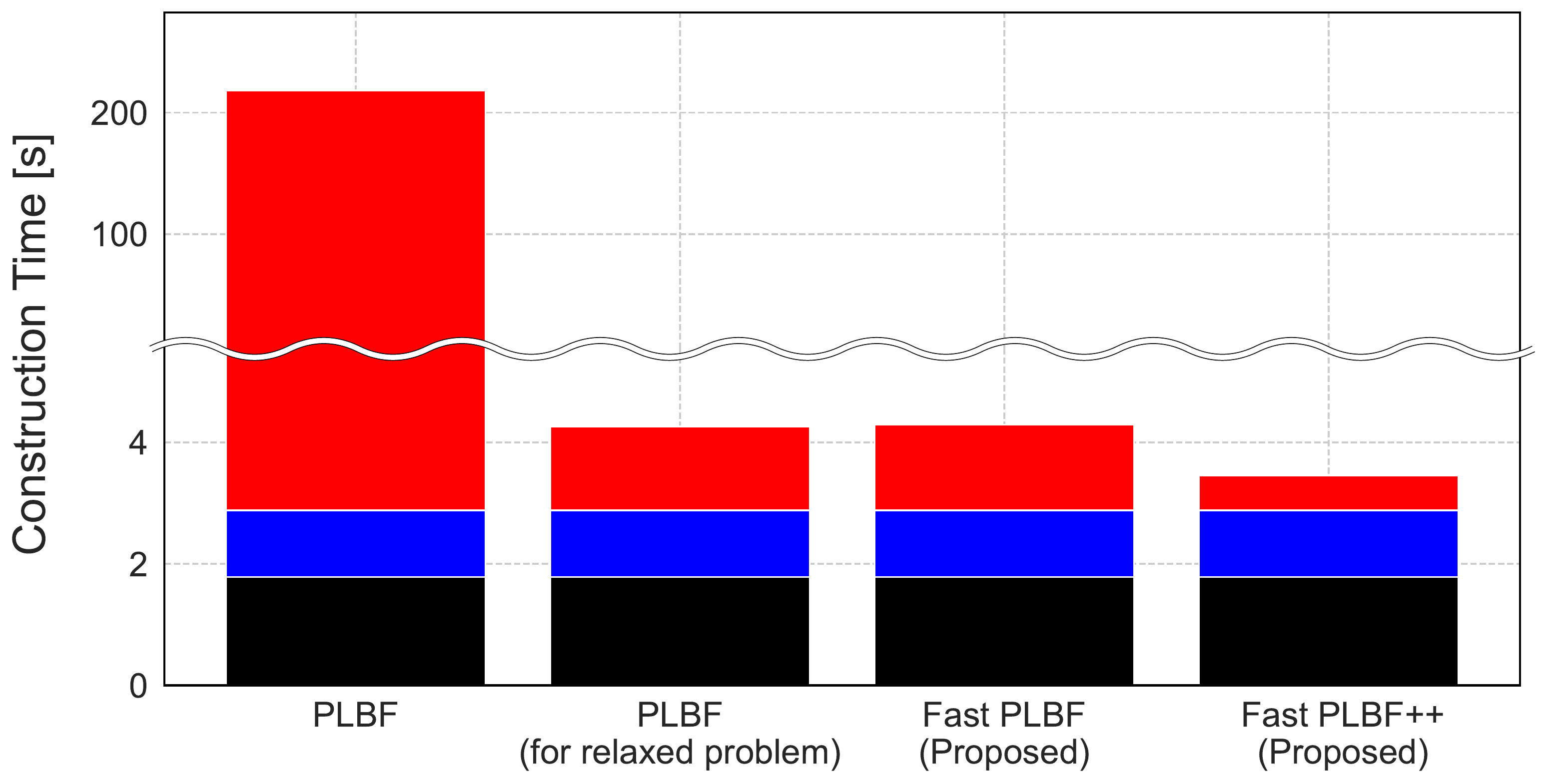}
    }
    \subfigure[EMBER Dataset]{
        \label{fig: Malw_time_hist_relaxed}
        \includegraphics[width=0.8\columnwidth]{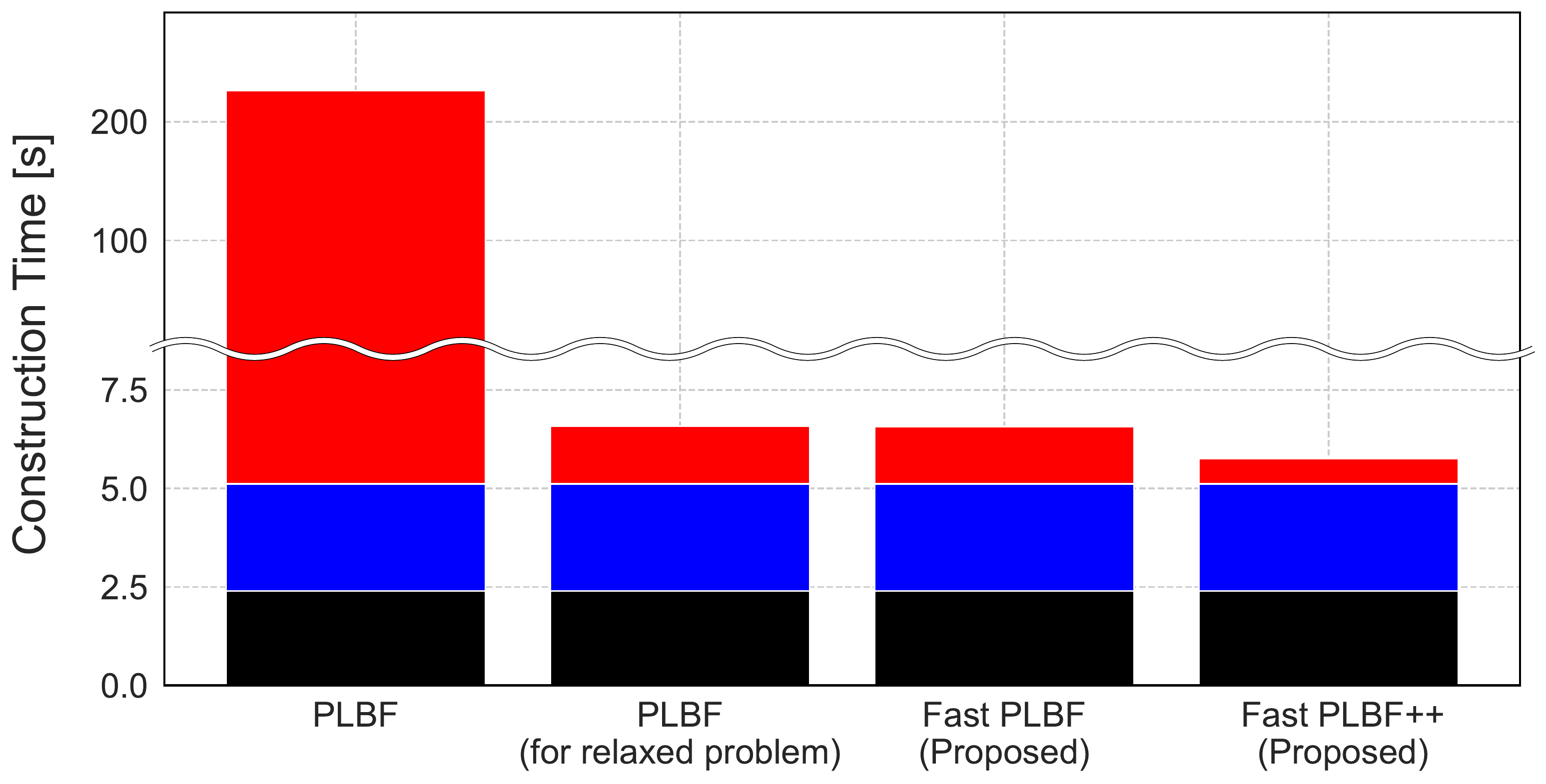}
    }
    \caption{Construction time. The memory usage of the backup bloom filter is 500 Kb, and the hyperparameters for PLBFs are $N=1,000$ and $k=5$.}
    \label{fig: time_hist_relaxed}
\end{figure*}

\begin{figure*}[t]
    \centering
    \begin{minipage}{\columnwidth}
        \centering
        \includegraphics[width=\columnwidth]{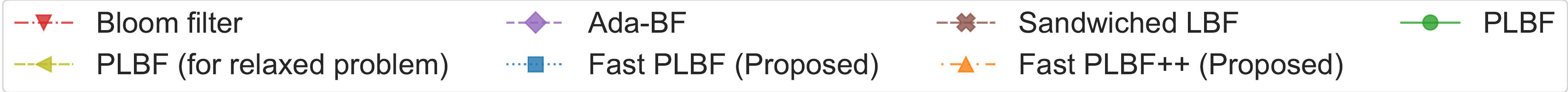}
    \end{minipage}
    
    \subfigure[Malicious URLs Dataset]{
        \label{fig: Murl_Memory_FPR_relaxed}
        \includegraphics[width=0.9\columnwidth]{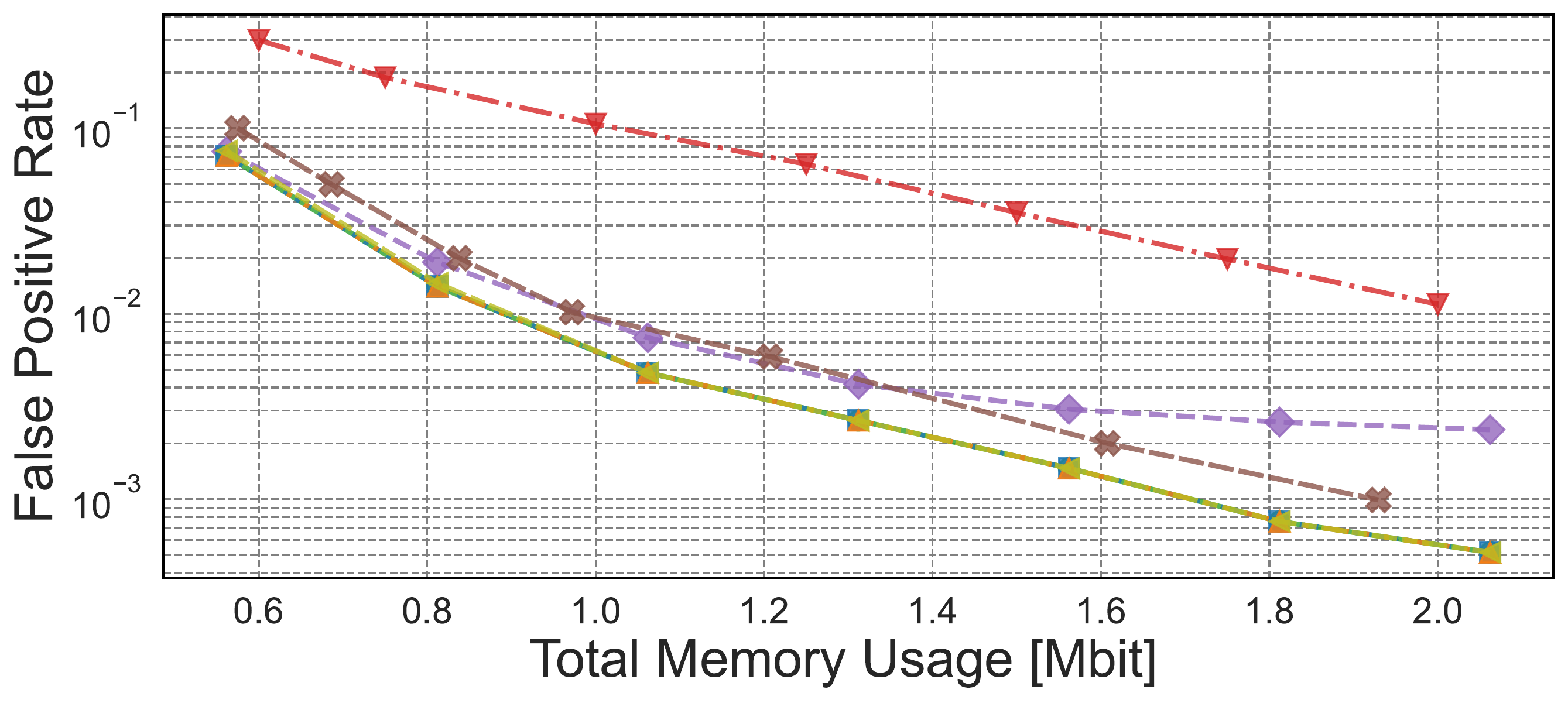}
    }
    \subfigure[EMBER Dataset]{
        \label{fig: Malw_Memory_FPR_relaxed}
        \includegraphics[width=0.9\columnwidth]{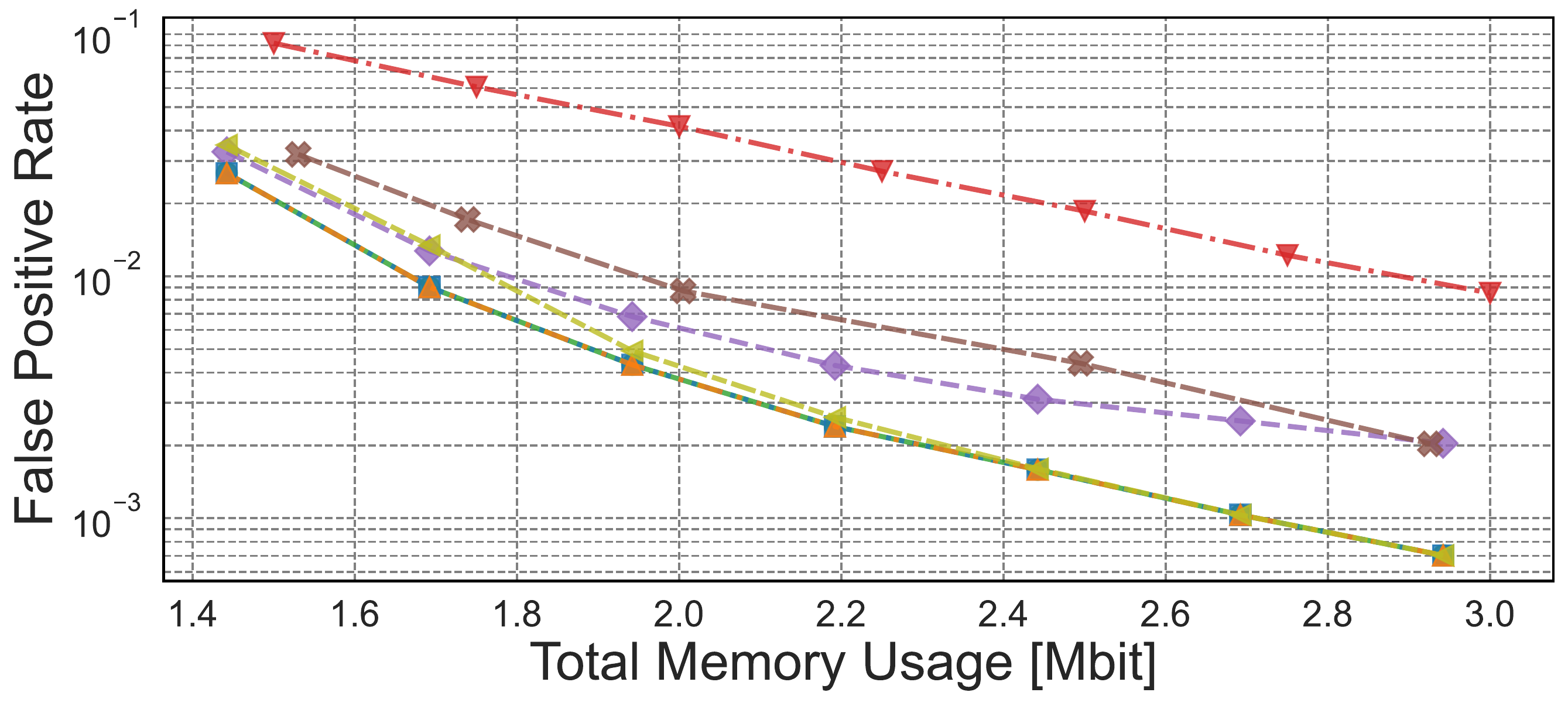}
    }
    \caption{Trade-off between memory usage and FPR. The hyperparameters for PLBFs are $N=1,000$ and $k=5$.}
    \label{fig: memory_fpr_relaxed}
\end{figure*}

The PLBF paper \cite{vaidya2021partitioned} proposes two methods: a simple method and a complete method.
In this appendix, we conducted experiments on the simple method and evaluated the difference in accuracy compared to the complete method.
The results revealed that, in certain cases, this method exhibited significantly lower accuracy than the complete method.
It was then confirmed that fast PLBF and fast PLBF++ are superior to the simple method in terms of the trade-off between construction time and accuracy.

The simpler method, described in section 3.3.3 of \cite{vaidya2021partitioned}, does not consider the condition that $f_i \leq 1 ~ (i=1 \dots k)$ when finding the optimal $\bm{t}$.
In other words, this method solves the relaxed problem that does not consider the condition $f_i \leq 1 ~ (i=1 \dots k)$.
The complete method uses the first method repeatedly as a subroutine to solve the general problem and find the optimal $\bm{t}$.
Here, the time complexity of the method for the relaxed problem is $\mathcal{O}(N^2k)$, and that of the method for the general problem is $\mathcal{O}(N^3k)$.

We experimented with the simpler method.
\cref{fig: time_hist_relaxed} shows the construction time (the hyperparameters for PLBFs were set to $N=1,000$ and $k=5$).
The PLBF for the relaxed problem has about the same construction time as our proposed fast PLBF.
This is because the time complexity for constructing the two methods is $\mathcal{O}(N^2k)$ for both.
\cref{fig: memory_fpr_relaxed} shows the trade-off between memory usage and FPR.
On the URL dataset, the method for the relaxed problem is almost as accurate as the method for the general problem.
At most, it has a false positive rate of only 1.06 times greater.
On the other hand, on the EMBER dataset, the method for the relaxed problem is significantly less accurate than the method for the general problem, especially when memory usage is small.
It has up to 1.48 times higher false positive rate than the complete method.

In summary, the method for the relaxed problem can construct the data structure faster than the method for the general problem, but the accuracy can be significantly worse, especially when memory usage is small.
Our proposed fast PLBF has the same accuracy as the complete PLBF in almost the same construction time as the method for the relaxed problem.
Our fast PLBF++ can construct the data structure in even less time than the method for the relaxed problem, and the accuracy degradation is relatively small.

\section{Experiments with artificial datasets}
\label{app: Experiments with artificial datasets}

\begin{figure*}[t]
    \centering
    \subfigure[$0$ swaps (score distribution is ideally monotonic)]{
        \label{fig: Arti_0_0_log_hist}
        \includegraphics[width=0.47\columnwidth]{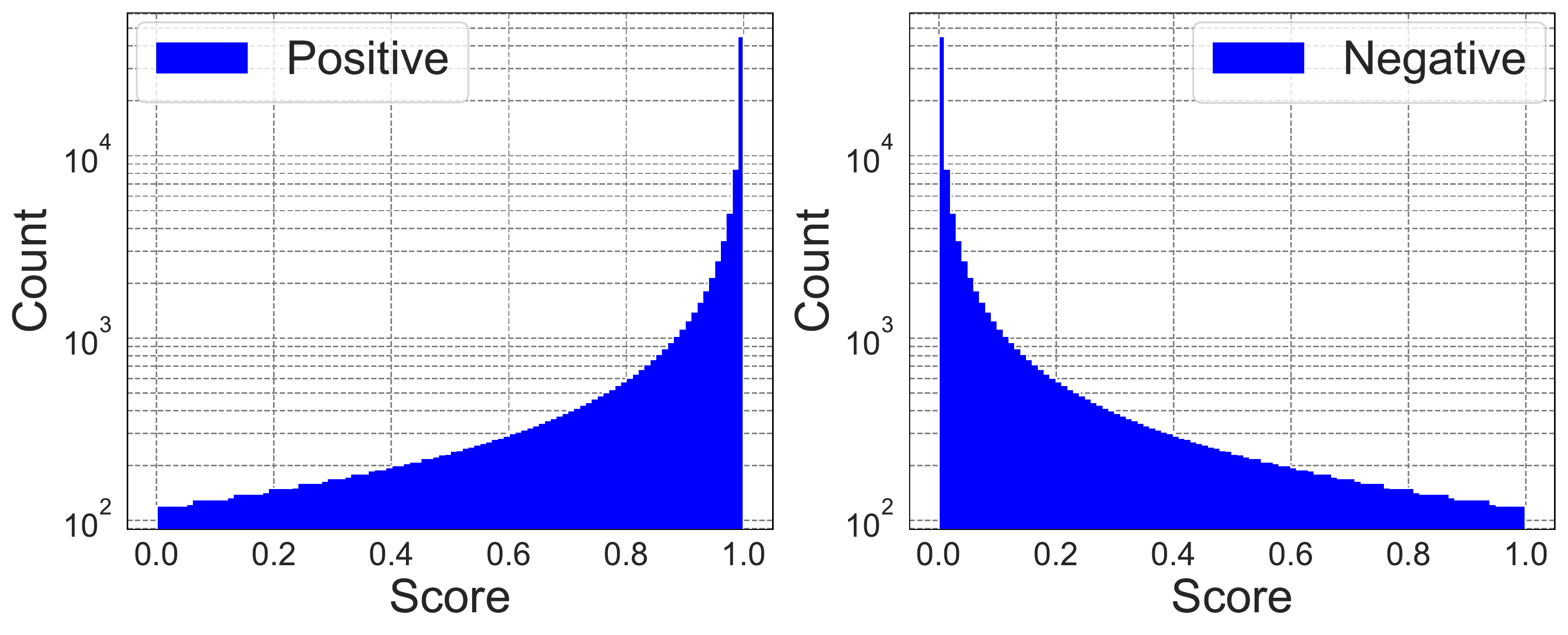}
    }
    \subfigure[$10$ swaps]{
        \label{fig: Arti_10_0_log_hist}
        \includegraphics[width=0.47\columnwidth]{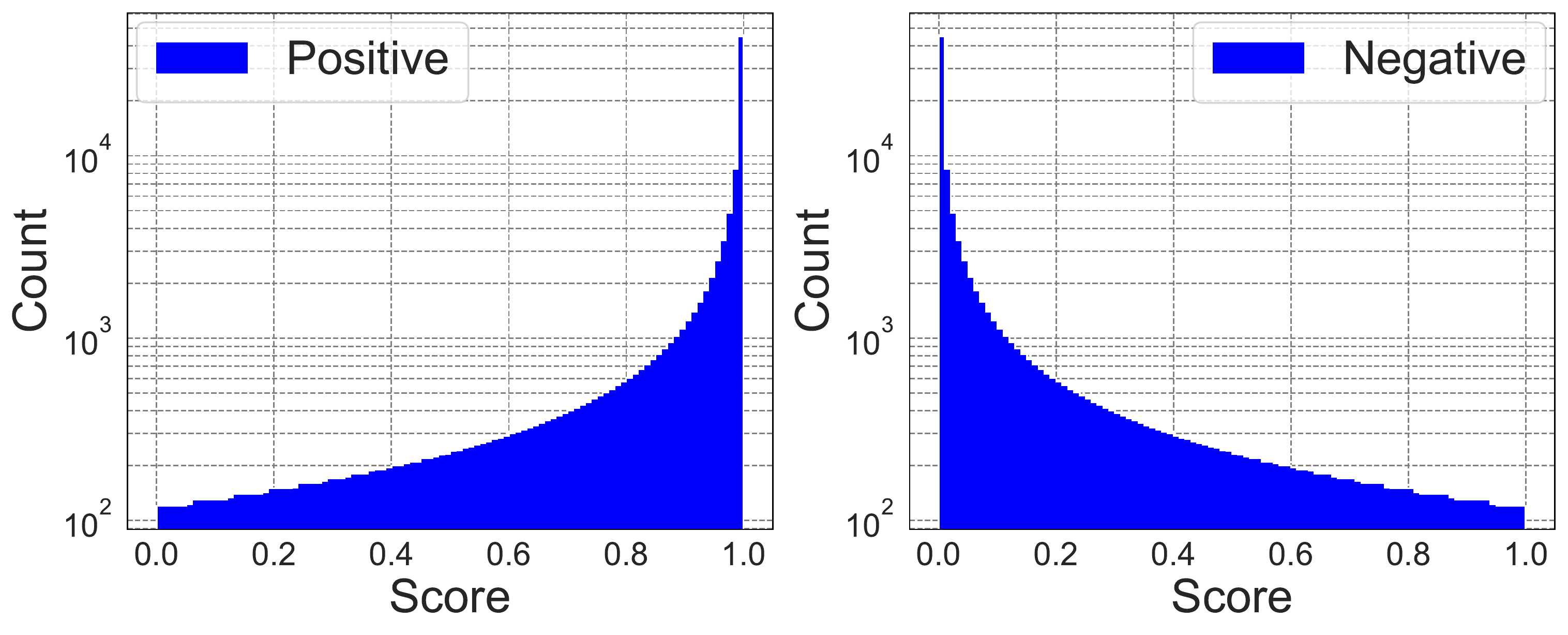}
    }
    
    \subfigure[$10^2$ swaps]{
        \label{fig: Arti_100_0_log_hist}
        \includegraphics[width=0.47\columnwidth]{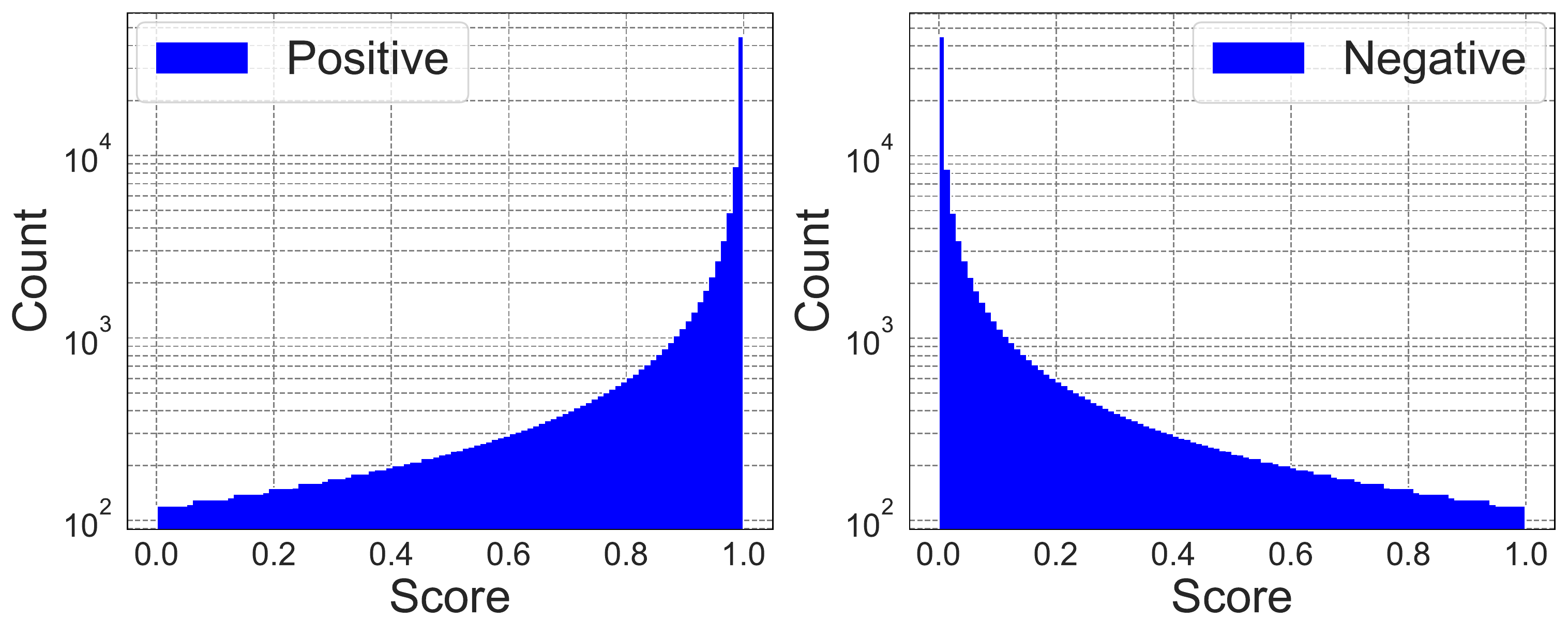}
    }
    \subfigure[$10^3$ swaps]{
        \label{fig: Arti_1000_0_log_hist}
        \includegraphics[width=0.47\columnwidth]{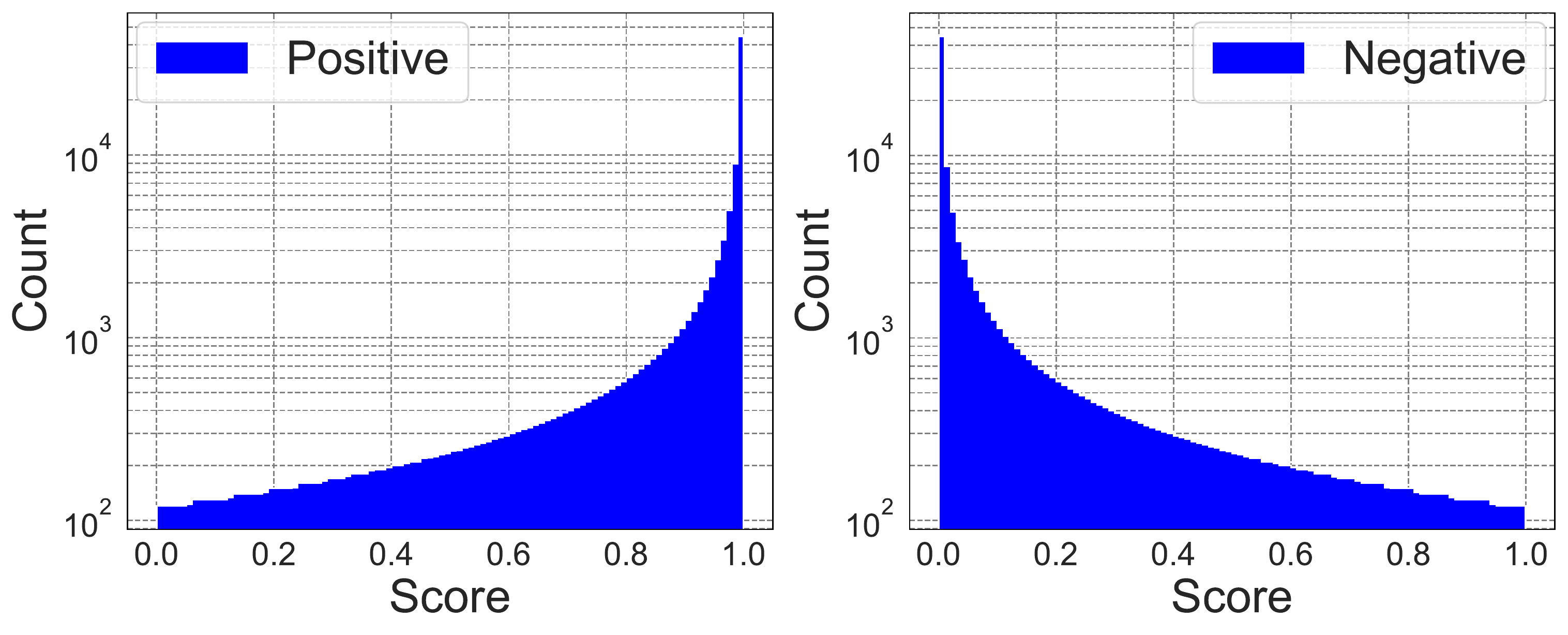}
    }
    
    \subfigure[$10^4$ swaps]{
        \label{fig: Arti_10000_0_log_hist}
        \includegraphics[width=0.47\columnwidth]{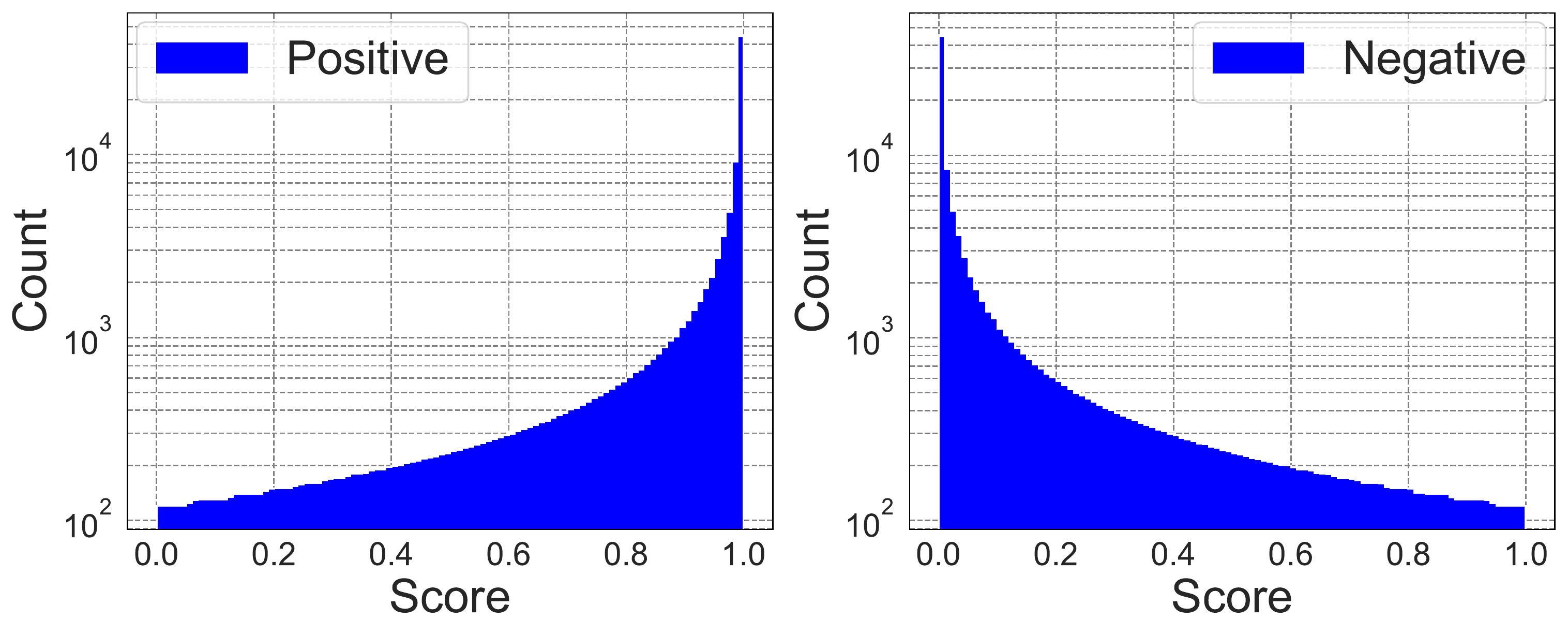}
    }
    \subfigure[$10^5$ swaps]{
        \label{fig: Arti_100000_0_log_hist}
        \includegraphics[width=0.47\columnwidth]{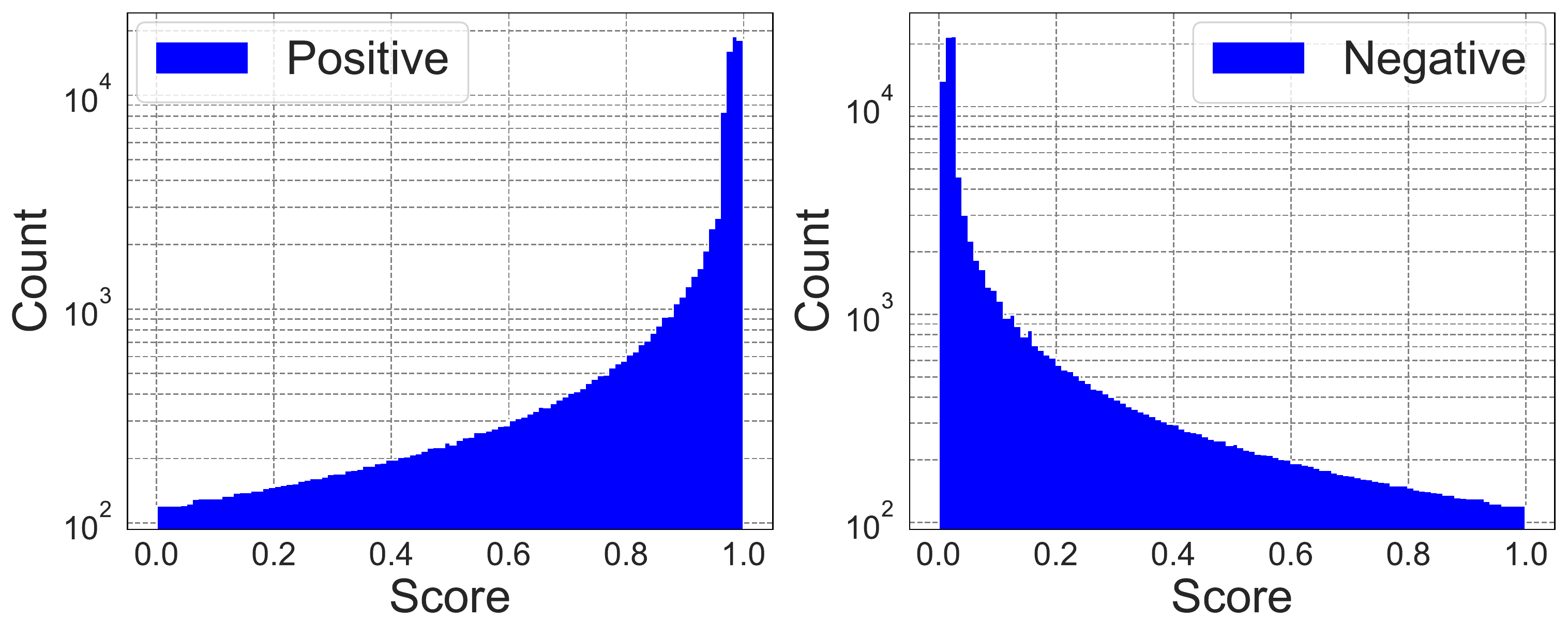}
    }
    
    \subfigure[$10^6$ swaps]{
        \label{fig: Arti_1000000_0_log_hist}
        \includegraphics[width=0.47\columnwidth]{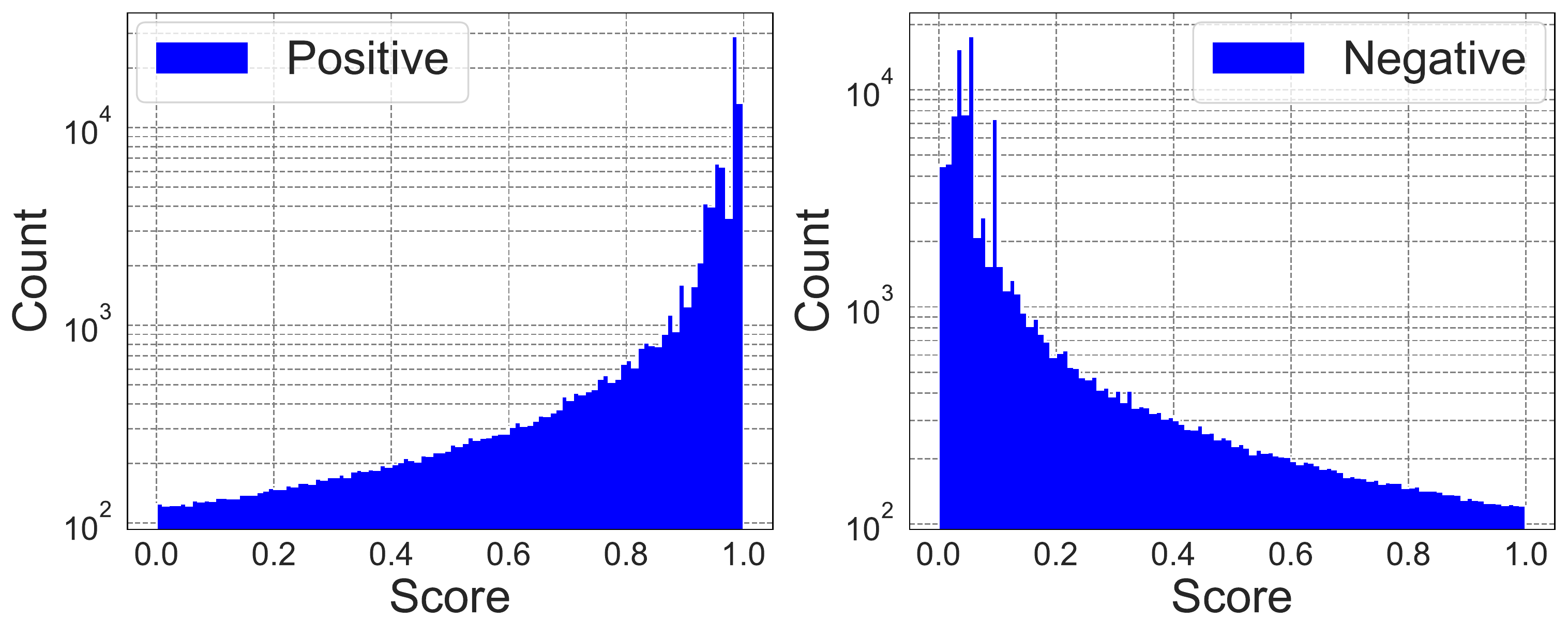}
    }
    \subfigure[$10^7$ swaps]{
        \label{fig: Arti_10000000_0_log_hist}
        \includegraphics[width=0.47\columnwidth]{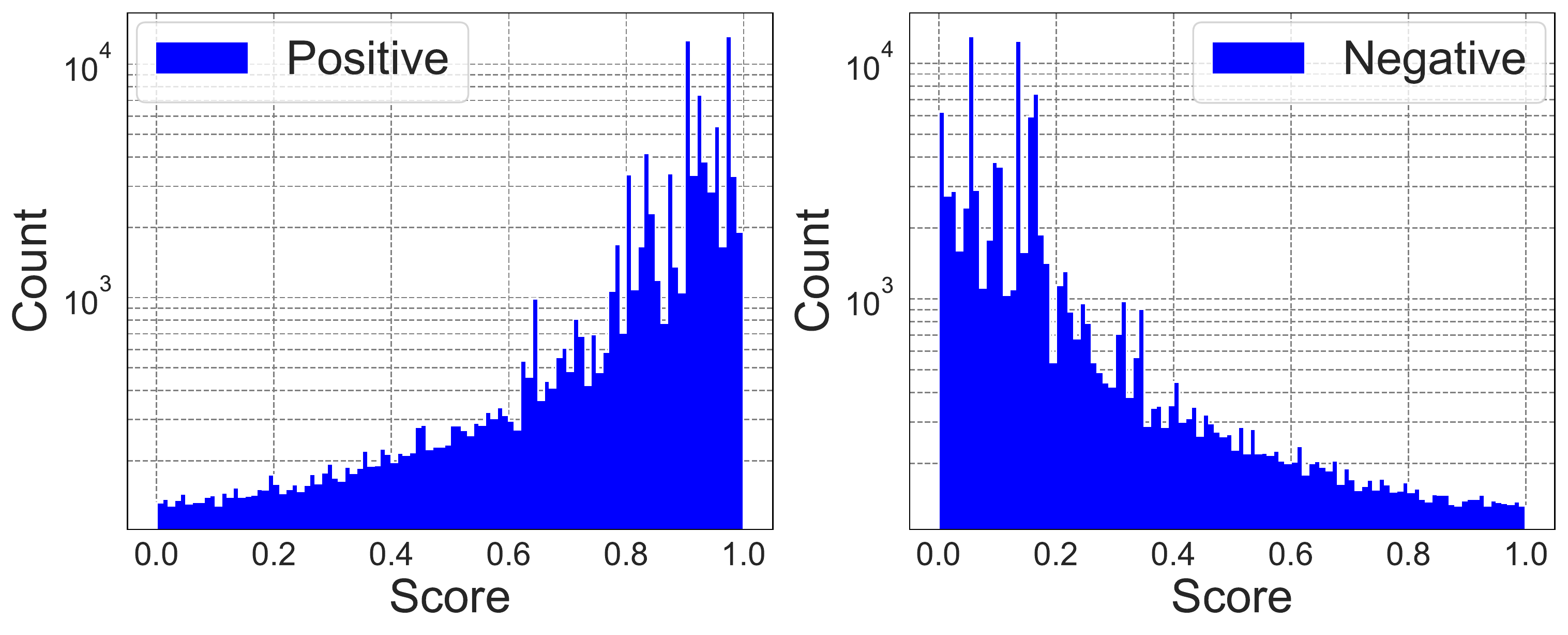}
    }
    
    \subfigure[$10^8$ swaps]{
        \label{fig: Arti_100000000_0_log_hist}
        \includegraphics[width=0.47\columnwidth]{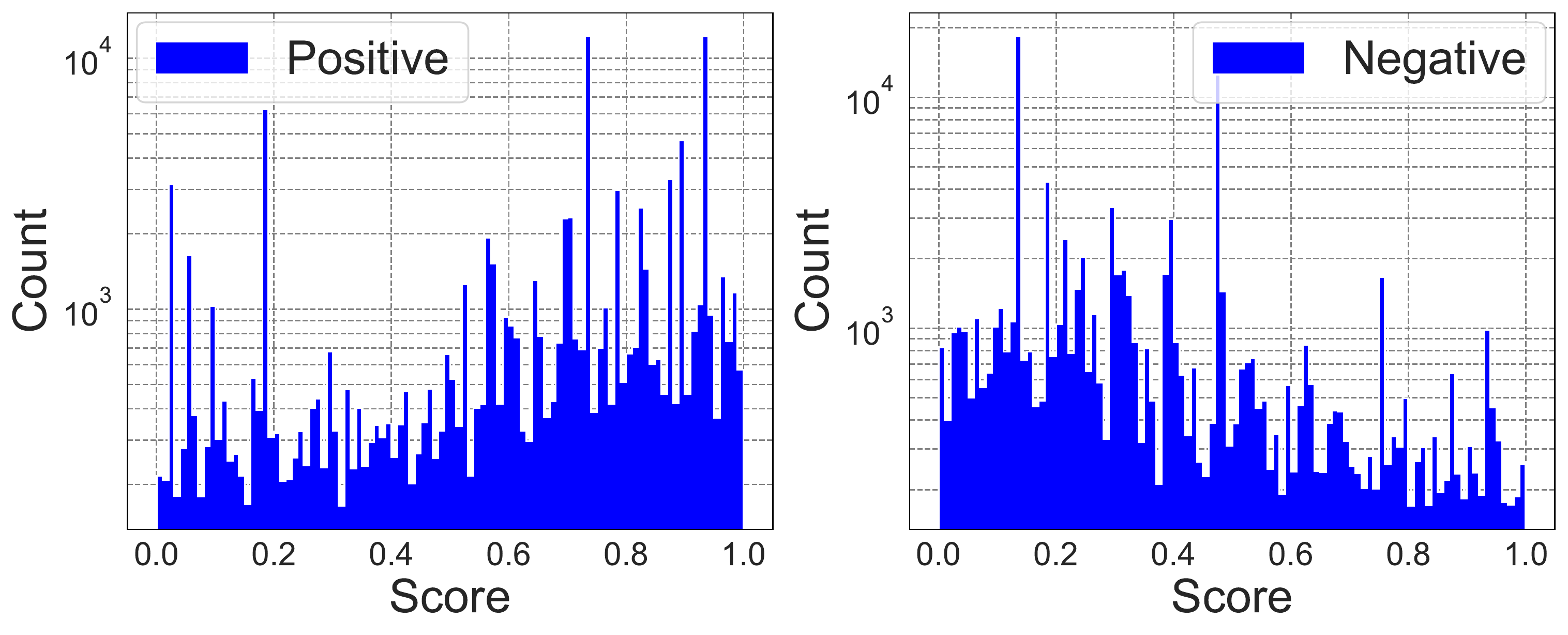}
    }
    
    \caption{Score distribution histograms of artificial datasets. All figures are with the seed set to 0.}
    \label{fig: Artificial_log_hist}
\end{figure*}

\begin{figure*}[t]
    \centering
    \subfigure[$0$ swaps (score distribution is ideally monotonic)]{
        \label{fig: Arti_0_0_pos_neg_ratio}
        \includegraphics[width=0.47\columnwidth]{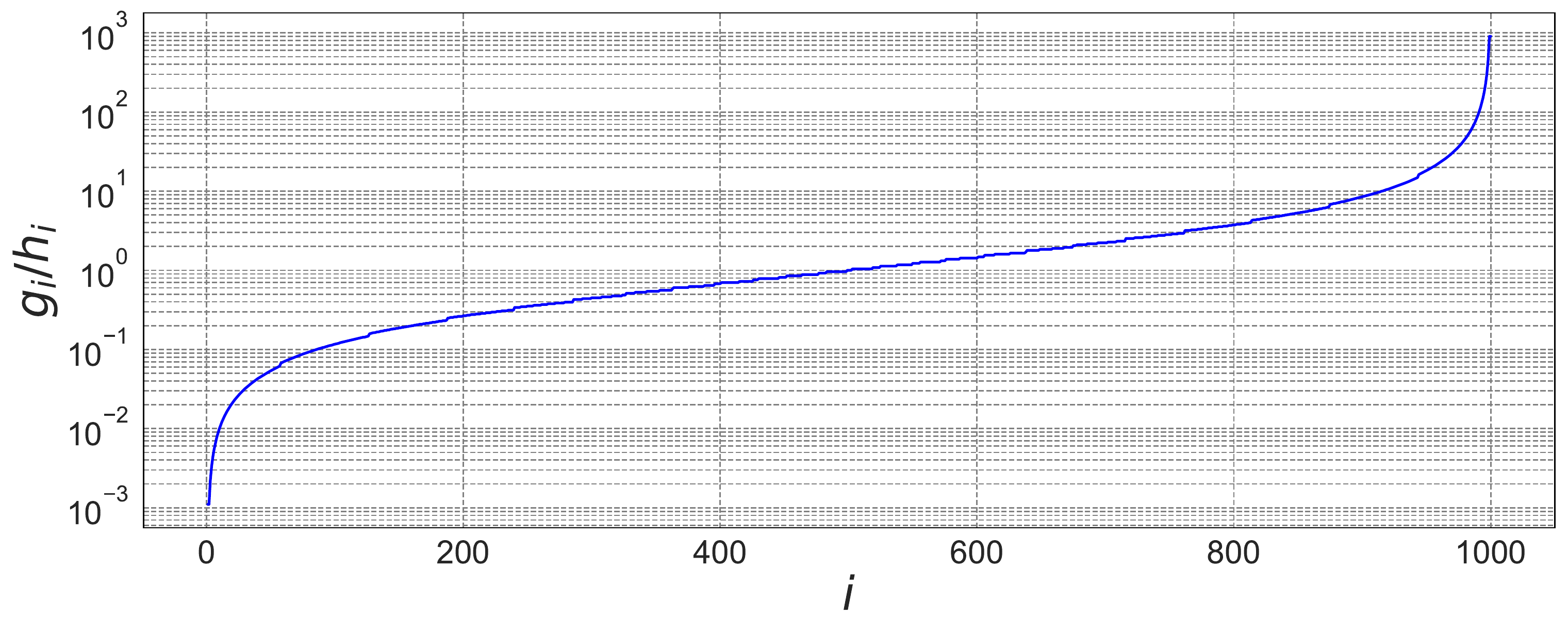}
    }
    \subfigure[$10$ swaps]{
        \label{fig: Arti_10_0_pos_neg_ratio}
        \includegraphics[width=0.47\columnwidth]{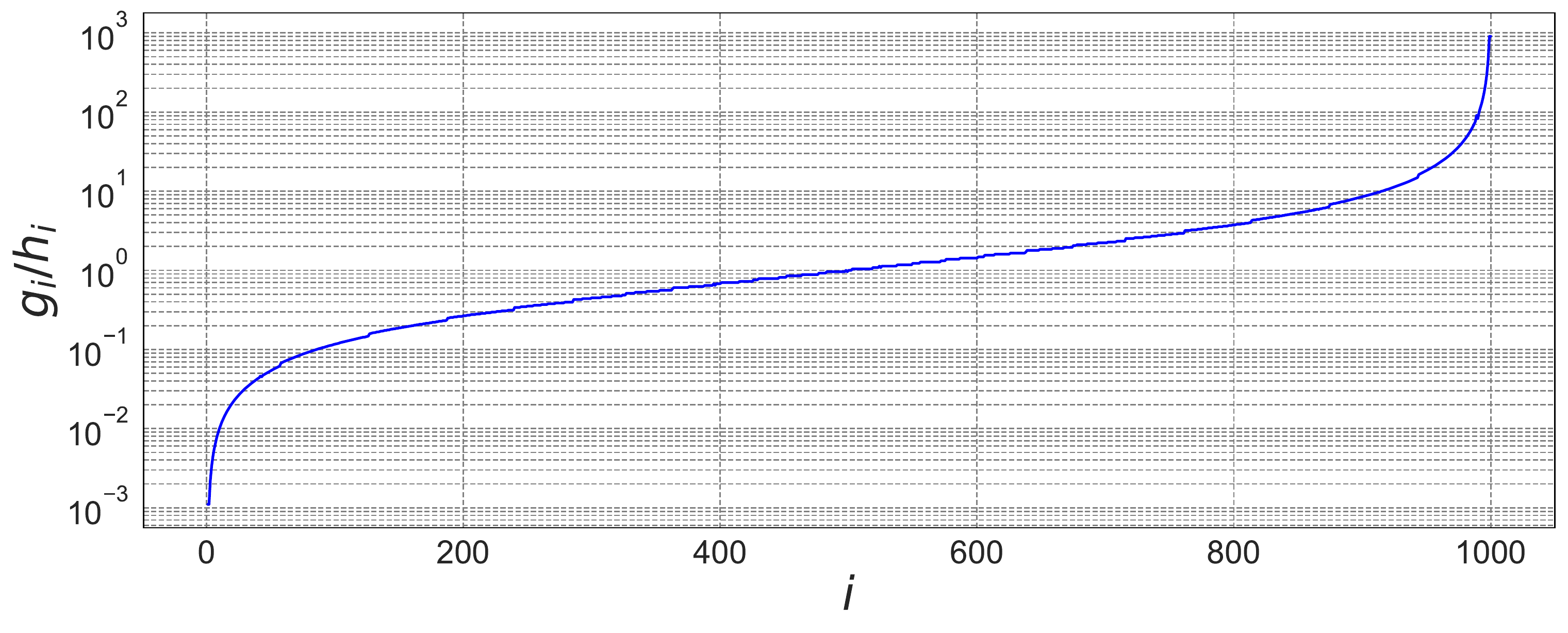}
    }
    
    \subfigure[$10^2$ swaps]{
        \label{fig: Arti_100_0_pos_neg_ratio}
        \includegraphics[width=0.47\columnwidth]{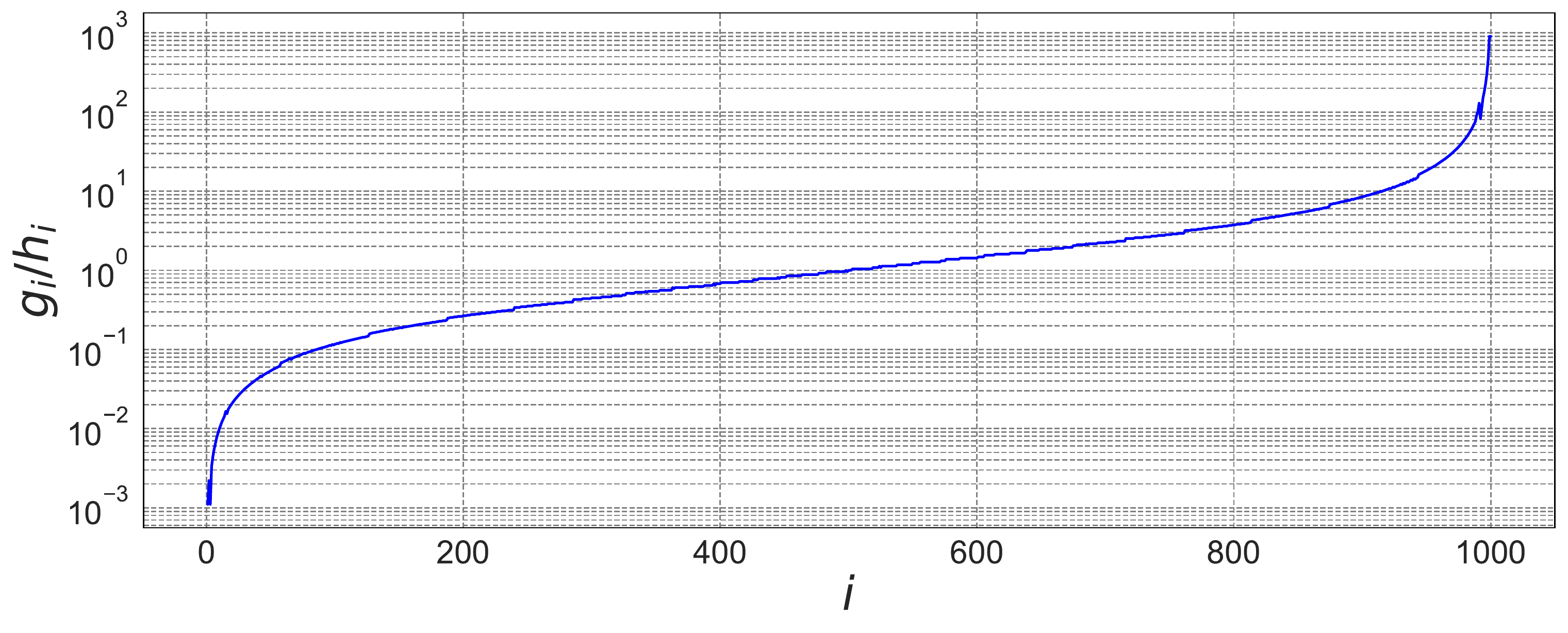}
    }
    \subfigure[$10^3$ swaps]{
        \label{fig: Arti_1000_0_pos_neg_ratio}
        \includegraphics[width=0.47\columnwidth]{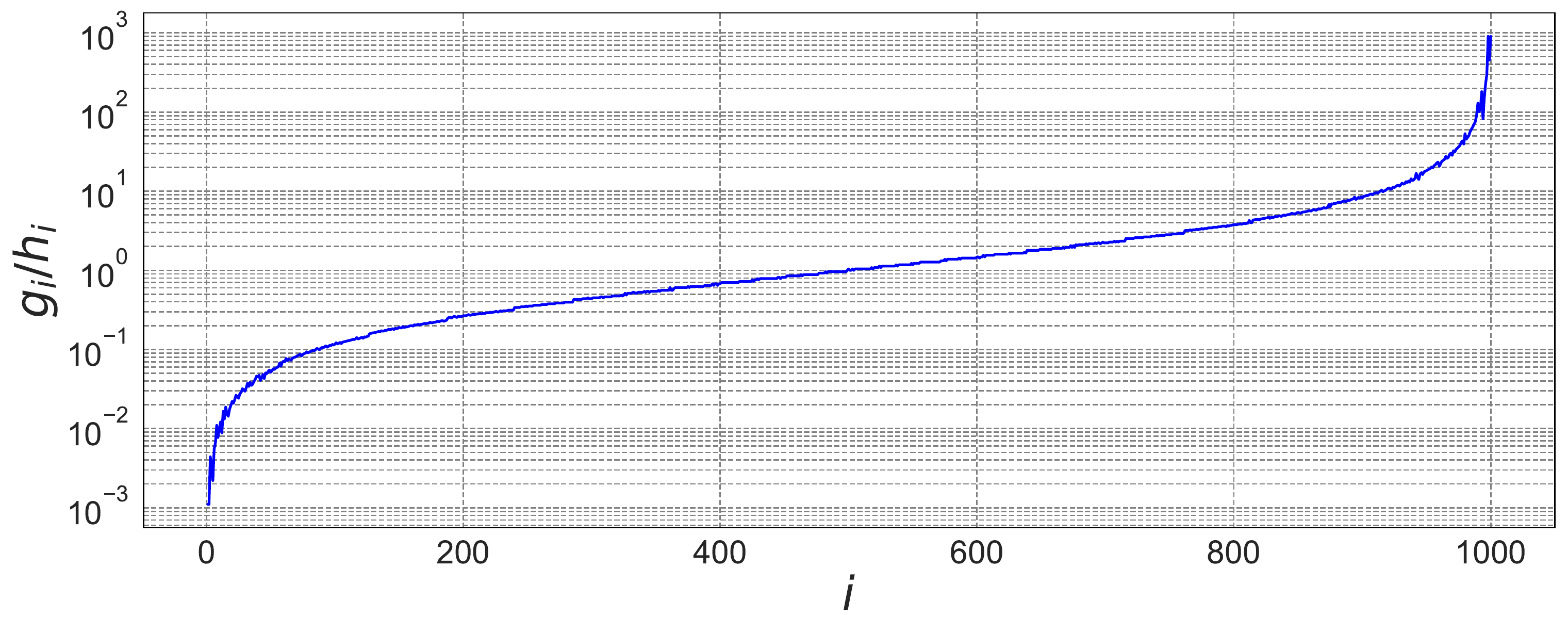}
    }
    
    \subfigure[$10^4$ swaps]{
        \label{fig: Arti_10000_0_pos_neg_ratio}
        \includegraphics[width=0.47\columnwidth]{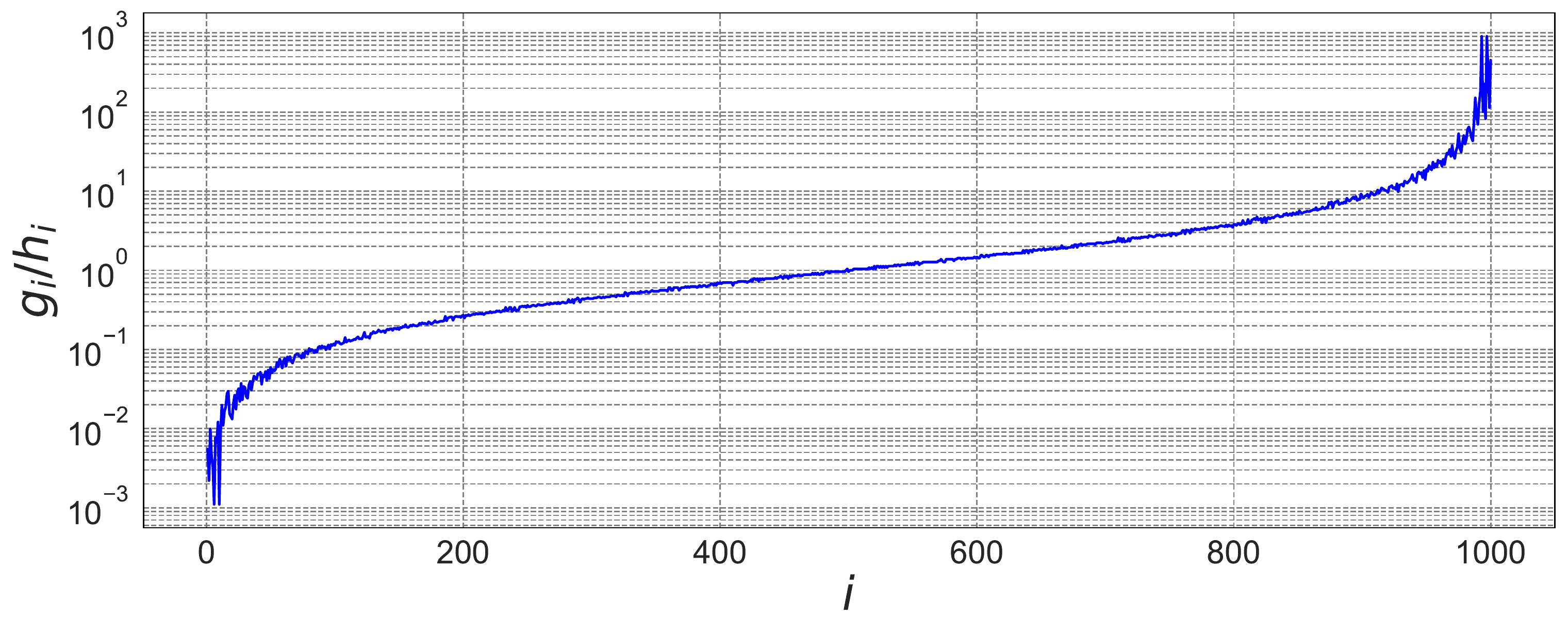}
    }
    \subfigure[$10^5$ swaps]{
        \label{fig: Arti_100000_0_pos_neg_ratio}
        \includegraphics[width=0.47\columnwidth]{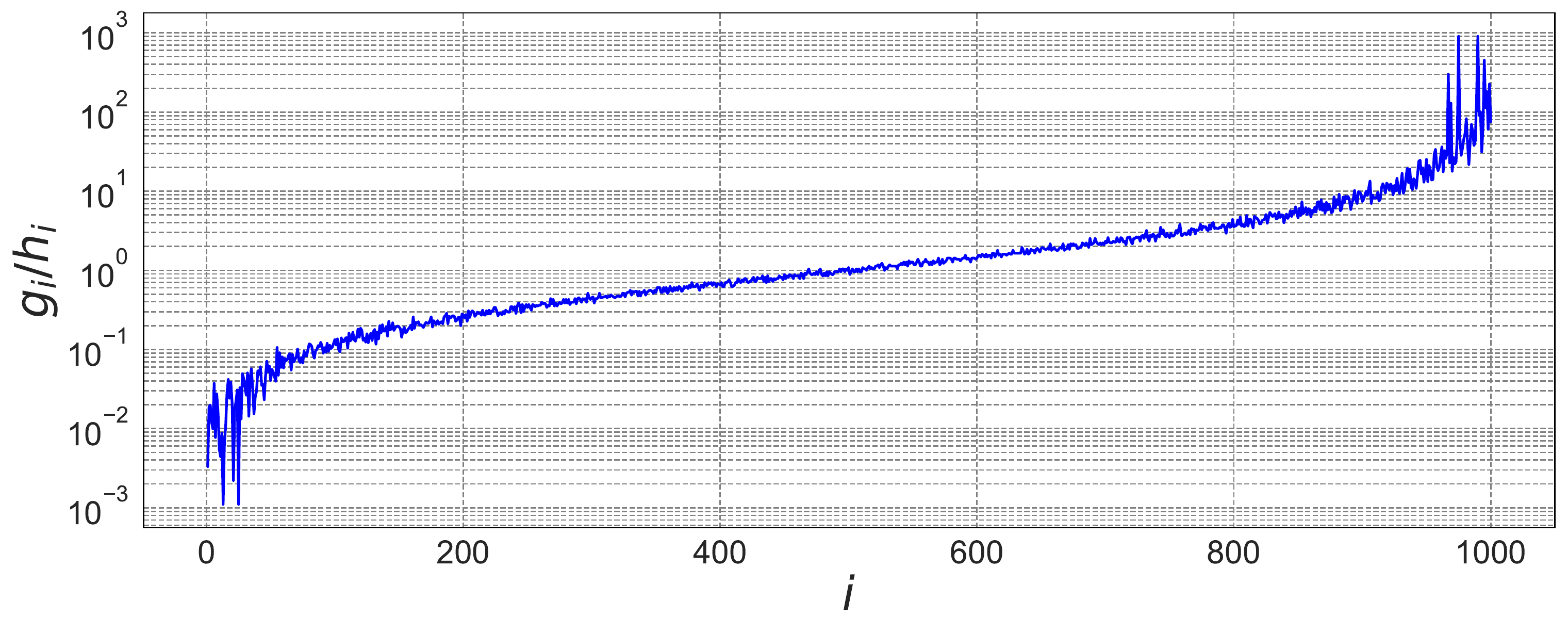}
    }
    
    \subfigure[$10^6$ swaps]{
        \label{fig: Arti_1000000_0_pos_neg_ratio}
        \includegraphics[width=0.47\columnwidth]{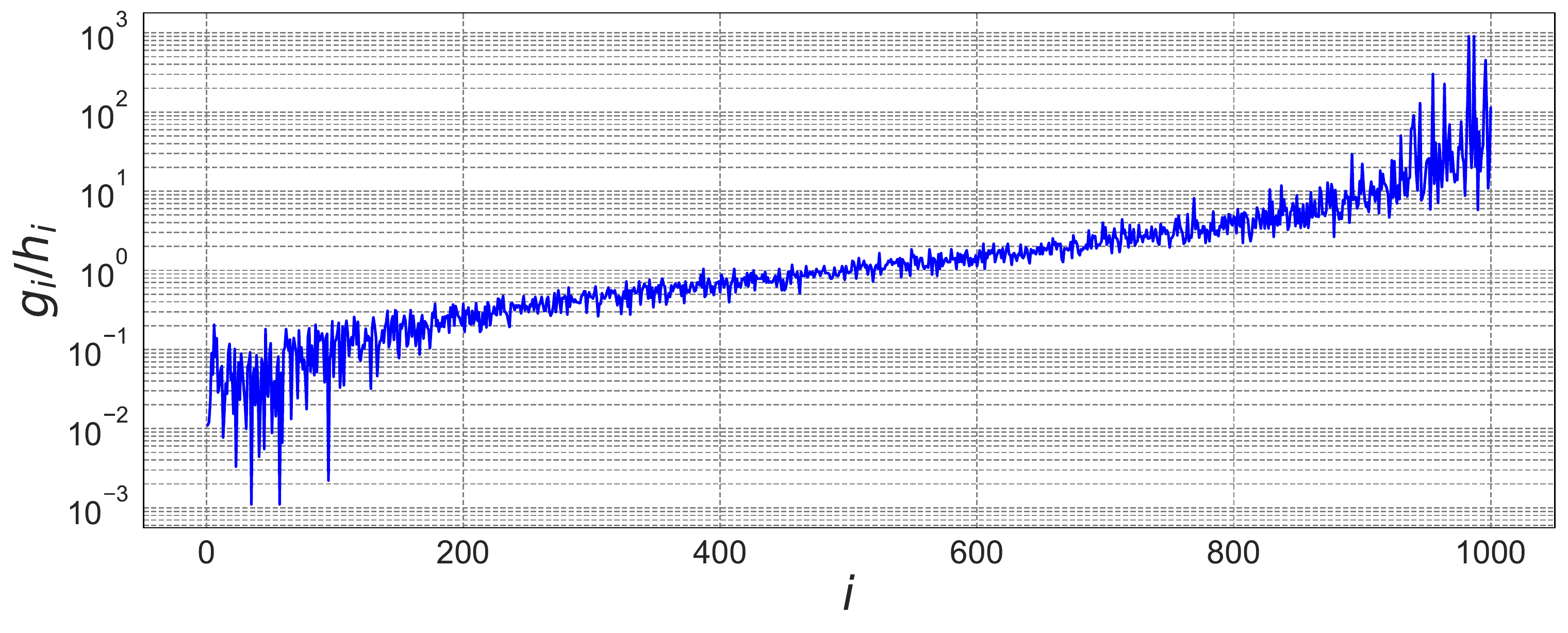}
    }
    \subfigure[$10^7$ swaps]{
        \label{fig: Arti_10000000_0_pos_neg_ratio}
        \includegraphics[width=0.47\columnwidth]{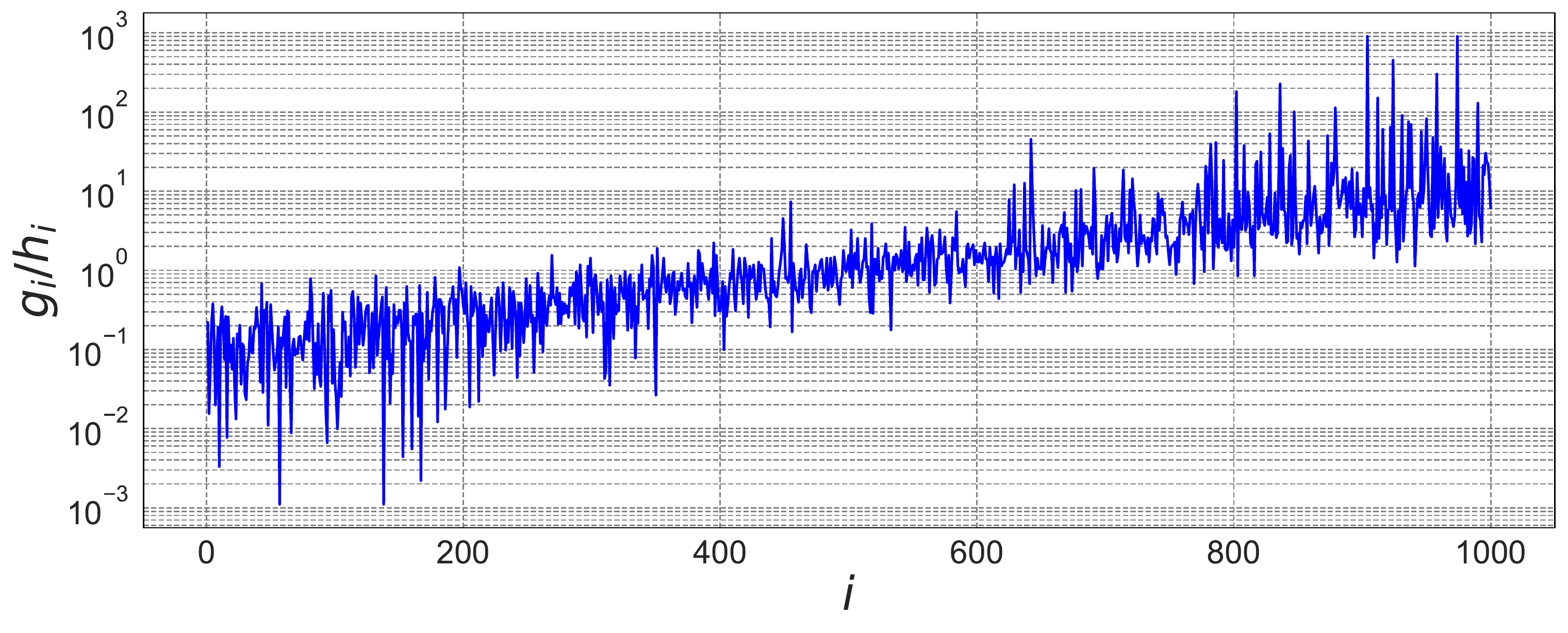}
    }
    
    \subfigure[$10^8$ swaps]{
        \label{fig: Arti_100000000_0_pos_neg_ratio}
        \includegraphics[width=0.47\columnwidth]{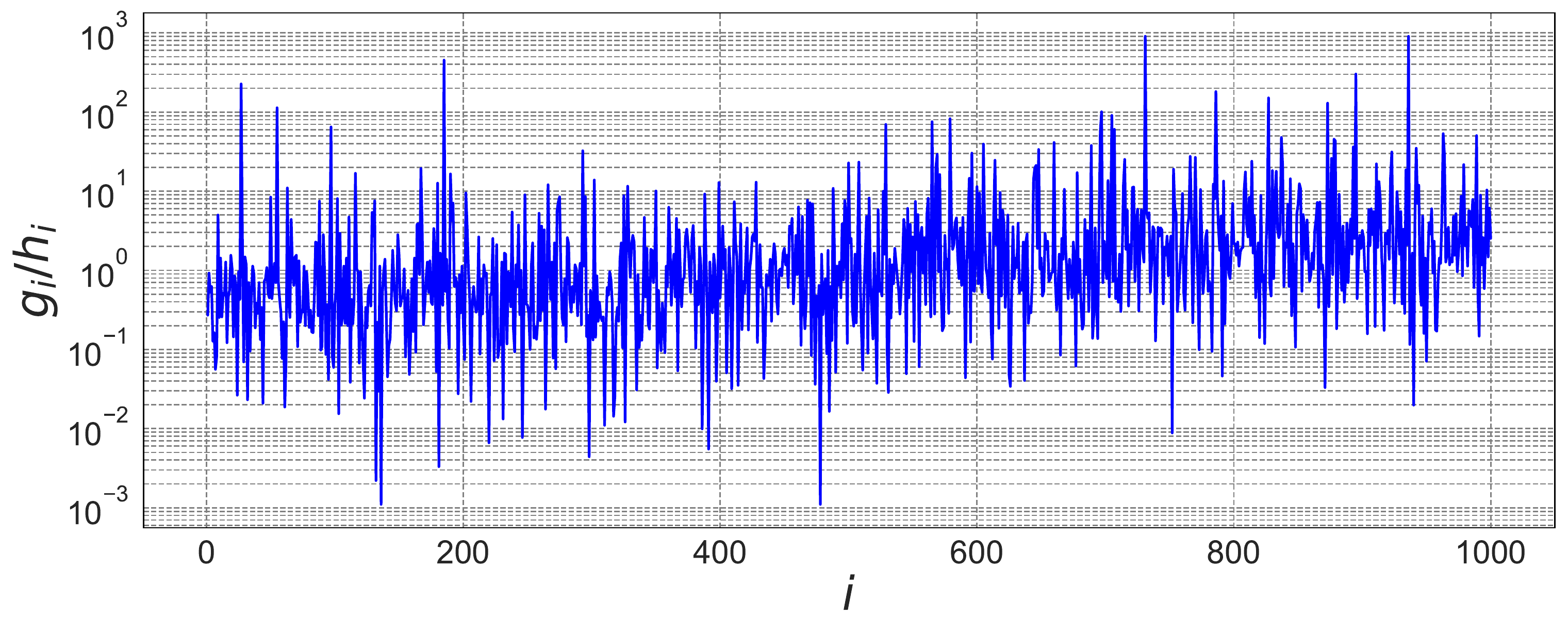}
    }
    
    \caption{Ratio of key to non-key of artificial datasets. All figures are with the seed set to 0.}
    \label{fig: Artificial_pos_neg_ratio}
\end{figure*}

\begin{figure*}[t]
    \centering
    \begin{minipage}{0.8\columnwidth}
        \centering
        \includegraphics[width=\columnwidth]{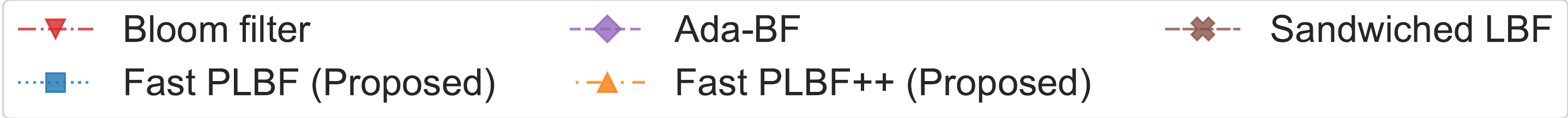}
    \end{minipage}
    
    \centering
    \subfigure[$0$ swaps (score distribution is ideally monotonic)]{
        \includegraphics[width=0.47\columnwidth]{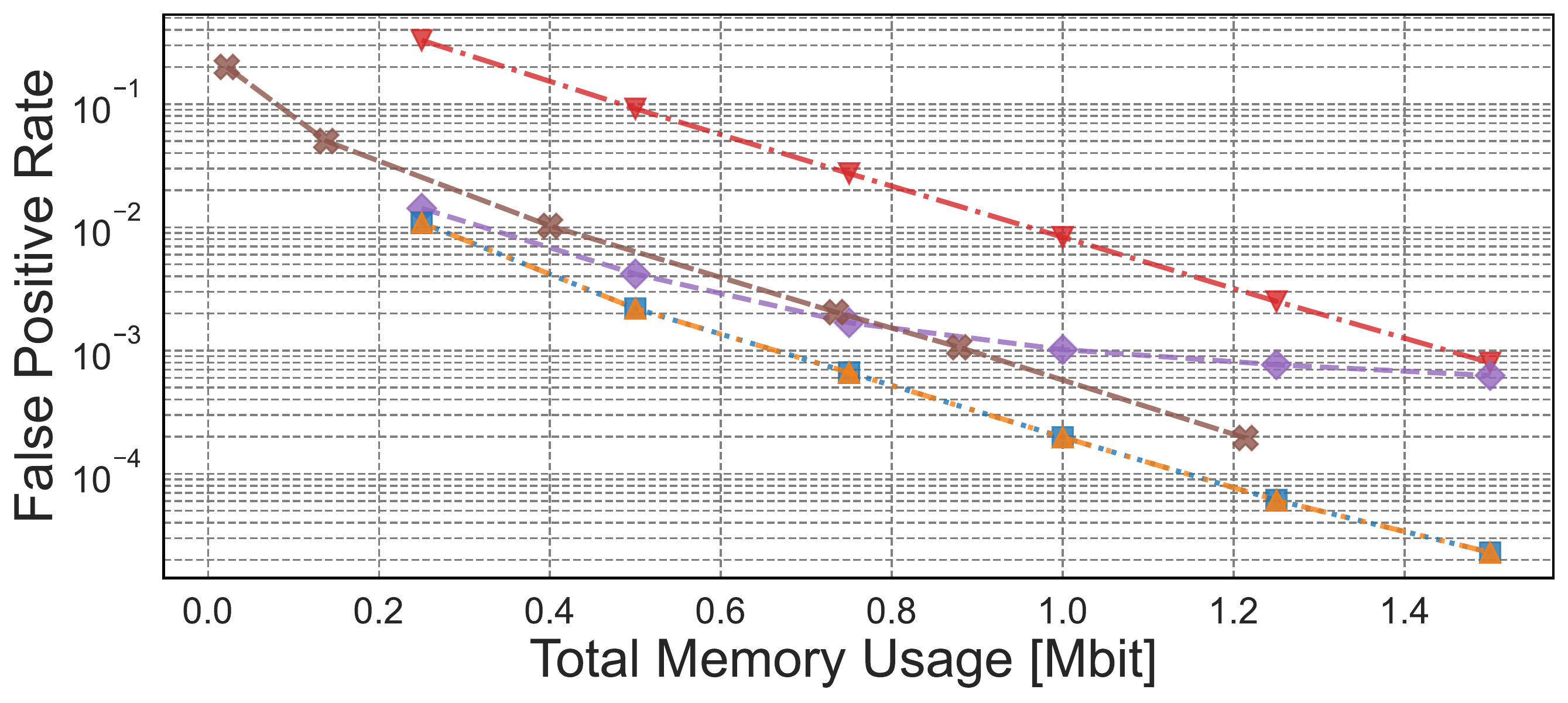}
    }
    \subfigure[$10$ swaps]{
        \includegraphics[width=0.47\columnwidth]{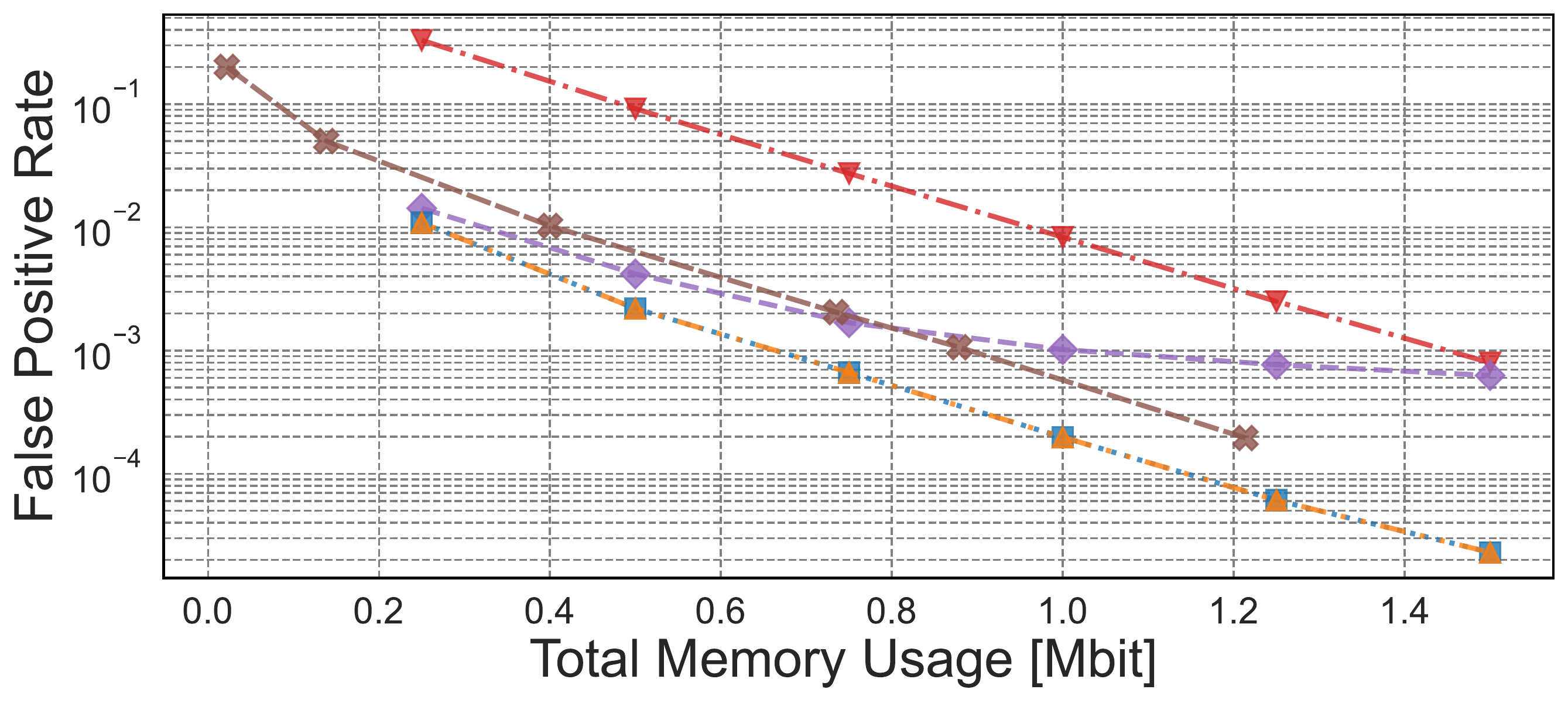}
    }
    
    \subfigure[$10^2$ swaps]{
        \includegraphics[width=0.47\columnwidth]{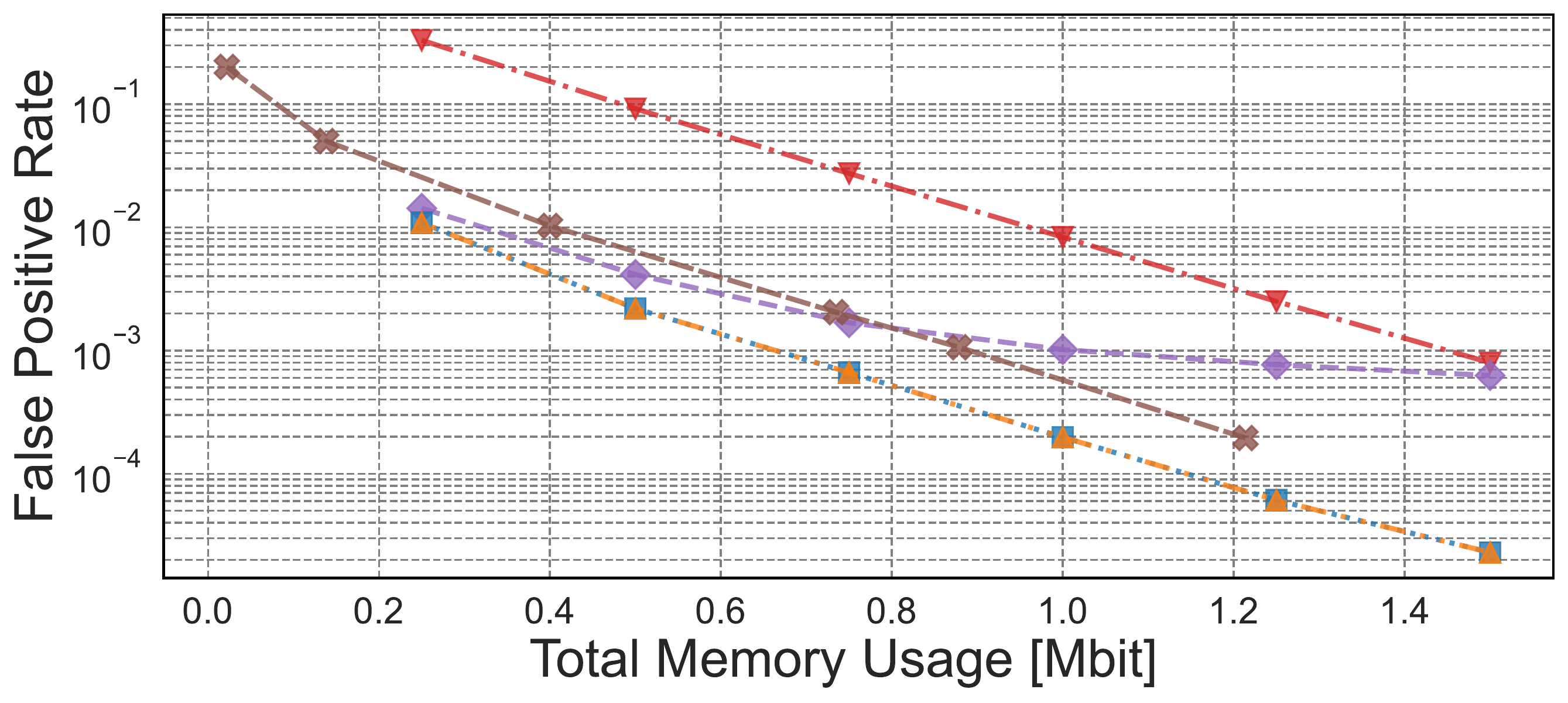}
    }
    \subfigure[$10^3$ swaps]{
        \includegraphics[width=0.47\columnwidth]{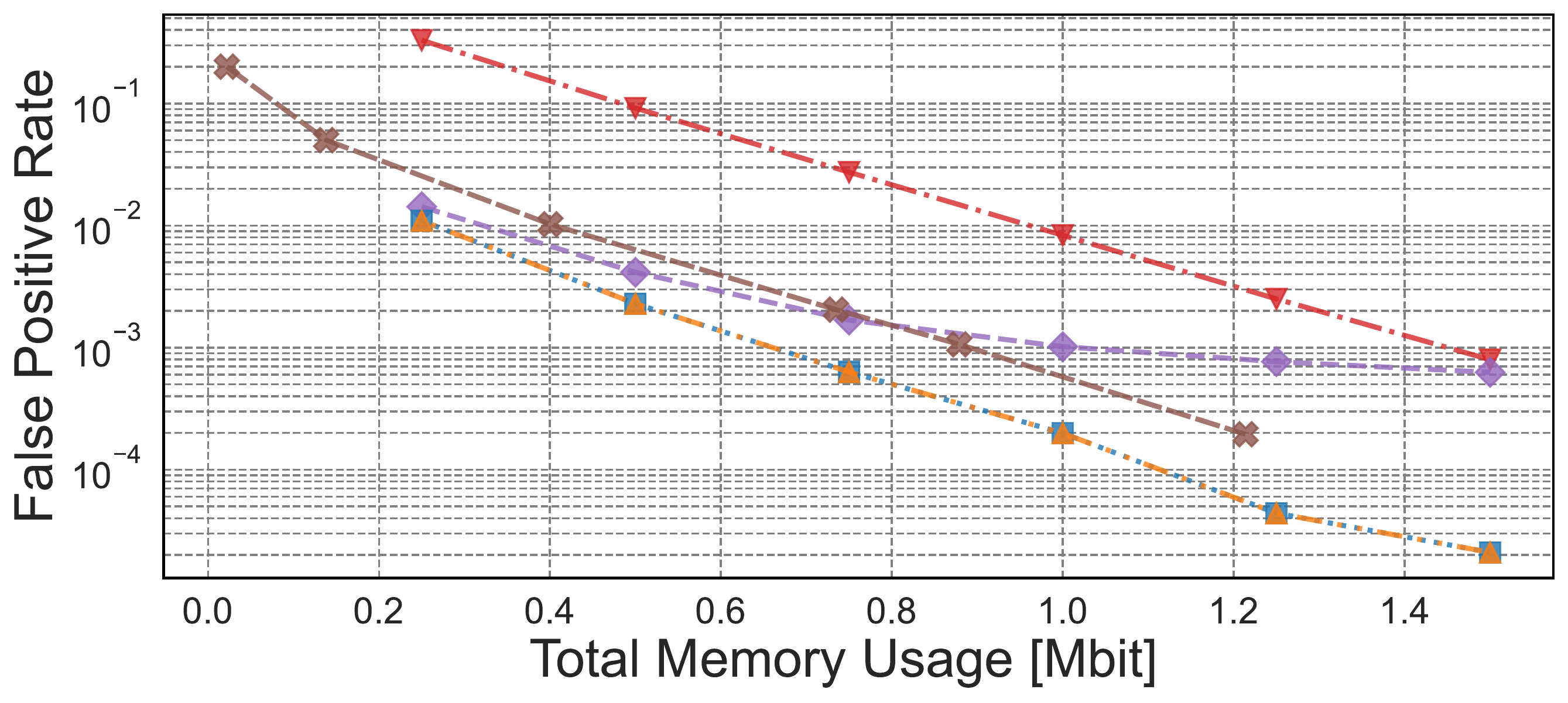}
    }
    
    \subfigure[$10^4$ swaps]{
        \includegraphics[width=0.47\columnwidth]{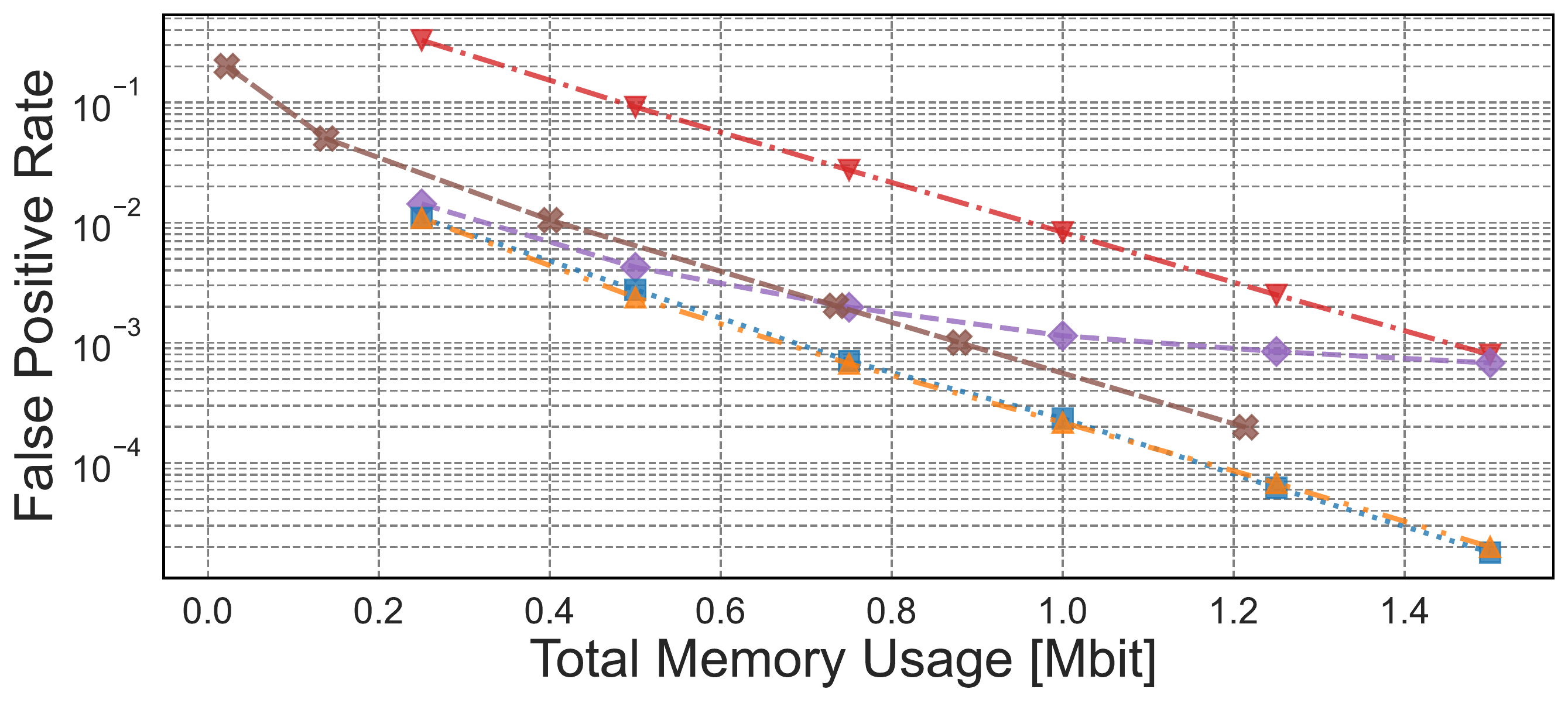}
    }
    \subfigure[$10^5$ swaps]{
        \includegraphics[width=0.47\columnwidth]{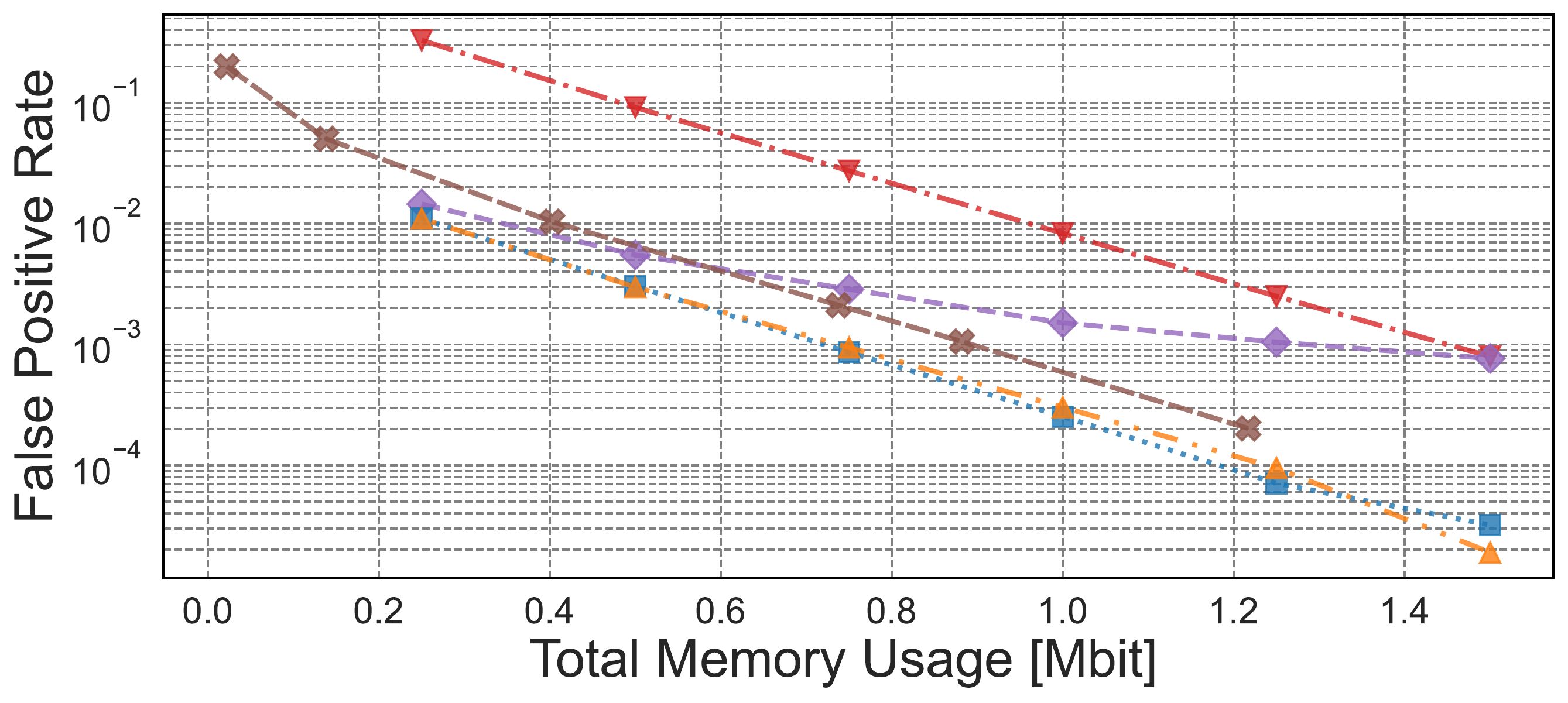}
    }
    
    \subfigure[$10^6$ swaps]{
        \includegraphics[width=0.47\columnwidth]{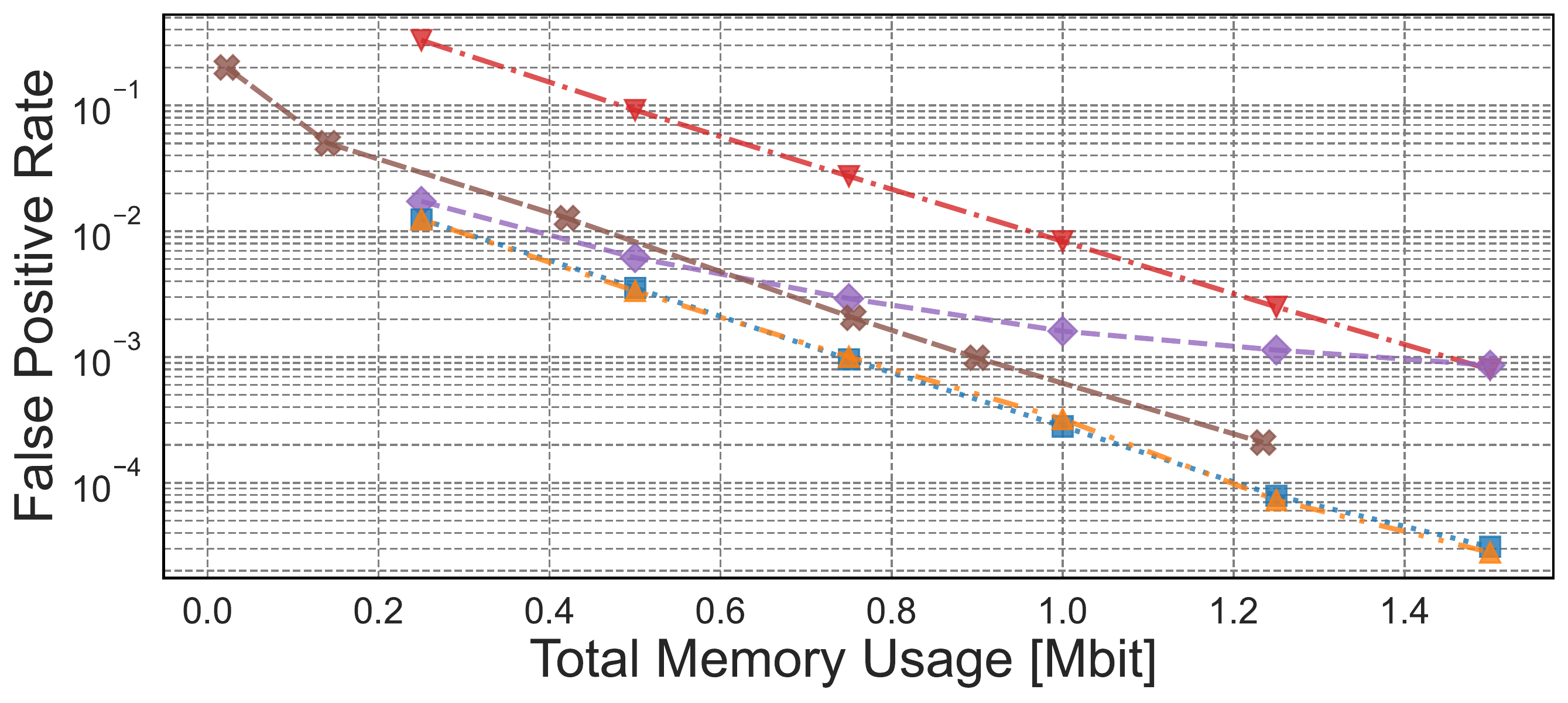}
    }
    \subfigure[$10^7$ swaps]{
        \includegraphics[width=0.47\columnwidth]{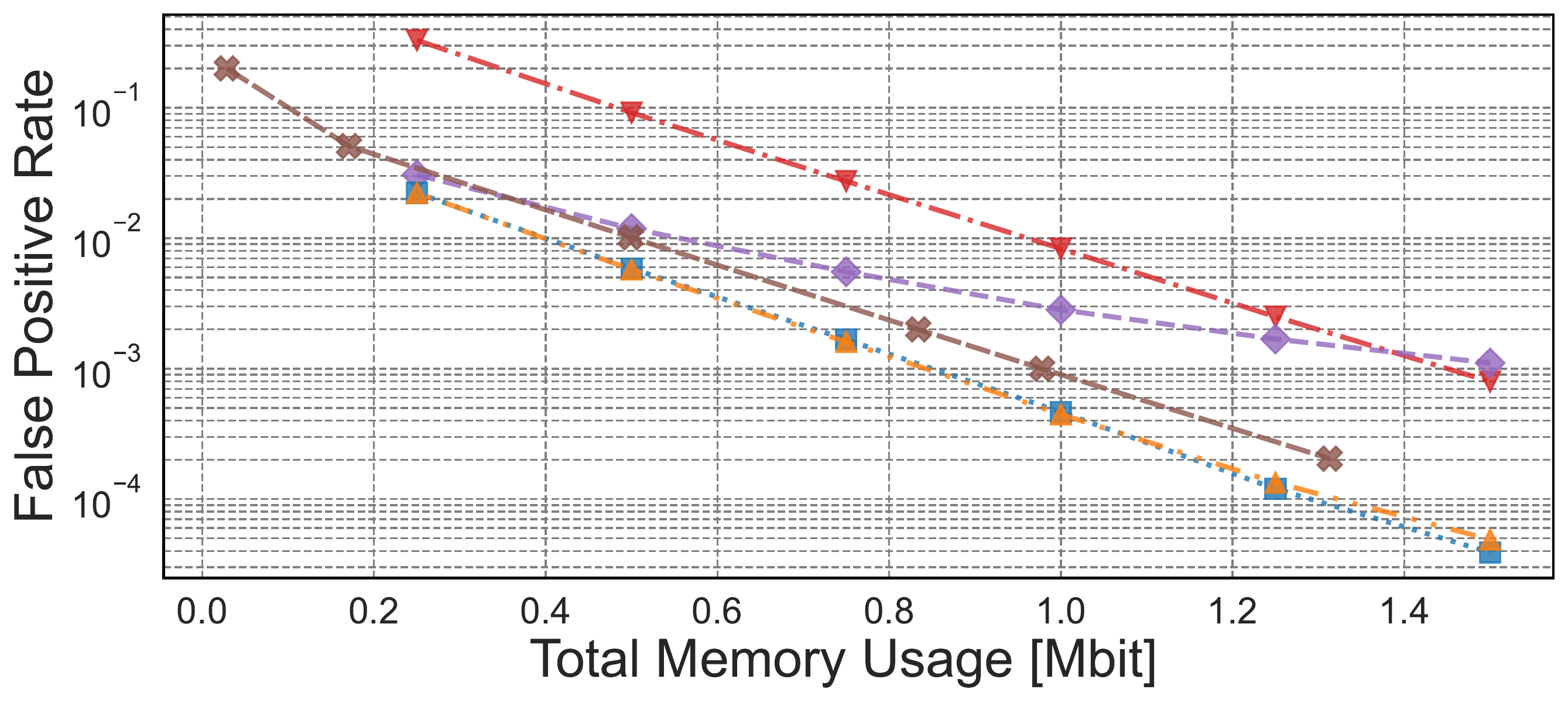}
    }
    
    \subfigure[$10^8$ swaps]{
        \includegraphics[width=0.47\columnwidth]{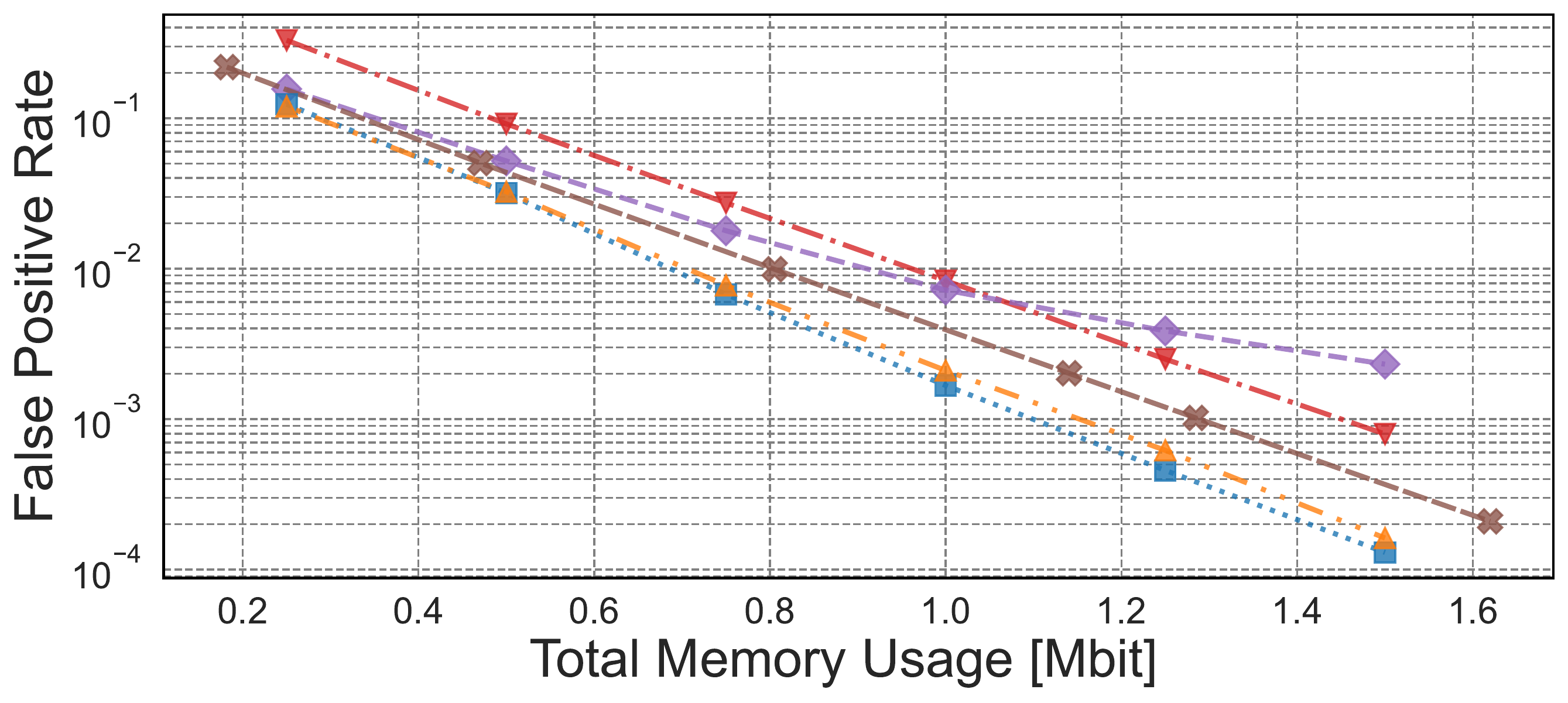}
    }
    
    \caption{Trade-off between memory usage and FPR for artificial datasets. All figures are with the seed set to 0. The hyperparameters for PLBFs are $N=1,000$ and $k=5$.}
    \label{fig: Arti_Memory_and_FPR}
\end{figure*}

\begin{figure*}[t]
    \centering
    \includegraphics[width=0.9\columnwidth]{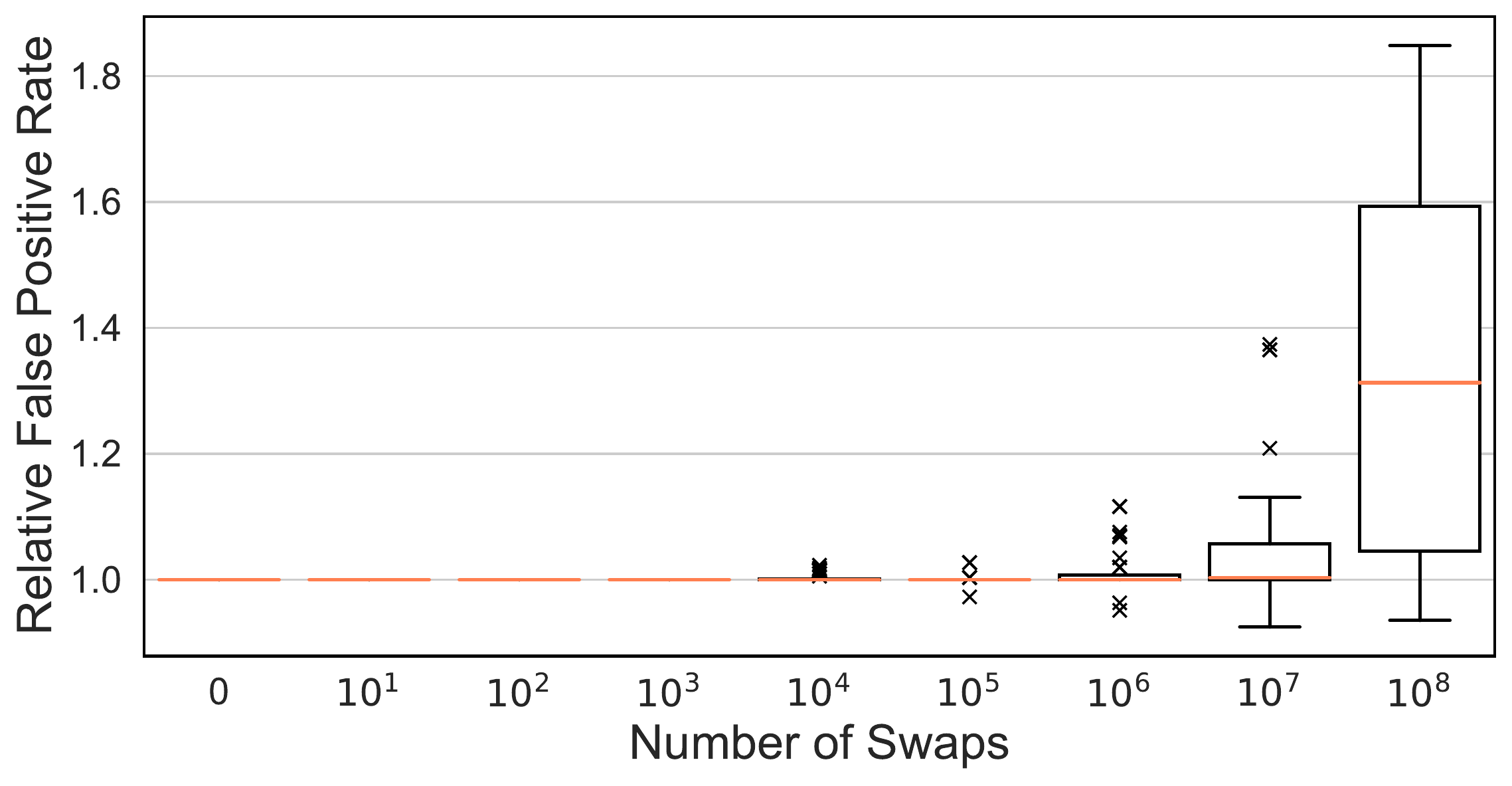}
    \caption{Distribution of the ``relative false positive rate'' of fast PLBF++ for each swap count. The ``relative false positive rate'' is the false positive rate of fast PLBF++ divided by that of PLBF constructed with the same hyperparameters and conditions.}
    \label{fig: pp_diff}
\end{figure*}

We experimented with artificial datasets.
Fast PLBF++ is faster to construct than fast PLBF.
However, fast PLBF++ is not theoretically guaranteed to be as accurate as PLBF, except when the score distribution is ``ideally monotonic.''
Therefore, we evaluated the accuracy of fast PLBF++ by creating a variety of datasets, ranging from data with monotonicity to data with little monotonicity.
The results show that the smaller the monotonicity of the score distribution, the larger the difference in accuracy between fast PLBF++ and PLBF.
We also observed that in most cases, the false positive rate for fast PLBF++ is within 1.1 times that of PLBF, but in cases where there is little monotonicity in the score distribution, the false positive rate of PLBF++ can be up to 1.85 times that of PLBF.

Here, we explain the process of creating an artificial dataset, which consists of two steps.
First, as in the original PLBF paper \cite{vaidya2021partitioned}, the key and non-key score distribution is generated using the Zipfian distribution.
\cref{fig: Arti_0_0_log_hist} shows a histogram of the distribution of key and non-key scores at this time, and \cref{fig: Arti_0_0_pos_neg_ratio} shows $g_i/h_i ~ (i=1 \dots N)$ when $N=1,000$.
This score distribution is ideally monotonic.
Next, we perform \textit{swaps} to add non-monotonicity to the score distribution.
\textit{Swap} refers to changing the scores of the elements in the two adjacent segments so that the number of keys and non-keys in the two segments are swapped.
Namely, an integer $i$ is randomly selected from $\{1, 2, \dots, N-1 \}$, and the scores of elements in the $i$-th segment are changed so that they are included in the $(i+1)$-th segment, and the scores of elements in the $(i+1)$-th segment are changed to include them in the $i$-th segment.
\cref{fig: Arti_10_0_log_hist,fig: Arti_100_0_log_hist,fig: Arti_1000_0_log_hist,fig: Arti_10000_0_log_hist,fig: Arti_100000_0_log_hist,fig: Arti_1000000_0_log_hist,fig: Arti_10000000_0_log_hist,fig: Arti_100000000_0_log_hist} and \cref{fig: Arti_10_0_pos_neg_ratio,fig: Arti_100_0_pos_neg_ratio,fig: Arti_1000_0_pos_neg_ratio,fig: Arti_10000_0_pos_neg_ratio,fig: Arti_100000_0_pos_neg_ratio,fig: Arti_1000000_0_pos_neg_ratio,fig: Arti_10000000_0_pos_neg_ratio,fig: Arti_100000000_0_pos_neg_ratio} show the histograms of the score distribution and $g_i/h_i ~ (i=1 \dots N)$ for $10, 10^2, \dots, 10^8$ swaps, respectively.
It can be seen that as the number of swaps increases, the score distribution becomes more non-monotonic.
For each case of the number of swaps, 10 different datasets were created using 10 different seeds.

\cref{fig: Arti_Memory_and_FPR} shows the accuracy of each method for each number of swaps, with the seed set to 0.
Hyperparameters for PLBFs are set to $N=1,000$ and $k=5$.
It can be seen that fast PLBF and fast PLBF++ achieve better Pareto curves than the other methods for all datasets.
It can also be seen that the higher the number of swaps, the more often there is a difference between the accuracy of fast PLBF++ and fast PLBF.

\cref{fig: pp_diff} shows the difference in accuracy between fast PLBF++ and fast PLBF for each swap count.
Here, the ``relative false positive rate'' is the false positive rate of fast PLBF++ divided by that of PLBF constructed with the same hyperparameters and conditions (note that fast PLBF constructs the same data structure as PLBF, so the false positive rate of fast PLBF is the same as that of PLBF).
We created 10 different datasets for each swap count and conducted 6 experiments for each dataset and method with a memory usage of 0.25Mb, 0.5Mb, 0.75Mb, 1.0Mb, 1.25Mb, and 1.5Mb.
Namely, we compared the false positive rates of fast PLBF and fast PLBF++ under 60 conditions for each swap count.
The result shows that fast PLBF++ consistently achieves the same accuracy as PLBF in a total of $240$ experiments where the number of swaps is $10^3$ or less.
It also shows that in the experiments where the number of swaps is $10^7$ or less, the false positive rate of fast PLBF++ is less than 1.1 times that for PLBF, except for 14 cases.
However, in the cases of $10^8$ swap counts (where there is almost no monotonicity in the score distribution), the false positive rate for fast PLBF++ is up to 1.85 times that for PLBF.

\end{appendices}

\end{document}